\documentclass[11pt,letterpaper]{article}

\usepackage[utf8]{inputenc}
\usepackage[T1]{fontenc}
\usepackage[margin=1in]{geometry}
\usepackage{amsmath,amssymb,amsthm}
\usepackage{booktabs,array,tabularx}
\newcolumntype{Y}{>{\raggedright\arraybackslash}X}
\newcolumntype{P}[1]{>{\raggedright\arraybackslash}p{#1}}
\usepackage{enumitem}
\usepackage{microtype}
\usepackage[numbers,sort&compress]{natbib}
\usepackage[hidelinks]{hyperref}

\allowdisplaybreaks
\setlist{nosep}

\theoremstyle{plain}
\newtheorem{theorem}{Theorem}[section]
\newtheorem{proposition}[theorem]{Proposition}
\newtheorem{lemma}[theorem]{Lemma}
\newtheorem{corollary}[theorem]{Corollary}
\theoremstyle{definition}
\newtheorem{definition}[theorem]{Definition}
\newtheorem{remark}[theorem]{Remark}
\newtheorem{example}[theorem]{Example}
\newtheorem{openproblem}[theorem]{Open Problem}

\newcommand{\F}{\mathbb{F}_2}
\newcommand{\E}{\mathbb{E}}
\newcommand{\Prob}{\mathbb{P}}
\newcommand{\cC}{\mathcal{C}}
\newcommand{\cE}{\mathcal{E}}
\newcommand{\cR}{\mathcal{R}}
\newcommand{\cW}{\mathcal{W}}
\newcommand{\cS}{\mathcal{S}}
\newcommand{\supp}{\operatorname{supp}}
\newcommand{\dist}{\operatorname{dist}}
\newcommand{\rank}{\operatorname{rank}}
\newcommand{\im}{\operatorname{im}}
\newcommand{\Shat}{\operatorname{Shat}}
\newcommand{\1}{\mathbf{1}}
\newcommand{\abs}[1]{\left|#1\right|}
\newcommand{\norm}[1]{\left\lVert#1\right\rVert}
\newcommand{\set}[1]{\left\{#1\right\}}
\newcommand{\ceil}[1]{\left\lceil#1\right\rceil}
\newcommand{\floor}[1]{\left\lfloor#1\right\rfloor}
\newcommand{\qbinom}[2]{\genfrac{[}{]}{0pt}{}{#1}{#2}_{2}}

\hypersetup{
  pdftitle={Cofilling Shattering: A Syndrome-Support Hierarchy for Check Erasures},
  pdfauthor={Joshua Steier},
  pdfsubject={Map-coupled syndrome support, generalized covering radii, check erasures, and top-face erasures},
  pdfkeywords={generalized covering radii, generalized Hamming weights, parity-check matrices, syndrome spaces, support matroids, simplicial cohomology}
}

\title{Cofilling Shattering: A Syndrome-Support Hierarchy for Check Erasures}
\author{Joshua Steier\\[-1mm]
\small Independent Researcher\\[-1mm]}
\date{\today}

\begin{document}
\pagestyle{myheadings}
\markboth{J. Steier}{Cofilling Shattering}
\maketitle

\begin{abstract}
Let $A:\F^n\to\F^m$ be a binary linear map with fixed coordinate bases,
let $C_A=\ker A$, and let $\lambda_A(y)$ be the minimum Hamming weight of a
preimage of the syndrome $y$.  We define $\Shat_{q,s}(A)$ as the least common
check support of a $q$-dimensional syndrome subspace whose every nonzero
element has coset-leader weight at least $s$.  It therefore distinguishes
release of $q$ independent syndromes from release of a subspace with no
easy linear combination.  Deleting check coordinates $F$ releases
$\ker A_{\bar F}/\ker A$, canonically isomorphic to $(\im A)[F]$.

Finiteness implies
\[
 R_q(C_A)\ge \mathsf N_2(q,s),
\]
where $\mathsf N_2(q,s)$ is the shortest length of a binary code of dimension
$q$ and distance at least $s$; profile--Griesmer bounds independently control
common check support.  The hierarchy is coordinate-relabeling invariant but
can change under a change of check basis.  For the pair-repetition code
$C_n=\{(x,x):x\in\F^n\}$, the standard realization $H_0=[I_n\ I_n]$ has
$\Shat_{q,s}(H_0)=\mathsf N_2(q,s)$ whenever feasible.  For every $q\ge1$
and $s\ge2$, with $n=\mathsf N_2(q,s)$, a row-equivalent realization of the
same code has value $q$.

For a simplicial coboundary map $A=\delta_k$, check erasure is top-face
erasure and the released quotient is emergent cohomology.  At $s=1$ the
hierarchy reduces to generalized Hamming weights and is Tutte-determined;
for $s\ge2$, even identical labeled cut codes can have different values.
\end{abstract}

\medskip
\noindent\textbf{Keywords:} generalized covering radii; generalized Hamming weights;
parity-check matrices; syndrome spaces; support matroids; simplicial cohomology.

\smallskip
\noindent\textbf{Mathematics Subject Classification (2020):}
Primary 94B05; Secondary 05B35, 05E45, 05C40, 55U10.

\section{Introduction}\label{sec:introduction}

A parity-check matrix contains more information than its kernel code.  Its
rows specify the check generators that are physically or combinatorially
available, and an erasure of one row is not the same operation after an
arbitrary change of row basis.  This distinction is familiar from
matrix-dependent stopping-set phenomena \cite{schwartzvardy2006}.  Here it
leads to a different question: how many check coordinates must be erased or
activated before one releases a syndrome subspace whose every nonzero
combination is difficult to realize by a low-weight error?

Let $A:\F^n\to\F^m$ be a binary linear map with fixed domain and codomain
coordinates.  Write $C_A=\ker A$ and $\cS_A=\im A$.  For $y\in\cS_A$, let
\[
 \lambda_A(y)=\min\{\abs{x}:Ax=y\}.
\]
This is the coset-leader weight of $y$ for $C_A$.  We study
\[
 \Shat_{q,s}(A)=
 \min\left\{\abs{\supp U}:
 \begin{array}{l}
 U\le \cS_A,\ \dim U=q,\\
 \lambda_A(y)\ge s\text{ for every }0\ne y\in U
 \end{array}\right\}.
\]
The objective is measured in check coordinates, whereas localization is
measured in the quotient metric on variable coordinates.  Generalized
Hamming weights control the first coordinate after the syndrome space has
been chosen; generalized covering radii control how requested syndromes are
generated in the second.  Cofilling shattering couples the two.

The distinction is operational.  A released $q$-space can have a basis of
high-weight representatives and still contain a low-weight linear
combination.  Requiring every nonzero syndrome to have coset-leader weight at
least $s$ rules out that cancellation.  Conversely, row-equivalent
parity-check matrices can define the same code and have the same generalized
covering radii while exhibiting different vulnerability to erasure of the
listed check generators.

The motivating structured realization is the last coboundary map of a
simplicial complex.  If $X$ has dimension $k+1$ and
$F\subseteq X(k+1)$ is deleted, the surviving map is obtained from
$\delta_k$ by deleting the rows indexed by $F$.  We prove
\[
 H^k(X-F;\F)/H^k(X;\F)\cong \im\delta_k[F],
\]
so top-face erasure releases a shortened syndrome subspace.  The rank-only
threshold is therefore a generalized Hamming weight.  The threshold $s$
asks for more: every nonzero emergent class must remain far from the old
cocycle space.

\paragraph{Main results.}
\begin{enumerate}[label=\textbf{R\arabic*.},leftmargin=*,itemsep=3pt]
\item \textbf{An exact check-erasure dictionary.}
Deleting check coordinates $F$ gives a canonical isomorphism
$\ker A_{\bar F}/\ker A\cong(\im A)[F]$.  The hierarchy is defined for
arbitrary binary maps and is invariant under independent permutations of
variable and check coordinates.  Arbitrary row operations are not declared
equivalent because they replace the listed check generators by linear
combinations.  At $s=1$, $\Shat_{q,1}(A)=d_q(\im A)$.

\item \textbf{Covering-radius and support obstructions.}
If $C_A=\ker A$ and $\Shat_{q,s}(A)<\infty$, then
\[
 R_q(C_A)\ge \mathsf N_2(q,s)\ge \mathsf G_q(s),
\]
and $R_1(C_A)\ge\ceil{\mathsf N_2(q,s)/q}$.  Independently,
\[
 \Shat_{q,s}(A)\ge
 \max\left\{d_q(\im A),\mathsf G_q(\Sigma_A(s))\right\}.
\]

\item \textbf{Sharp dependence on the check basis.}
Let $C_n=\{(x,x):x\in\F^n\}$ and $H_0=[I_n\ I_n]$.  Then
$\Shat_{q,s}(H_0)=\mathsf N_2(q,s)$ whenever feasible.  For every
$q\ge1$ and $s\ge2$, set $n=\mathsf N_2(q,s)$.  There is an invertible
$T$ such that $H_1=TH_0$ defines the same code but
\[
 \Shat_{q,s}(H_1)=q<n=\Shat_{q,s}(H_0).
\]
For $q\ge2$ this is a genuinely collective separation.

\item \textbf{Topological and graph realizations.}
For $A=\delta_k$, emergent cohomology is a shortened coboundary code, Betti
increments are dual-matroid nullities, and the rank-only boundary is
Tutte-determined.  Profile bounds, random-erasure identities, exact simplex
and complete-graph families, and bounded-degree expansion estimates follow.
For every $s\ge2$, connected graphs with identical labeled cut spaces can
still have different $\Shat_{1,s}$ values.
\end{enumerate}

\paragraph{Relation to adjacent work.}
Table~\ref{tab:related-work} separates the data optimized by the closest
coding and combinatorial frameworks.

\begin{table}[tbp]
\centering
\caption{Comparison with adjacent frameworks.}
\label{tab:related-work}
\scriptsize
\renewcommand{\arraystretch}{1.08}
\begin{tabularx}{\textwidth}{@{}P{0.235\textwidth}YY@{}}
\toprule
\textbf{Framework} & \textbf{Recorded quantity} & \textbf{Distinction here}\\
\midrule
Generalized weights, support matroids, and support distributions
\cite{greene1976,klove1992,barg1997,britz2007,wei1991,forney1994,simonis1994,
kurihara2012,zhuang2014,luo2010,luoliu2023,wen2025}
& Support, effective length, or rank data of subcodes and nested-code
quotients.
& Distance is imposed in the domain quotient, while support is charged after
a fixed map in check coordinates.\\

Code distances and test-family generalized weights
\cite{campsmoreno2025,digiusto2025}
& Minimum distances of subcodes or supercodes, or intersection dimensions
against a selected family of test spaces.
& These are invariants of a code with its ambient metric or test family;
$\Shat_{q,s}(A)$ may change under row-equivalent check realizations of the
same code.\\

Generalized covering radii and covering dimension
\cite{elimelech2021,alfarano2026,britzrutherford2005,britzshiromoto2016}
& The number of parity-check columns needed to span prescribed syndromes and
related finite-geometric covering data.
& A feasible hard syndrome subspace forces
$R_q(\ker A)\ge\mathsf N_2(q,s)$, while the objective separately minimizes
its common check support.\\

Stopping sets and stopping redundancy \cite{schwartzvardy2006}
& Matrix-dependent vulnerability of iterative decoding to erasure of variable
coordinates.
& The present erasure acts on listed check generators and constrains every
nonzero released syndrome.\\

Cofilling, graph cuts, and high-dimensional expansion
\cite{gromov2010,dotterrer-kahle2012,matousekwagner2014,kaufmantessler2021,
oppenheimvalentiner2025,fiedler1973,gareyjohnsonstockmeyer1976}
& Minimum fillings, one-defect expansion, or balanced cut costs.
& One erased check set must support a prescribed-dimensional subspace whose
all nonzero combinations satisfy the same localization threshold.\\
\bottomrule
\end{tabularx}
\end{table}

The code distances of Camps-Moreno, Gorla, and L\'opez optimize minimum
distance over subcodes and supercodes of a fixed code \cite{campsmoreno2025}.
The test-family framework of Di Giusto, Gorla, and Ravagnani defines
generalized weights from intersections with selected subspaces of one ambient
metric space \cite{digiusto2025}.  Cofilling shattering is different in kind:
the admissible subspace lies in $\im A$, its distance is pulled back from
$\F^n/\ker A$, and its cost is Hamming support in $\F^m$.  It can be encoded
in a larger object containing the graph of $A$, but then the map and both
coordinate systems are part of the data.  The check-basis theorem in
Section~\ref{sec:examples} makes this dependence unavoidable.

Generalized covering radii are the closest one-space obstruction.  For a
full-rank parity-check matrix $H$, Elimelech, Firer, and Schwartz define
$R_t$ through the smallest column set spanning prescribed syndromes
\cite[Definition~1]{elimelech2021}; Alfarano, Marino, Neri, and Trombetti give
a finite-geometric formulation \cite{alfarano2026}.  Our conditional bound
$R_q(\ker A)\ge\mathsf N_2(q,s)$ does not resolve the proposed unconditional
packing--covering inequality from EFS Section~VI.  It uses the extra witness
of a syndrome subspace whose every nonzero element has coset-leader weight at
least $s$.

The term ``cofilling'' refers to the minimum-preimage coordinate.
``Shattering'' refers to loss of a common set of check generators and is
unrelated to VC dimension.  Sections~\ref{sec:preliminaries}--
\ref{sec:shattering} develop the map formalism and simplicial specialization.
The remaining sections treat graph, weighted, random, and extremal
consequences; the appendices contain the longer calculations.

\section{Check realizations and simplicial specialization}\label{sec:preliminaries}

The coordinate bases are part of a check realization: domain coordinates label
variables or lower-dimensional faces, and codomain coordinates label the checks that
may be erased.

\begin{definition}[Check realization and syndrome localization]
\label{def:check-realization}
Let $A:\F^n\to\F^m$ be a binary linear map with fixed coordinate bases.  Put
$C_A=\ker A$ and $\cS_A=\im A$.  For $y\in\cS_A$, define
\begin{equation}\label{eq:map-lambda}
 \lambda_A(y)=\min\set{\abs{x}:x\in\F^n,\ Ax=y},
 \qquad \lambda_A(0)=0.
\end{equation}
If $Ax=y$, then $\lambda_A(y)=\dist(x,C_A)$.
\end{definition}

For $U\le\F^m$, write $\supp U=\bigcup_{u\in U}\supp(u)$.  Define
\begin{equation}\label{eq:map-profiles}
 \Phi_A(s)=\min\set{\abs y:0\ne y\in\cS_A,\ \lambda_A(y)=s},
 \qquad
 \Sigma_A(s)=\min_{t\ge s}\Phi_A(t),
\end{equation}
with the minimum of an empty set equal to $+\infty$.

\begin{definition}[Map-coupled cofilling shattering]\label{def:map-shat}
For integers $q,s\ge1$, define
\begin{equation}\label{eq:map-shat}
 \Shat_{q,s}(A)=
 \min\set{\abs{\supp U}:U\le\cS_A,\ \dim U=q,\
 \lambda_A(y)\ge s\ \text{for every }0\ne y\in U},
\end{equation}
with value $+\infty$ if no such subspace exists.
\end{definition}

\begin{definition}[Released quotient under check erasure]
\label{def:released-quotient}
For $F\subseteq[m]$, let $\pi_{\bar F}:\F^m\to\F^{[m]\setminus F}$ be
coordinate projection, put $A_{\bar F}=\pi_{\bar F}A$, and define
\[
 \cR_F(A)=\ker A_{\bar F}/\ker A,
 \qquad
 \cS_A[F]=\set{y\in\cS_A:\supp(y)\subseteq F}.
\]
The quotient metric on $\cR_F(A)$ is
$\overline\lambda_A(x+C_A)=\dist(x,C_A)$.
\end{definition}

\begin{theorem}[Exact check-erasure dictionary]
\label{thm:general-erasure-dictionary}
For every $F\subseteq[m]$, the map $x+C_A\mapsto Ax$ induces a canonical
isometric vector-space isomorphism
\begin{equation}\label{eq:general-erasure-dictionary}
 \cR_F(A)\cong\cS_A[F],
 \qquad
 \overline\lambda_A(x+C_A)=\lambda_A(Ax).
\end{equation}
Consequently, $\Shat_{q,s}(A)$ is the least $|F|$ for which the released
quotient $\cR_F(A)$ contains a $q$-dimensional subspace whose every nonzero
class has quotient weight at least $s$.
\end{theorem}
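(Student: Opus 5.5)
The plan is to verify directly that $\phi:x+C_A\mapsto Ax$ is well defined and injective on $\cR_F(A)$, has image exactly $\cS_A[F]$, and preserves the quotient weight; then to read off the reformulation of $\Shat_{q,s}(A)$ from this isomorphism. Every step is elementary linear algebra over $\F$.

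\textbf{Well-definedness and injectivity.} First, $\ker A\le\ker A_{\bar F}$ because $A_{\bar F}=\pi_{\bar F}A$. So $\cR_F(A)$ makes sense, and $x\mapsto Ax$ factors through the quotient by $C_A=\ker A$. The induced map $\phi$ is linear. It is injective, since $Ax=0$ means exactly $x\in C_A$.

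\textbf{Image.} If $x\in\ker A_{\bar F}$, then $\pi_{\bar F}(Ax)=0$, so $\supp(Ax)\subseteq F$. Also $Ax\in\cS_A$, hence $Ax\in\cS_A[F]$. Conversely, any $y\in\cS_A[F]$ has the form $y=Ax$, and $\supp(y)\subseteq F$ gives $\pi_{\bar F}y=0$, i.e.\ $x\in\ker A_{\bar F}$. So $\phi$ is onto $\cS_A[F]$. It is canonical because no choices were made.

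\textbf{Isometry.} For $x\in\ker A_{\bar F}$, the preimages of $Ax$ are exactly the coset $x+C_A$. Therefore
\[
\lambda_A(Ax)=\min_{c\in C_A}|x+c|=\dist(x,C_A)=\overline\lambda_A(x+C_A),
\]
which is the remark after Definition~\ref{def:check-realization}.

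\textbf{Consequence for $\Shat_{q,s}(A)$.} Write $\mathcal{F}$ for the set of $F$ such that $\cR_F(A)$ contains a $q$-dimensional subspace whose nonzero classes all have quotient weight $\ge s$. We show $\Shat_{q,s}(A)=\min\{|F|:F\in\mathcal{F}\}$, with both sides $+\infty$ when no admissible object exists.

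Take an admissible $U$ in Definition~\ref{def:map-shat} and set $F=\supp U$. Then $U\le\cS_A[F]$, and $\phi^{-1}(U)$ is a $q$-dimensional subspace of $\cR_F(A)$ with all nonzero quotient weights $\ge s$, by the isometry. This gives $\min\ge\Shat_{q,s}(A)$ is not needed here; it gives $\min_{F\in\mathcal{F}}|F|\le\Shat_{q,s}(A)$.

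Conversely, take $F\in\mathcal{F}$ with such a subspace $W$, and put $U=\phi(W)\le\cS_A[F]$. Then $\dim U=q$ and $\lambda_A\ge s$ on $U\setminus\{0\}$, so $U$ is admissible. Moreover $\supp U\subseteq F$, hence $\Shat_{q,s}(A)\le|\supp U|\le|F|$.

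The only point needing care is the monotonicity in this last inclusion: the minimizing $F$ may strictly contain $\supp U$, and that is harmless.
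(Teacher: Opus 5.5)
Your proof is correct and follows the paper's route. The paper applies the first isomorphism theorem to the restriction $A:\ker A_{\bar F}\to\cS_A[F]$, which is what your well-definedness, injectivity, and image steps do. It reads the metric identity off the fact that the preimages of $Ax$ are exactly the coset $x+C_A$, and it gets the $\Shat_{q,s}$ reformulation from ``$U\le\cS_A[F]$ iff $\supp U\subseteq F$,'' which your two inequalities simply spell out. One sentence is garbled (``This gives $\min\ge\Shat_{q,s}(A)$ is not needed here''); it should just say that this direction gives $\min_{F\in\mathcal{F}}|F|\le\Shat_{q,s}(A)$.
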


\begin{proof}
A vector $x$ lies in $\ker A_{\bar F}$ exactly when $Ax$ is supported in
$F$.  Hence the restriction $A:\ker A_{\bar F}\to\cS_A[F]$ is surjective
and has kernel $C_A$.  The first isomorphism theorem gives the vector-space
isomorphism.  If $Ax=y$, then the minimum weight in the coset $x+C_A$ is
$\lambda_A(y)$, proving the metric identity.  Finally, a subspace
$U\le\cS_A$ is contained in $\cS_A[F]$ exactly when
$\supp U\subseteq F$; minimizing first over $F$ and then over $U$ gives the
last assertion.
\end{proof}

\begin{proposition}[Natural coordinate equivalence and boundary cases]
\label{prop:map-equivalence}
Let $P$ and $Q$ be permutation matrices on the codomain and domain and put
$B=PAQ$.  Then
\[
 \lambda_B(Py)=\lambda_A(y),\qquad
 \Shat_{q,s}(B)=\Shat_{q,s}(A).
\]
For $1\le q\le\rank A$,
\begin{equation}\label{eq:map-boundaries}
 \Shat_{q,1}(A)=d_q(\cS_A),
 \qquad
 \Shat_{1,s}(A)=\Sigma_A(s),
\end{equation}
and $\Shat_{q,s}(A)$ is nondecreasing in both parameters.
\end{proposition}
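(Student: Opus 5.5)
The plan is to argue directly from the definitions, in three independent parts.

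\emph{Invariance.} For $B=PAQ$, note $\cS_B=P\cS_A$ and $C_B=Q^{-1}C_A$. If $Ax=y$, then $B(Q^{-1}x)=Py$. Conversely, every $x'$ with $Bx'=Py$ has the form $x'=Q^{-1}x$ with $Ax=y$, because $P$ is invertible. Since $Q$ is a permutation, $\abs{Q^{-1}x}=\abs{x}$. Minimizing over preimages gives $\lambda_B(Py)=\lambda_A(y)$. The map $U\mapsto PU$ is then a dimension-preserving bijection between subspaces of $\cS_A$ and of $\cS_B$. It preserves the condition ``$\lambda\ge s$ on nonzero elements'' and satisfies $\supp(PU)=\pi(\supp U)$ for the underlying coordinate permutation $\pi$, so $\abs{\supp PU}=\abs{\supp U}$. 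Taking minima gives equality of $\Shat_{q,s}$, with $+\infty$ matching $+\infty$.

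\emph{Boundary cases.} At $s=1$ the condition $\lambda_A(y)\ge1$ holds for every nonzero $y\in\cS_A$: $\lambda_A(y)=0$ forces $y=A0=0$. The constraint is therefore vacuous, and the definition becomes the minimum support of a $q$-dimensional subcode of $\cS_A$. This is $d_q(\cS_A)$, which is finite for $q\le\rank A=\dim\cS_A$. At $q=1$, a one-dimensional $U=\{0,y\}$ has $\abs{\supp U}=\abs y$. So $\Shat_{1,s}(A)$ is the minimum of $\abs y$ over nonzero $y\in\cS_A$ with $\lambda_A(y)\ge s$. Splitting according to the exact value $t=\lambda_A(y)\ge s$ gives $\min_{t\ge s}\Phi_A(t)=\Sigma_A(s)$, with the empty-set convention matching the value $+\infty$.

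\emph{Monotonicity.} In $s$ this is immediate: raising $s$ shrinks the feasible family, so the minimum can only increase. In $q$, I take an optimal feasible $U$ of dimension $q+1$ (if none exists the value is $+\infty$ and there is nothing to prove). Any $q$-dimensional subspace $U'\le U$ is still feasible, since its nonzero elements are nonzero elements of $U$, and $\supp U'\subseteq\supp U$. Hence $\Shat_{q,s}\le\Shat_{q+1,s}$.

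The only point needing care is the $q$-monotonicity, specifically the fact that passing to a subspace keeps the hardness condition. Because that condition is imposed on every nonzero element rather than on a basis, it is inherited automatically. The remaining subtlety is bookkeeping with the $+\infty$ conventions, which I will handle by treating infeasible cases separately.
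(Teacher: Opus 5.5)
Your proposal is correct and follows the same route as the paper's proof. You get permutation invariance from $x\mapsto Qx$ and $U\mapsto PU$, the boundary formulas by unwinding the definitions, and monotonicity by shrinking the feasible family. For monotonicity in $q$, your passage to a $q$-dimensional subspace of an optimal $(q+1)$-space simply spells out what the paper compresses into ``restricting the feasible set.''
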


\begin{proof}
Permutation matrices preserve Hamming weight.  If $Bx=Py$, then $A(Qx)=y$,
so $x\mapsto Qx$ proves the localization identity.  The map $U\mapsto PU$
preserves dimension, common support, and localization.  The boundary formulas
follow directly from the definitions, and monotonicity follows by restricting
the feasible set.
\end{proof}

Arbitrary row operations are intentionally absent: they replace the listed
check generators by linear combinations and can change the charged support.
Theorem~\ref{thm:check-basis-separation} proves this for row-equivalent
parity-check matrices of the same code.

We now specialize to top-face erasure.  Let $X$ be a finite simplicial complex of dimension exactly $k+1$, and let $X(j)$ denote
its set of $j$-faces.  All chains and cochains use coefficients in $\F$; standard cohomological,
coding-theoretic, and matroid terminology follows
\cite{hatcher2002,macwilliamssloane1977,oxley2011}.  We identify a
cochain with its support, so addition is symmetric difference and the unweighted norm is
Hamming weight:
\[
 C^j(X)=\F^{X(j)},\qquad \abs{\alpha}=\abs{\supp(\alpha)}.
\]
The coboundary maps are
\[
 \delta_j:C^j(X)\longrightarrow C^{j+1}(X),
\]
and
\[
 Z^j(X)=\ker\delta_j,\qquad
 B^j(X)=\im\delta_{j-1},\qquad
 H^j(X)=Z^j(X)/B^j(X).
\]
We write $\beta_j(X)=\dim H^j(X)$.  Since the coefficient field has characteristic two,
orientations play no role.

For $F\subseteq X(k+1)$, let $X-F$ be the subcomplex obtained by deleting precisely the
top faces in $F$ and retaining every face of dimension at most $k$.  Thus
\[
 C^j(X-F)=C^j(X)\quad (j\le k),
 \qquad
 C^{k+1}(X-F)=\F^{X(k+1)\setminus F}.
\]
The map $\delta_{k-1}$ is unchanged, whereas $\delta_k$ is obtained by deleting the rows
indexed by $F$.  In particular,
\begin{equation}\label{eq:B-unchanged}
 B^k(X-F)=B^k(X),
 \qquad
 Z^k(X)\subseteq Z^k(X-F).
\end{equation}

\begin{definition}[Top coboundary code]\label{def:top-code}
The top coboundary code of $X$ in degree $k$ is
\[
 \cC_X^{k+1}=B^{k+1}(X)=\im\delta_k
 \subseteq \F^{X(k+1)}.
\]
For a code $\cC\subseteq\F^E$ and $F\subseteq E$, its shortening to $F$ is
\[
 \cC[F]=\set{c\in\cC:\supp(c)\subseteq F}.
\]
\end{definition}

For $A=\delta_k$, write
\begin{equation}\label{eq:simplicial-map-specialization}
 \lambda_X^k=\lambda_{\delta_k},\quad
 \Phi_X^k=\Phi_{\delta_k},\quad
 \Sigma_X^k=\Sigma_{\delta_k},\quad
 \Shat_{q,s}^k(X)=\Shat_{q,s}(\delta_k).
\end{equation}

We use normalized weights only in Section~\ref{sec:weighted}.  There, $X$ is pure and a
probability measure $\pi$ is fixed on $X(k+1)$.  The induced probability measure on
$k$-faces is
\begin{equation}\label{eq:induced-weight}
 w_k(\sigma)=\frac{1}{k+2}\sum_{\tau\supset\sigma}\pi(\tau).
\end{equation}
For a cochain or face set $A$, write
\[
 \norm{A}_j=\sum_{\sigma\in A}w_j(\sigma),
 \qquad w_{k+1}=\pi.
\]
The factor in \eqref{eq:induced-weight} makes $w_k$ a probability measure.

\begin{table}[tbp]
\centering
\caption{Nonstandard notation used throughout the paper.}
\label{tab:notation}
\footnotesize
\renewcommand{\arraystretch}{1.08}
\begin{tabularx}{\textwidth}{@{}P{0.145\textwidth}YP{0.145\textwidth}Y@{}}
\toprule
\textbf{Symbol} & \textbf{Meaning} & \textbf{Symbol} & \textbf{Meaning}\\
\midrule
$A:\F^n\to\F^m$ & Fixed binary check realization.
& $C_A,\cS_A$ & Kernel code $\ker A$ and syndrome space $\im A$.\\
$\cR_F(A)$ & Quotient released after deleting check coordinates $F$.
& $\cS_A[F]$ & Shortening of the syndrome space to $F$.\\
$\lambda_A(y)$ & Minimum variable weight realizing syndrome $y$.
& $\Shat_{q,s}(A)$ & Minimum check support of a hard $q$-syndrome space.\\
$\cC_X^{k+1}$ & Top coboundary code $\im\delta_k$.
& $\cE_F^k(X)$ & Emergent quotient $H^k(X-F)/H^k(X)$.\\
$\Delta\beta_k(F)$ & Cohomology increment after erasing $F$.
& $\eta_q^k(X)$ & Rank-only threshold for creating $q$ classes.\\
$\lambda_X^k(c)$ & Minimum $k$-cochain size among preimages of $c$.
& $\Phi_X^k(s)$ & Minimum image support at exact localization $s$.\\
$\Sigma_X^k(s)$ & Monotone envelope $\min_{t\ge s}\Phi_X^k(t)$.
& $\Shat_{q,s}^k(X)$ & Simplicial specialization $\Shat_{q,s}(\delta_k)$.\\
$\mathsf N_2(q,s)$ & Shortest binary length for dimension $q$ and distance at least $s$.
& $R_t(C)$ & $t$th generalized covering radius of a code $C$.\\
$\mathsf G_q(d)$ & Binary Griesmer function $\sum_{i=0}^{q-1}\ceil{d/2^i}$.
& $h^k(X)$ & Unnormalized coboundary-expansion constant.\\
$n,m,r$ & $|X(k)|$, $|X(k+1)|$, and $\rank\delta_k$.
& $D$ & Maximum number of top faces containing one $k$-face.\\
$\pi,w_k$ & Top-face probability measure and induced $k$-face measure.
& $\norm{A}_j$ & Weighted mass of a $j$-face set or cochain.\\
\bottomrule
\end{tabularx}
\end{table}

Table~\ref{tab:notation} is also a dependency guide: $\cE_F^k$ and $\Delta\beta_k$ are
outputs of erasure; $\eta_q^k$ records only their dimension; and $\lambda$, $\Phi$,
$\Sigma$, and $\Shat$ progressively restore geometric information about representatives.

\section{Top-face erasures and shortened coboundary codes}\label{sec:exact}

Theorem~\ref{thm:general-erasure-dictionary} applies to every row-deleted
map.  For a last coboundary map it also has a cohomological interpretation,
because top-face deletion leaves the lower coboundary space unchanged.  The
rank-only theory follows from this specialization.

The inclusion $Z^k(X)\subseteq Z^k(X-F)$ and
\eqref{eq:B-unchanged} induce an injection $H^k(X)\hookrightarrow H^k(X-F)$.
This makes the following quotient canonical.

\begin{definition}[Emergent cohomology]\label{def:emergent}
For $F\subseteq X(k+1)$, define
\[
 \cE_F^k(X)=H^k(X-F)/H^k(X).
\]
Its dimension is the cohomology increment
\[
 \Delta\beta_k(F)=\beta_k(X-F)-\beta_k(X).
\]
\end{definition}

Let
\[
 K_F=\set{\alpha\in C^k(X):\supp(\delta_k\alpha)\subseteq F}.
\]
Because the surviving rows of $\delta_k$ vanish on $K_F$, one has
$K_F=Z^k(X-F)$.

\begin{theorem}[Emergent cohomology is a shortened coboundary code]
\label{thm:exact-isomorphism}
For every $F\subseteq X(k+1)$, the coboundary map induces a canonical vector-space
isomorphism
\begin{equation}\label{eq:exact-isomorphism}
 \cE_F^k(X)\cong \cC_X^{k+1}[F].
\end{equation}
Consequently,
\begin{equation}\label{eq:beta-shortening}
 \Delta\beta_k(F)=\dim\cC_X^{k+1}[F].
\end{equation}
\end{theorem}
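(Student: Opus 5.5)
The plan is to reduce the statement to Theorem~\ref{thm:general-erasure-dictionary} with $A=\delta_k$, and then rewrite the released quotient $\cR_F(\delta_k)$ cohomologically. For $A=\delta_k$ we have $\ker A=Z^k(X)$ and $\cS_A=\im\delta_k=\cC_X^{k+1}$, so $\cS_A[F]=\cC_X^{k+1}[F]$. The row-deleted map $A_{\bar F}$ is exactly the top coboundary of $X-F$. Hence $\ker A_{\bar F}=K_F=Z^k(X-F)$, as recorded just before the statement. Theorem~\ref{thm:general-erasure-dictionary} then supplies a canonical isomorphism
\[
 Z^k(X-F)/Z^k(X)\;\cong\;\cC_X^{k+1}[F],\qquad \alpha+Z^k(X)\mapsto\delta_k\alpha .
\]

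The remaining step is to identify $Z^k(X-F)/Z^k(X)$ with $\cE_F^k(X)=H^k(X-F)/H^k(X)$. By \eqref{eq:B-unchanged} both cohomology groups are quotients by the same space $B=B^k(X)$, and $B\le Z^k(X)\le Z^k(X-F)$. The third isomorphism theorem gives a canonical isomorphism
\[
 \bigl(Z^k(X-F)/B\bigr)\big/\bigl(Z^k(X)/B\bigr)\;\cong\;Z^k(X-F)/Z^k(X),
\]
and the left side is exactly $H^k(X-F)/H^k(X)$. Here $H^k(X)$ is embedded via the injection induced by inclusion, which is precisely the subgroup $Z^k(X)/B$. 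Composing with the isomorphism above gives \eqref{eq:exact-isomorphism}. Explicitly, it is $[\alpha]+H^k(X)\mapsto\delta_k\alpha$, where $\delta_k$ is the coboundary of $X$, not of $X-F$.

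Equation \eqref{eq:beta-shortening} then follows by taking dimensions, using that $H^k(X)\hookrightarrow H^k(X-F)$ is injective:
\[
 \dim\cE_F^k(X)=\beta_k(X-F)-\beta_k(X)=\Delta\beta_k(F).
\]
Nothing here is genuinely hard. The only point needing care is to check that the identification is canonical and well defined. Concretely, the map $[\alpha]\mapsto\delta_k\alpha$ must not depend on the representative of the class in $H^k(X-F)$. This holds because changing the representative by an element of $B^k(X)$ does not change $\delta_k\alpha$, since $\delta_k\delta_{k-1}=0$.
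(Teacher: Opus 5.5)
Your proposal is correct and essentially matches the paper's proof. The paper applies the first isomorphism theorem directly to the surjection $\delta_k:K_F\to\cC_X^{k+1}[F]$ with kernel $Z^k(X)$, which is exactly the content of Theorem~\ref{thm:general-erasure-dictionary} that you invoke. It then uses $B^k(X-F)=B^k(X)$ to identify $\cE_F^k(X)\cong K_F/Z^k(X)$, composes the two isomorphisms, and takes dimensions.
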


\begin{proof}
The restriction $\delta_k:K_F\to\cC_X^{k+1}[F]$ is surjective by the definitions of
$K_F$, the image code, and shortening.  Its kernel is $Z^k(X)$.  The first isomorphism theorem gives
\[
 K_F/Z^k(X)\cong \cC_X^{k+1}[F].
\]
On the other hand, since $B^k(X-F)=B^k(X)$,
\[
 \cE_F^k(X)
 =\frac{K_F/B^k(X)}{Z^k(X)/B^k(X)}
 \cong K_F/Z^k(X).
\]
Composing the two canonical maps proves \eqref{eq:exact-isomorphism}, and taking
dimensions proves \eqref{eq:beta-shortening}.
\end{proof}

\begin{remark}[Relative-cohomology interpretation]\label{rem:relative}
The same isomorphism follows from the long exact sequence of the pair $(X,X-F)$.
Because the two complexes have the same $k$-skeleton,
$C^j(X,X-F)=0$ for $j\le k$ and
$H^{k+1}(X,X-F)=C^{k+1}(X,X-F)\cong\F^F$.
The connecting map identifies $\cE_F^k(X)$ with the kernel of
$H^{k+1}(X,X-F)\to H^{k+1}(X)$.  That kernel consists precisely of top cochains
supported in $F$ that are global coboundaries, namely $\cC_X^{k+1}[F]$.
\end{remark}

Let $E=X(k+1)$ and let $M_X^k$ be the binary row matroid of $\delta_k$---equivalently,
the support matroid represented by a generator matrix of $\cC_X^{k+1}$---with ground
set $E$ and rank function $r(A)$ equal to the rank of the rows indexed by $A$.  Let
$(M_X^k)^*$ be its dual and write
$\nu_{(M_X^k)^*}(F)=\abs F-r_{(M_X^k)^*}(F)$ for dual nullity.

\begin{corollary}[Matroid rank defect]\label{cor:rank-defect}
For every $F\subseteq E$,
\begin{equation}\label{eq:rank-defect}
 \Delta\beta_k(F)=r(E)-r(E\setminus F).
\end{equation}
\end{corollary}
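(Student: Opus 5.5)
By Theorem~\ref{thm:exact-isomorphism}, $\Delta\beta_k(F)=\dim\cC_X^{k+1}[F]$. The task is therefore to compute the dimension of the shortening of the image code $\cC=\im\delta_k$ to $F$ in terms of the row-matroid rank function $r$. I would do this with a single rank--nullity argument applied to the coordinate projection onto the complement of $F$.

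\textbf{Step 1 (set up the projection).} Let $\pi_{E\setminus F}:\F^E\to\F^{E\setminus F}$ be coordinate projection and restrict it to $\cC$. Its kernel is exactly the set of codewords vanishing off $F$, namely $\cC[F]$. Rank--nullity therefore gives
\[
 \dim\cC[F]=\dim\cC-\dim\pi_{E\setminus F}(\cC).
\]

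\textbf{Step 2 (identify the two dimensions with matroid ranks).} Since $\cC=\im\delta_k$, its dimension equals $\rank\delta_k$. This is the rank of all rows of $\delta_k$, which is $r(E)$.

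The punctured code $\pi_{E\setminus F}(\cC)$ is the image of $\pi_{E\setminus F}\delta_k$. That map is just $\delta_k$ with the rows indexed by $F$ deleted. Its image dimension is its matrix rank, which equals the row rank of the rows indexed by $E\setminus F$, i.e.\ $r(E\setminus F)$.

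Substituting both identifications into Step~1 yields \eqref{eq:rank-defect}.

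\textbf{Step 3 (optional dual form).} The notation $\nu_{(M_X^k)^*}$ suggests that the paper may also want the dual-nullity form. For this I would apply the dual rank formula $r^*(F)=|F|-r(E)+r(E\setminus F)$. It gives $|F|-r^*(F)=r(E)-r(E\setminus F)$, so the increment is also the dual nullity of $F$.

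\textbf{Main obstacle.} There is no real obstacle here. The only point needing care is Step~2: one must check that the rank of a row-deleted matrix is the matroid rank of the surviving row set. This holds because row rank equals column rank, which equals image dimension.
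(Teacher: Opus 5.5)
Your proof is correct and matches the paper's argument: the paper also starts from $\Delta\beta_k(F)=\dim\cC_X^{k+1}[F]$ (Theorem~\ref{thm:exact-isomorphism}) and identifies this dimension with the rank lost when codewords are restricted to $E\setminus F$, which your rank--nullity computation for the projection $\pi_{E\setminus F}$ restricted to $\cC$ spells out. The paper also mentions an equivalent check via $\beta_k=\dim\ker\delta_k-\rank\delta_{k-1}$, where only $\rank\delta_k$ changes under deletion; your Step~3 is the content of the next corollary.
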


\begin{proof}
The dimension of the shortened image code equals the rank lost after restricting all
codewords to $E\setminus F$, which is the row-rank difference on the right side.
Equivalently, compute $\beta_k$ from adjacent coboundary ranks; only the rank of
$\delta_k$ changes.
\end{proof}

\begin{corollary}[Dual-matroid nullity and supermodularity]
\label{cor:submodularity}
For every $F\subseteq E$,
\begin{equation}\label{eq:dual-nullity}
 \Delta\beta_k(F)
 =r(E)-r(E\setminus F)
 =\abs F-r_{(M_X^k)^*}(F)
 =\nu_{(M_X^k)^*}(F).
\end{equation}
Consequently, $F\mapsto\Delta\beta_k(F)$ is monotone nondecreasing and supermodular:
\[
 \Delta\beta_k(A)+\Delta\beta_k(B)
 \le \Delta\beta_k(A\cup B)+\Delta\beta_k(A\cap B).
\]
\end{corollary}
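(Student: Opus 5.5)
Starting from Corollary~\ref{cor:rank-defect}, which already gives $\Delta\beta_k(F)=r(E)-r(E\setminus F)$, I would first identify this rank difference with dual nullity. The standard dual-rank formula for a matroid $M$ on $E$ is
\[
 r_{M^*}(F)=\abs F+r(E\setminus F)-r(E).
\]
Rearranging it gives $r(E)-r(E\setminus F)=\abs F-r_{M^*}(F)=\nu_{M^*}(F)$, which is \eqref{eq:dual-nullity}. This uses only the definition of the dual matroid, so no computation specific to coboundaries is needed.

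For monotonicity, I would argue directly with the function $g(F)=r(E\setminus F)$. If $A\subseteq B$, then $E\setminus B\subseteq E\setminus A$, and monotonicity of $r$ gives $g(B)\le g(A)$, so $\Delta\beta_k(A)\le\Delta\beta_k(B)$. Alternatively, Theorem~\ref{thm:exact-isomorphism} gives this at once, since $\cC_X^{k+1}[A]\subseteq\cC_X^{k+1}[B]$ whenever $A\subseteq B$.

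For supermodularity, I would note that $\Delta\beta_k(F)=r(E)-g(F)$, so it suffices to show that $g$ is submodular. Complementation exchanges unions and intersections: $E\setminus(A\cup B)=(E\setminus A)\cap(E\setminus B)$ and $E\setminus(A\cap B)=(E\setminus A)\cup(E\setminus B)$. Submodularity of the rank function $r$ on the sets $E\setminus A$ and $E\setminus B$ therefore reads
\[
 g(A\cap B)+g(A\cup B)\le g(A)+g(B).
\]
Negating this and adding the constant $2r(E)$ to both sides gives the stated supermodular inequality. As a cross-check, supermodularity of the nullity $\abs F-r_{M^*}(F)$ also follows because $\abs{\cdot}$ is modular and $r_{M^*}$ is submodular.

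I do not expect a real obstacle here. The only care needed is to keep the complementation straight and to make sure that $r$, defined as the row rank of the rows of $\delta_k$ indexed by a subset, is a genuine matroid rank function, so that submodularity applies. This holds because $M_X^k$ is a represented binary matroid, and submodularity of linear rank is the dimension formula $\dim(V+W)+\dim(V\cap W)=\dim V+\dim W$ applied to row spans.
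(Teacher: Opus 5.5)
Your proof is correct and is essentially the paper's argument: the paper also derives \eqref{eq:dual-nullity} from the dual-rank formula $r_{(M_X^k)^*}(F)=\abs F+r(E\setminus F)-r(E)$, and then cites monotonicity and submodularity of matroid rank. The only difference is that you apply those properties to the primal rank on complements rather than to the dual nullity. This is equivalent, and slightly more transparent for monotonicity: monotonicity of a nullity $\abs F-r_{M^*}(F)$ actually uses the unit-increase property $r_{M^*}(F\cup e)\le r_{M^*}(F)+1$, not mere monotonicity of rank.
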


\begin{proof}
The dual-rank identity gives
$r_{(M_X^k)^*}(F)=\abs F+r(E\setminus F)-r(E)$, proving
\eqref{eq:dual-nullity}.  Nullity is monotone and supermodular because matroid rank is
monotone and submodular.
\end{proof}

Thus the Betti increment is a dual-matroid nullity with a topological
interpretation.  Its monotonicity, increasing marginal effect, and Tutte generating
law are matroidal consequences.  The localization-sensitive results below also use
the quotient metric on cochains.

\section{Rank-only vulnerability: generalized Hamming weights}\label{sec:rank}

The correspondence reduces the number of newly created classes to the dimension of
a shortened code.  This gives the rank-only answer, but not the geometry of
representatives.

\begin{definition}[Rank-erasure weights]\label{def:eta}
For $1\le q\le \rank\delta_k$, define
\begin{equation}\label{eq:eta}
 \eta_q^k(X)=\min\set{\abs{F}:F\subseteq X(k+1),\ \Delta\beta_k(F)\ge q}.
\end{equation}
\end{definition}

For a linear code $\cC\subseteq\F^E$, the $q$th generalized Hamming weight is
\[
 d_q(\cC)=\min\set{\abs{\supp U}:U\le\cC,\ \dim U=q},
\]
where $\supp U=\bigcup_{c\in U}\supp(c)$ \cite{wei1991}.

\begin{theorem}[Rank erasures are generalized Hamming weights]
\label{thm:eta-generalized-weight}
For $1\le q\le\rank\delta_k$,
\begin{equation}\label{eq:eta-dq}
 \eta_q^k(X)=d_q(\cC_X^{k+1}).
\end{equation}
\end{theorem}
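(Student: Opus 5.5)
The plan is to combine Theorem~\ref{thm:exact-isomorphism}, which gives $\Delta\beta_k(F)=\dim\cC_X^{k+1}[F]$, with the definition of $d_q$. We prove the two inequalities separately, writing $\cC=\cC_X^{k+1}$ throughout.

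For $\eta_q^k(X)\le d_q(\cC)$, take a $q$-dimensional subcode $U\le\cC$ with $\abs{\supp U}=d_q(\cC)$, and set $F=\supp U$. Then $U\le\cC[F]$, so $\dim\cC[F]\ge q$. By \eqref{eq:beta-shortening} this means $\Delta\beta_k(F)\ge q$, and hence $\eta_q^k(X)\le\abs F=d_q(\cC)$. Such a $U$ exists because $q\le\rank\delta_k=\dim\cC$.

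For the reverse inequality, take $F$ attaining the minimum in \eqref{eq:eta}, so $\abs F=\eta_q^k(X)$ and $\Delta\beta_k(F)\ge q$. By \eqref{eq:beta-shortening}, $\dim\cC[F]\ge q$, so we may choose any $q$-dimensional subspace $U\le\cC[F]$. Every element of $U$ is supported in $F$, so $\supp U\subseteq F$ and $d_q(\cC)\le\abs{\supp U}\le\abs F$. Both minima are over nonempty sets since $F=X(k+1)$ gives $\Delta\beta_k=\rank\delta_k\ge q$.

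There is no real obstacle here. The only point to handle carefully is feasibility, meaning that both minima are finite and attained when $1\le q\le\rank\delta_k$. Alternatively, the whole statement follows directly as the case $s=1$ of Proposition~\ref{prop:map-equivalence}, combined with the last assertion of Theorem~\ref{thm:general-erasure-dictionary}. At $s=1$ the localization condition is automatic for nonzero syndromes, so $\eta_q^k(X)=\Shat^k_{q,1}(X)=d_q(\cC)$.
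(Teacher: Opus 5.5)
Your proof is correct and follows the paper's argument. Both use $\Delta\beta_k(F)=\dim\cC_X^{k+1}[F]$ from Theorem~\ref{thm:exact-isomorphism}, together with the fact that $\cC_X^{k+1}[F]$ contains a $q$-dimensional subcode exactly when some $q$-dimensional $U\le\cC_X^{k+1}$ has $\supp U\subseteq F$; you split the paper's single minimization over $\abs F$ into two inequalities and check feasibility explicitly. Your alternative route through the $s=1$ case of Proposition~\ref{prop:map-equivalence} and Theorem~\ref{thm:general-erasure-dictionary} is also non-circular, provided you identify $\dim\cR_F(\delta_k)$ with $\Delta\beta_k(F)$ via $B^k(X-F)=B^k(X)$.
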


\begin{proof}
By Theorem~\ref{thm:exact-isomorphism}, $\Delta\beta_k(F)\ge q$ if and only if
$\cC_X^{k+1}[F]$ contains a $q$-dimensional subcode.  This occurs if and only if there
is a $q$-dimensional subcode $U\le\cC_X^{k+1}$ with $\supp U\subseteq F$.  Minimizing
$\abs{F}$ gives exactly $d_q$.
\end{proof}

Standard consequences of generalized-weight theory now become exact topological erasure
statements.  The coding and matroid ingredients in this rank-only layer are classical:
Greene related weight enumerators to Tutte polynomials; Kl{\o}ve developed a
contemporaneous support-weight-distribution viewpoint, and Barg and Britz developed
support matroids for subcodes \cite{greene1976,klove1992,barg1997,britz2007,wei1991}.
The contribution here is the canonical
topological identification of top-face erasure with shortening; the genuinely
map-sensitive invariant begins in Section~\ref{sec:shattering}.

\begin{corollary}[Basic rank-erasure bounds]\label{cor:eta-basic}
Let $m=\abs{X(k+1)}$ and $r=\rank\delta_k$.  Then
\[
 1\le \eta_1^k(X)<\eta_2^k(X)<\cdots<\eta_r^k(X)\le m
\]
and
\begin{equation}\label{eq:generalized-singleton}
 q\le \eta_q^k(X)\le m-r+q.
\end{equation}
\end{corollary}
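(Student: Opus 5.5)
By Theorem~\ref{thm:eta-generalized-weight}, $\eta_q^k(X)=d_q(\cC)$ with $\cC=\cC_X^{k+1}\subseteq\F^E$, $|E|=m$ and $\dim\cC=r$, because the rank of $\delta_k$ equals the dimension of its image. The corollary is therefore a statement about the generalized Hamming weights of a binary $[m,r]$ code. I would prove it with Wei's standard arguments \cite{wei1991}.

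\emph{Strict monotonicity.} Fix $2\le q\le r$ and choose $U\le\cC$ with $\dim U=q$ and $\abs{\supp U}=d_q(\cC)$. Put $S=\supp U$. For each $i\in S$, the coordinate functional $u\mapsto u_i$ is nonzero on $U$, so its kernel $U_i$ has dimension $q-1$. Moreover $\supp U_i\subseteq S\setminus\{i\}$. Hence
\[
 d_{q-1}(\cC)\le \abs{S}-1=d_q(\cC)-1 .
\]

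\emph{Lower endpoint and lower bound $q\le\eta_q$.} Since $r\ge1$, $\cC$ contains a nonzero word, so $\eta_1^k(X)=d_1(\cC)\ge1$. Iterating strict monotonicity gives $d_q\ge d_1+(q-1)\ge q$. Alternatively, a $q$-dimensional subspace supported on $S$ embeds in $\F^S$, which forces $\abs S\ge q$.

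\emph{Upper bound $\eta_q\le m-r+q$.} This is the one step needing care, and it is a dimension count. Take any set $T\subseteq E$ with $\abs T=r-q$ and consider the restriction map $\cC\to\F^T$. Its kernel is the shortened code $\cC[E\setminus T]$, which has dimension at least $r-(r-q)=q$. Any $q$-dimensional subspace of that kernel has support inside $E\setminus T$, so
\[
 d_q(\cC)\le m-r+q .
\]
In the case $q=r$ this gives $\eta_r^k(X)\le m$, which is the top of the chain.

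Equivalently, the upper bound follows from Corollary~\ref{cor:rank-defect}: if $r(E\setminus F)=r-q$, then $\Delta\beta_k(F)=q$. Such an $F$ can be found with $\abs{E\setminus F}\ge r-q$; taking $E\setminus F$ to be an independent set of size $r-q$ gives $\abs F=m-r+q$ and $\Delta\beta_k(F)=q$.
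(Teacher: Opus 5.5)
Your proposal is correct and takes the paper's route: the corollary is stated without separate proof as a consequence of Theorem~\ref{thm:eta-generalized-weight} together with Wei's standard monotonicity and generalized Singleton bounds for $d_q$. Your dimension count over an arbitrary set $T$ of size $r-q$ is essentially the generalized Singleton argument the paper gives in Appendix~\ref{app:localized-existence}, where it sets $r-q$ coordinates of an information set to zero.
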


\begin{remark}[Duality and top homology]\label{rem:dual}
Since $X$ has no $(k+2)$-faces,
\[
 (\cC_X^{k+1})^\perp=\ker\partial_{k+1}=Z_{k+1}(X)=H_{k+1}(X;\F).
\]
Thus Wei duality expresses the erasure hierarchy in terms of the top-homology code
\cite{wei1991}.  In particular, if $H_{k+1}(X;\F)=0$, then
$\cC_X^{k+1}=\F^{X(k+1)}$ and $\eta_q^k(X)=q$.  At the opposite elementary extreme,
if the top-cycle space is spanned by the all-one chain, then the top coboundary code is
the even-weight code and $\eta_q^k(X)=q+1$ for $1\le q<m$.
\end{remark}

Remark~\ref{rem:dual} also shows why rank-erasure weights can be too coarse.  Every
connected closed mod-two pseudomanifold with one-dimensional top homology has the same
rank-erasure hierarchy $q+1$, although its representatives may have very different
geometry.

\section{From rank to localization: the cofilling shattering hierarchy}
\label{sec:shattering}

Rank counts defects but does not distinguish local from global ones.  The quotient
distance below adds that information and leads to the two-parameter hierarchy.

\subsection{Minimum preimages and cofilling profiles}

\begin{definition}[Simplicial localization notation]\label{def:localization}
For $c\in\cC_X^{k+1}$,
\begin{equation}\label{eq:lambda}
 \lambda_X^k(c)=\lambda_{\delta_k}(c)
 =\min\set{\abs{\alpha}:\delta_k\alpha=c}.
\end{equation}
For $c=0$, set $\lambda_X^k(0)=0$.  Define
\begin{align}
 \Phi_X^k(s)&=\min\set{\abs c:0\ne c\in\cC_X^{k+1},\ \lambda_X^k(c)=s},
 \label{eq:Phi}\\
 \Sigma_X^k(s)&=\min_{t\ge s}\Phi_X^k(t).
 \label{eq:Sigma}
\end{align}
\end{definition}

If $c=\delta_k\alpha$, then
\begin{equation}\label{eq:lambda-distance}
 \lambda_X^k(c)=\dist(\alpha,Z^k(X))
 =\min_{z\in Z^k(X)}\abs{\alpha+z}.
\end{equation}
Thus localization is the coset-leader weight of $c$ for $Z^k(X)$.

The unnormalized coboundary-expansion constant is
\begin{equation}\label{eq:h-unweighted}
 h^k(X)=\min_{\alpha\notin Z^k(X)}
 \frac{\abs{\delta_k\alpha}}{\dist(\alpha,Z^k(X))}.
\end{equation}
Equations \eqref{eq:lambda-distance} and \eqref{eq:Phi} give the exact relation
\begin{equation}\label{eq:h-profile}
 h^k(X)=\min_{s:\Phi_X^k(s)<\infty}\frac{\Phi_X^k(s)}{s}.
\end{equation}
Thus $h^k$ is the lower supporting slope of the full localization profile.

\begin{proposition}[Dimension zero recovers graph isoperimetry]
\label{prop:graph-profile}
Let $G$ be a connected graph, viewed as a one-dimensional simplicial complex, and put
$k=0$.  For $1\le s\le\floor{\abs{V(G)}/2}$,
\begin{equation}\label{eq:graph-isoperimetry}
 \Phi_G^0(s)=\min_{\substack{S\subseteq V(G)\\\abs{S}=s}}\abs{\partial_G S}.
\end{equation}
Consequently, exact evaluation of the middle localization profile is NP-hard.
\end{proposition}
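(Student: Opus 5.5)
For $k=0$ I will translate every object in the definition of $\Phi_G^0$ into graph language and then read off the minimization directly. Identify a $0$-cochain $\alpha\in C^0(G)=\F^{V(G)}$ with a vertex set $S\subseteq V(G)$. Over $\F$, the coboundary $\delta_0\alpha$ assigns to an edge $uv$ the value $\alpha(u)+\alpha(v)$, so $\delta_0 S=\partial_G S$, the set of edges with exactly one endpoint in $S$. Because $G$ is connected, $Z^0(G)=\ker\delta_0=\{\emptyset,V(G)\}$: a set with empty edge boundary is a union of components. By \eqref{eq:lambda-distance},
\[
 \lambda_G^0(\delta_0 S)=\dist\bigl(S,Z^0(G)\bigr)=\min\{\abs S,\ \abs{V(G)}-\abs S\}.
\]
Every nonzero $c\in\cC_G^{1}$ is therefore $\partial_G S$ for a complementary pair $\{S,V\setminus S\}$ with $S\ne\emptyset,V$, and its localization is the size of the smaller side.

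Next I prove \eqref{eq:graph-isoperimetry} by two inequalities, for a fixed $1\le s\le\floor{\abs{V}/2}$. Take any $S$ with $\abs S=s$. Then $S\neq\emptyset,V$, so $\partial_G S\neq0$ by connectivity. Since $s\le\abs V-s$, the displayed formula gives $\lambda_G^0(\partial_G S)=s$, and hence $\Phi_G^0(s)\le\abs{\partial_G S}$. Conversely, let $c$ attain $\Phi_G^0(s)$ and write $c=\partial_G S$. Replacing $S$ by its complement if needed, which leaves $\partial_G S$ unchanged, we may assume $\abs S\le\abs V/2$. Then $\lambda_G^0(c)=\abs S=s$, which gives the reverse inequality. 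The hypothesis $s\le\floor{\abs V/2}$ is exactly what ensures the smaller side has size $s$ and that the minimum is over a nonempty family. The only point needing care here is the kernel computation, which uses connectivity.

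For NP-hardness I reduce from minimum bisection, which is NP-hard \cite{gareyjohnsonstockmeyer1976}. For connected $G$ with $\abs V$ even, the quantity $\Phi_G^0(\abs V/2)$ equals the minimum bisection width, by \eqref{eq:graph-isoperimetry}. The decision version asks whether the bisection width is at most $b$. Restricting that problem to connected graphs remains NP-hard; I will either cite the hardness statement as applying to connected instances, or connect components by a small gadget that does not change the bisection width. This restriction is the one place where I expect to check the literature carefully, so that the reduction applies exactly at $s=\abs V/2$, the ``middle'' profile value. Since $\Phi_G^0(\abs V/2)$ is computable in polynomial time from an evaluator for the middle profile, exact evaluation of that profile is NP-hard.
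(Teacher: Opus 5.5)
Your proposal is correct and follows the paper's proof essentially step for step: identify $0$-cochains with vertex sets and coboundaries with cuts, use connectivity to get $Z^0(G)$ equal to the two constant cochains so that localization is the size of the smaller side, and specialize to $s=\abs{V}/2$ to obtain minimum bisection. Your concern about connected instances is a point the paper passes over without comment, and the citation route settles it: minimum bisection remains NP-hard on connected graphs, since the usual reduction from max cut (pad with isolated vertices, then take the complement) already produces connected instances. The alternative of joining components by a gadget ``without changing the bisection width'' would need its own justification, because added edges can increase bisection width.
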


\begin{proof}
A zero-cochain is the indicator of a vertex set $S$, and its coboundary is the cut
$\partial_G S$.  Since $G$ is connected, $Z^0(G)$ consists of the two constant
cochains.  Hence the distance of $\1_S$ to $Z^0(G)$ is
$\min\{\abs S,\abs{V\setminus S}\}$.  For $s$ at most half the order, minimizing the
cut support over coset leaders of weight $s$ is exactly \eqref{eq:graph-isoperimetry}.
For even order and $s=\abs V/2$, this is minimum bisection, which is NP-hard
\cite{gareyjohnsonstockmeyer1976}.
\end{proof}

\subsection{The two-parameter hierarchy}

Through Theorem~\ref{thm:exact-isomorphism}, every $u\in\cE_F^k(X)$ corresponds to a
unique codeword $c_u\in\cC_X^{k+1}[F]$.  Define its relative localization by
\[
 \ell_X^k(u)=\lambda_X^k(c_u).
\]
The word ``relative'' is important: $u$ is measured modulo the old cocycle space, not
against the zero cochain alone.

Viewed as a parity-check map for the cocycle code $Z^k(X)\subseteq C^k(X)$,
$\delta_k$ is its syndrome map and $\lambda_X^k$ is syndrome coset-leader weight.
The definition therefore asks for a high-distance syndrome subspace with small common
support.

\begin{definition}[Cofilling shattering of a complex]\label{def:shat}
For integers $q,s\ge1$, set
$\Shat_{q,s}^k(X)=\Shat_{q,s}(\delta_k)$.  Equivalently, it is the least
$\abs F$ for which a $q$-dimensional subspace $U\le\cE_F^k(X)$ satisfies
$\ell_X^k(u)\ge s$ for every $0\ne u\in U$.
\end{definition}

\begin{theorem}[Erasure and code formulations]\label{thm:shat-code}
For all $q,s\ge1$,
\begin{equation}\label{eq:shat-code}
 \Shat_{q,s}^k(X)=
 \min\set{\abs{\supp U}:
 U\le\cC_X^{k+1},\ \dim U=q,\
 \lambda_X^k(c)\ge s\ \text{for all }0\ne c\in U}.
\end{equation}
In particular,
\begin{equation}\label{eq:shat-boundaries}
 \Shat_{q,1}^k(X)=\eta_q^k(X)=d_q(\cC_X^{k+1}),
 \qquad \Shat_{1,s}^k(X)=\Sigma_X^k(s).
\end{equation}
\end{theorem}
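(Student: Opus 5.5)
The plan is to obtain the theorem by specialization: we set $A=\delta_k$ and combine Definition~\ref{def:map-shat} with Theorem~\ref{thm:exact-isomorphism} and Proposition~\ref{prop:map-equivalence}.

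\textbf{The code formulation \eqref{eq:shat-code}.} By Definition~\ref{def:shat}, $\Shat_{q,s}^k(X)=\Shat_{q,s}(\delta_k)$. Substituting $A=\delta_k$ into \eqref{eq:map-shat} gives $\cS_A=\im\delta_k=\cC_X^{k+1}$ and $\lambda_A=\lambda_X^k$ by \eqref{eq:lambda}. This is literally \eqref{eq:shat-code}.

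\textbf{Equivalence with the erasure description.} We must still check that this agrees with the ``least $|F|$'' formulation in Definition~\ref{def:shat}. Theorem~\ref{thm:exact-isomorphism} gives a linear isomorphism $\cE_F^k(X)\cong\cC_X^{k+1}[F]$, $u\mapsto c_u$. By definition, $\ell_X^k(u)=\lambda_X^k(c_u)$, so the isomorphism carries localization to localization.

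Hence a $q$-dimensional $U\le\cE_F^k(X)$ with all nonzero $\ell\ge s$ corresponds to a $q$-dimensional $U'\le\cC_X^{k+1}$ with $\supp U'\subseteq F$ and all nonzero $\lambda\ge s$. The correspondence runs in both directions. Minimizing over $F$ then gives the same value as minimizing $|\supp U'|$: for one direction take $F=\supp U'$; for the other, $|\supp U'|\le|F|$. This is the same argument as the last step of Theorem~\ref{thm:general-erasure-dictionary}, and the identification of $\cE_F^k(X)$ with $\cR_F(\delta_k)$ via $B^k(X-F)=B^k(X)$ makes the two dictionaries compatible.

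\textbf{The boundary cases \eqref{eq:shat-boundaries}.} At $s=1$, every nonzero $c\in\cC_X^{k+1}$ has $\lambda_X^k(c)\ge1$, because a zero preimage would give $c=0$. So the constraint is vacuous and the minimum is $d_q(\cC_X^{k+1})$, which equals $\eta_q^k(X)$ by Theorem~\ref{thm:eta-generalized-weight} for $1\le q\le\rank\delta_k$. For larger $q$ both sides are $+\infty$ under the empty-minimum convention.

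At $q=1$, $U=\langle c\rangle$ with $\supp U=\supp c$. The quantity is therefore $\min\{|c|:c\ne0,\ \lambda_X^k(c)\ge s\}=\min_{t\ge s}\Phi_X^k(t)=\Sigma_X^k(s)$, matching \eqref{eq:map-boundaries}.

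There is no real obstacle. The only point needing care is the metric compatibility of the isomorphism in Theorem~\ref{thm:exact-isomorphism}, which holds by the definition of $\ell_X^k$, together with the range convention for $q$ in the $\eta$ identity.
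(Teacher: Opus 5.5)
Your proposal is correct and follows essentially the same route as the paper: specialize Definition~\ref{def:map-shat} to $A=\delta_k$, use Theorem~\ref{thm:exact-isomorphism} to identify subspaces of $\cE_F^k(X)$ with subcodes of $\cC_X^{k+1}[F]$ (taking $F=\supp U$), and read off the boundary cases from \eqref{eq:map-boundaries} and Theorem~\ref{thm:eta-generalized-weight}. Your additional remarks are harmless refinements the paper leaves implicit: that $\lambda_X^k(c)\ge1$ for every nonzero $c$, and that the $\eta_q^k$ identity is read in the range $1\le q\le\rank\delta_k$.
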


\begin{proof}
Apply Definition~\ref{def:map-shat} to $A=\delta_k$.  Under
Theorem~\ref{thm:exact-isomorphism}, a subspace of $\cE_F^k(X)$ is a subcode
of $\cC_X^{k+1}[F]$, and relative localization becomes $\lambda_X^k$.
Choosing $F=\supp U$ proves the equality.  The boundary formulas follow from
\eqref{eq:map-boundaries} and Theorem~\ref{thm:eta-generalized-weight}.
\end{proof}

\begin{remark}[Why every nonzero combination is constrained]
\label{rem:all-combinations}
Requiring only a basis of well-localized classes would depend on the basis and
would allow difficult syndromes to cancel into an easy one.  The condition on
every nonzero element is the analogue of minimum distance for a linear code.
\end{remark}

\subsection{Domain feasibility and a profile-sensitive Griesmer bound}

For integers $q,s\ge1$, let
\begin{equation}\label{eq:shortest-code-length}
 \mathsf N_2(q,s)=
 \min\set{\ell:\text{there exists a binary linear $[\ell,q,d]$ code with $d\ge s$}}
\end{equation}
and write
\begin{equation}\label{eq:griesmer-function}
 \mathsf G_q(d)=\sum_{i=0}^{q-1}\ceil{\frac{d}{2^i}}.
\end{equation}
Thus $\mathsf N_2(q,s)\ge\mathsf G_q(s)$ by the binary Griesmer bound \cite{griesmer1960}.

For a binary linear code $C\subseteq\F^n$ with full-rank parity-check matrix
$H=[h_1\ \cdots\ h_n]$, Elimelech, Firer, and Schwartz define the generalized
covering radius by the following parity-check-column formulation
\cite[Definition~1]{elimelech2021}:
\begin{equation}\label{eq:generalized-covering-radius}
 R_t(C)=
 \max_{\substack{S\subseteq\F^{n-\dim C}\\|S|=t}}
 \min\set{\abs I:I\subseteq[n],\ S\subseteq
 \operatorname{span}\set{h_i:i\in I}}.
\end{equation}
Their Lemma~2 shows that replacing $H$ by $AH$ with
$A\in\operatorname{GL}(n-\dim C,2)$ leaves $R_t(C)$ unchanged.  Their hierarchy
satisfies $R_t(C)\le n-\dim C$ and Proposition~13 gives
$R_t(C)\le tR_1(C)$ \cite[Lemma~2, Eq.~(2), and Proposition~13]{elimelech2021}.
The localization and support requirements live in different coordinate spaces, but a
feasible syndrome subspace imposes a nontrivial obstruction on this domain-side
covering parameter.

\begin{theorem}[Rank and generalized-covering-radius obstruction]
\label{thm:domain-code}
Let $A:\F^n\to\F^m$, put $r=\rank A$, and let $C_A=\ker A$.  If
$\Shat_{q,s}(A)<\infty$, then $q\le r$ and
\begin{equation}\label{eq:domain-covering}
 r\ge R_q(C_A)\ge \mathsf N_2(q,s)\ge \mathsf G_q(s).
\end{equation}
Moreover,
\begin{equation}\label{eq:ordinary-covering-obstruction}
 R_1(C_A)\ge\ceil{\frac{\mathsf N_2(q,s)}{q}},
\end{equation}
and there exists a binary $[r,q,d]$ code with $d\ge s$, so
\begin{equation}\label{eq:domain-hamming}
 2^q\sum_{j=0}^{\floor{(s-1)/2}}\binom rj\le2^r.
\end{equation}
For $A=\delta_k$, these are obstructions for $\Shat_{q,s}^k(X)$.
\end{theorem}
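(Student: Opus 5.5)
We work on the domain side, using only a witness subspace $U\le\cS_A$ with $\dim U=q$ and $\lambda_A(y)\ge s$ for all $0\ne y\in U$. Such a $U$ exists because $\Shat_{q,s}(A)<\infty$; its check support is never used. Since $U\le\cS_A$ has dimension $q$, we get $q\le r$ immediately.

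To apply the Elimelech--Firer--Schwartz formulation \eqref{eq:generalized-covering-radius}, we need a full-rank parity-check matrix for $C_A$. Choose $r$ linearly independent rows of $A$, or more canonically an isomorphism $T:\cS_A\to\F^r$. Put $H=TA$, an $r\times n$ matrix of rank $r$ with $\ker H=C_A$, so $H$ is a full-rank parity-check matrix. By their Lemma~2, $R_q(C_A)$ does not depend on the choice of $T$. The bound $R_q(C_A)\le r$ is their Eq.~(2), and it also holds directly because $r$ independent columns span $\F^r$.

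\textbf{Main step: $R_q(C_A)\ge\mathsf N_2(q,s)$.} Let $S=T(\{y_1,\dots,y_q\})$, where $y_1,\dots,y_q$ is a basis of $U$; this is a set of $q$ syndromes. Let $I\subseteq[n]$ be any column set with $S\subseteq\operatorname{span}\{h_i:i\in I\}$, and put $\ell=|I|$. Consider the linear map
\[
 \phi:\F^{I}\to\F^r,\qquad z\mapsto \sum_{i\in I}z_ih_i,
\]
and let $W=\phi^{-1}(T U)$. Because $TU\subseteq\im\phi$, the map $\phi|_W:W\to TU$ is surjective.

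For every $z\in W\setminus\ker\phi$, extend $z$ by zeros to $x\in\F^n$. Then $Hx=\phi(z)=Ty$ for some $0\ne y\in U$, so $Ax=y$. Hence $|z|=|x|\ge\lambda_A(y)\ge s$.

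To obtain a code of dimension exactly $q$, choose a complement $W'$ of $\ker\phi\cap W$ in $W$. Then $\phi|_{W'}$ is an isomorphism onto $TU$, so $\dim W'=q$, and every nonzero element of $W'$ lies outside $\ker\phi$. Thus $W'$ is a binary $[\ell,q,\ge s]$ code, which gives $\ell\ge\mathsf N_2(q,s)$.

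Minimizing over $I$ and maximizing over $q$-sets $S$ yields $R_q(C_A)\ge\mathsf N_2(q,s)$. The Griesmer inequality $\mathsf N_2(q,s)\ge\mathsf G_q(s)$ was already noted.

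\textbf{Remaining assertions.}
\begin{itemize}
\item For \eqref{eq:ordinary-covering-obstruction}, combine Proposition~13 of \cite{elimelech2021}, $R_q\le qR_1$, with the main step and take ceilings.
\item For \eqref{eq:domain-hamming}, the chain gives $\mathsf N_2(q,s)\le r$. Padding a shortest code with zero coordinates gives a binary $[r,q,\ge s]$ code.
\item Apply the sphere-packing bound to that code: balls of radius $\floor{(s-1)/2}$ around its $2^q$ codewords are disjoint in $\F^r$.
\item The specialization $A=\delta_k$ is immediate from Definition~\ref{def:shat}.
\end{itemize}

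The main obstacle is the main step. A spanning column set only gives some preimages, and we must get an honest $q$-dimensional code without losing distance. The complement-of-kernel trick handles this: dimension is exactly $q$, and the distance bound holds because every nonzero element maps to a nonzero syndrome in $U$.
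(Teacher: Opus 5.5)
Your proposal is correct and takes essentially the same route as the paper: it uses the same witness subspace $U$, the same complement-of-kernel construction inside $\F^I$ to produce a $q$-dimensional code of distance at least $s$, and the same appeals to EFS Lemma~2, Eq.~(2), and Proposition~13. The only cosmetic difference is how you get the binary $[r,q,\ge s]$ code: you zero-pad a shortest code using $\mathsf N_2(q,s)\le r$, whereas the paper reapplies the complement construction to $r$ independent columns of $H$; both are valid.
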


\begin{proof}
Choose an isomorphism $\psi:\im A\to\F^r$ and take $H=\psi A$ as a
full-rank parity-check matrix of $C_A$.  EFS Lemma~2 makes $R_q(C_A)$
independent of this choice \cite[Lemma~2]{elimelech2021}.

Let $U\le\im A$ be feasible with basis $u_1,\ldots,u_q$.  If $I\subseteq[n]$
is any coordinate set whose columns of $H$ span
$\psi(u_1),\ldots,\psi(u_q)$, then $U\subseteq A(\F^I)$.  The restriction
$A:\F^I\cap A^{-1}(U)\to U$ is surjective.  Choose a complement $V$ to its
kernel $\F^I\cap C_A$.  Then $\dim V=q$, $A|_V$ is an isomorphism onto $U$,
and every $0\ne v\in V$ has $\abs v\ge\lambda_A(Av)\ge s$.  After
puncturing zero coordinates, $V$ is a binary code of length at most $\abs I$,
dimension $q$, and distance at least $s$.  Thus $\abs I\ge\mathsf N_2(q,s)$,
and Definition~1 of EFS gives $R_q(C_A)\ge\mathsf N_2(q,s)$.  The Griesmer
bound gives the final inequality in \eqref{eq:domain-covering}.

Feasibility gives $q\le r$.  EFS Eq.~(2) gives $R_q(C_A)\le R_r(C_A)=r$,
and Proposition~13 gives $R_q(C_A)\le qR_1(C_A)$
\cite[Eq.~(2) and Proposition~13]{elimelech2021}.  Finally, apply the same
complement construction to $r$ independent columns of $H$ to obtain a binary
$[r,q,d]$ code with $d\ge s$; the sphere-packing bound gives
\eqref{eq:domain-hamming} \cite{macwilliamssloane1977}.
\end{proof}

\begin{remark}[Relation to generalized packing radii]
Elimelech--Firer--Schwartz define
$\delta_t(C)=\floor{(d_t(C)-1)/2}$ and prove a rank-restricted separation result for
$t$-balls \cite[Section~VI and Lemma~27]{elimelech2021}.  They observe that
$\delta_t(C)\le R_t(C)$ would extend the ordinary packing--covering inequality, but the
rank restriction prevents their lemma from yielding a packing of all $t$-balls when
$t\ge2$.  Theorem~\ref{thm:domain-code} neither assumes nor proves that inequality.  It
uses the additional map-coupled witness $U$ to manufacture a $q$-dimensional domain
code of distance at least $s$ inside every coordinate set spanning a basis of $U$.
Thus its lower bound is conditional and is expressed through the shortest-code-length
function $\mathsf N_2(q,s)$ rather than the generalized packing radius of $C$.
\end{remark}

\begin{remark}[Two independent coordinate-space obstructions]
Theorem~\ref{thm:domain-code} converts the covering-radius comparison into a theorem:
a selected hard syndrome space forces large map rank, large generalized covering
radius, and a lower bound on the ordinary covering radius in domain coordinates.
The next theorem instead constrains the effective length of the image subcode in top-face
coordinates after localization has forced a minimum syndrome weight.  Neither
coordinate-space obstruction contains the other in general.
\end{remark}

\begin{proposition}[Map-coupled profile--Griesmer bound]
\label{prop:map-profile-griesmer}
For every check realization $A$ and every feasible pair $(q,s)$,
\begin{equation}\label{eq:map-profile-griesmer}
 \Shat_{q,s}(A)\ge
 \max\set{d_q(\im A),\mathsf G_q\bigl(\Sigma_A(s)\bigr)}.
\end{equation}
\end{proposition}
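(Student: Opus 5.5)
The plan is to take a feasible subspace $U\le\cS_A$ attaining $\Shat_{q,s}(A)$, so $\dim U=q$, $\abs{\supp U}=\Shat_{q,s}(A)$, and $\lambda_A(y)\ge s$ for every $0\ne y\in U$. We then bound $\abs{\supp U}$ from below in two separate ways and combine them with a maximum.

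\textbf{First bound.} The inequality $\Shat_{q,s}(A)\ge d_q(\im A)$ is immediate. The feasible set in Definition~\ref{def:map-shat} is contained in the feasible set for $s=1$, so monotonicity from Proposition~\ref{prop:map-equivalence} gives
\[
\Shat_{q,s}(A)\ge\Shat_{q,1}(A)=d_q(\cS_A).
\]
This identity holds because $q\le\rank A$, which feasibility guarantees.

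\textbf{Second bound.} For the Griesmer term, we view $U$ as a binary linear code punctured to its support. Concretely, restricting each $y\in U$ to the coordinates in $\supp U$ is injective, since we only delete coordinates where all of $U$ vanishes. This gives a binary $[\ell,q,d]$ code with $\ell=\abs{\supp U}$. Every nonzero $y\in U$ satisfies $\lambda_A(y)=t$ for some $t\ge s$, so by \eqref{eq:map-profiles} we have $\abs y\ge\Phi_A(t)\ge\Sigma_A(s)$. Hence $d\ge\Sigma_A(s)$. The binary Griesmer bound \cite{griesmer1960} then gives
\[
\ell\ge\sum_{i=0}^{q-1}\ceil{d/2^i}\ge\mathsf G_q(\Sigma_A(s)),
\]
using that $\mathsf G_q$ is nondecreasing in its argument. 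Note that $\Sigma_A(s)$ is finite whenever $(q,s)$ is feasible, since any nonzero $y\in U$ witnesses a finite $\Phi_A(t)$ with $t\ge s$.

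There is no real obstacle; the argument is a direct assembly of existing facts. The only points needing care are the following.
\begin{enumerate}
\item The punctured code keeps dimension $q$ and keeps the weights of all its words.
\item Each weight is bounded by the monotone envelope $\Sigma_A(s)$, not by $\Phi_A(s)$ alone, because localization may exceed $s$.
\item The Griesmer bound applies to the actual minimum distance $d$, and is then weakened to $\Sigma_A(s)$ by monotonicity.
\end{enumerate}
Taking the maximum of the two lower bounds yields \eqref{eq:map-profile-griesmer}.
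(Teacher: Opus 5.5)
Your proof is correct and matches the paper's argument: puncture a feasible $U$ to its support, use that every nonzero element has localization $t\ge s$ to get minimum distance at least $\Sigma_A(s)$, and apply the Griesmer bound together with monotonicity of $\mathsf G_q$. The only difference is cosmetic: you obtain the $d_q(\im A)$ term through monotonicity and $\Shat_{q,1}(A)=d_q(\cS_A)$, whereas the paper reads it off the definition of $d_q$ directly, since $U$ is already a $q$-dimensional subcode of $\im A$.
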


\begin{proof}
Puncturing outside $\supp U$ turns a feasible $q$-space into a binary code
of length $\abs{\supp U}$ and distance at least $\Sigma_A(s)$.  Apply the
Griesmer bound and the definition of $d_q(\im A)$, then minimize over $U$.
\end{proof}

\begin{theorem}[Profile-Griesmer lower bound for cofilling shattering]
\label{thm:griesmer-shat}
Assume $\cC_X^{k+1}\ne0$.  For every $q,s\ge1$ for which
$\Shat_{q,s}^k(X)<\infty$,
\begin{align}
 \Shat_{q,s}^k(X)
 &\ge
 \max\set{
 d_q(\cC_X^{k+1}),
 \mathsf G_q\bigl(\Sigma_X^k(s)\bigr)
 }
 \label{eq:profile-griesmer-shat}\\
 &\ge
 \max\set{
 d_q(\cC_X^{k+1}),
 \mathsf G_q\bigl(\ceil{h^k(X)s}\bigr)
 }.
 \label{eq:griesmer-shat}
\end{align}
\end{theorem}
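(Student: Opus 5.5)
The first inequality \eqref{eq:profile-griesmer-shat} is the specialization of Proposition~\ref{prop:map-profile-griesmer} to $A=\delta_k$, using the notation \eqref{eq:simplicial-map-specialization}. For completeness, I would still unpack the argument. Fix a feasible $U\le\cC_X^{k+1}$ with $\dim U=q$ and $\abs{\supp U}=\Shat_{q,s}^k(X)$.

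\begin{itemize}
\item \textbf{Generalized-weight term.} Since $U$ is a $q$-dimensional subcode, $\abs{\supp U}\ge d_q(\cC_X^{k+1})$ directly.
\item \textbf{Minimum weight of $U$.} Every nonzero $c\in U$ has $\lambda_X^k(c)=t$ for some $t\ge s$. Hence $\abs c\ge\Phi_X^k(t)\ge\Sigma_X^k(s)$.
\item \textbf{Griesmer term.} Puncture all coordinates outside $\supp U$. The resulting code has length $\abs{\supp U}$, dimension $q$, and minimum distance at least $\Sigma_X^k(s)$. The puncturing is injective on $U$ because no nonzero codeword is supported off $\supp U$.
\item \textbf{Monotonicity.} The binary Griesmer bound \cite{griesmer1960} gives $\abs{\supp U}\ge\mathsf G_q(d)$ for the true distance $d$. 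Since $\mathsf G_q$ is nondecreasing in $d$, this yields $\abs{\supp U}\ge\mathsf G_q(\Sigma_X^k(s))$.
\end{itemize}

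Feasibility guarantees that $U\ne0$ exists, so $\Sigma_X^k(s)$ is finite and the bound is meaningful.

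For the second inequality \eqref{eq:griesmer-shat}, it suffices to show $\Sigma_X^k(s)\ge\ceil{h^k(X)s}$ and again invoke monotonicity of $\mathsf G_q$. By \eqref{eq:h-profile}, every $t$ with $\Phi_X^k(t)<\infty$ satisfies $\Phi_X^k(t)\ge h^k(X)\,t$. Hence for each $t\ge s$ we get $\Phi_X^k(t)\ge h^k(X)\,s$, and this also holds trivially when $\Phi_X^k(t)=\infty$. This uses $h^k(X)\ge0$, which is immediate from its definition as a ratio of nonnegative quantities. The hypothesis $\cC_X^{k+1}\ne0$ ensures that $\delta_k\ne0$, so the minimum defining $h^k(X)$ is over a nonempty set and is finite. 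Since $\Phi_X^k(t)$ is an integer, $\Phi_X^k(t)\ge\ceil{h^k(X)s}$. Taking the minimum over $t\ge s$ gives $\Sigma_X^k(s)\ge\ceil{h^k(X)s}$.

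The main point needing care is bookkeeping rather than a real obstacle. One must confirm that \eqref{eq:h-profile} is an honest minimum over localizations with finite profile. One must also track that passing from $\Phi$ to the envelope $\Sigma$ and then to the ceiling only weakens the bound. Both follow from the integrality of weights and the monotonicity of $\mathsf G_q$.
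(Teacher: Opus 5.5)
Your proposal is correct and follows the paper's proof essentially verbatim. The first inequality is Proposition~\ref{prop:map-profile-griesmer} specialized to $A=\delta_k$: puncture to $\supp U$, apply the Griesmer bound, and use that $\mathsf G_q$ is nondecreasing. The second comes from $\Phi_X^k(t)\ge h^k(X)t\ge h^k(X)s$ for $t\ge s$, integrality of $\Phi_X^k(t)$, and the same monotonicity of $\mathsf G_q$.
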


\begin{proof}
Equation~\eqref{eq:profile-griesmer-shat} is
Proposition~\ref{prop:map-profile-griesmer} applied to $A=\delta_k$.
Finally, if $t\ge s$ then
$\Phi_X^k(t)\ge h^k(X)t\ge h^k(X)s$ by \eqref{eq:h-profile}.  Since
$\Phi_X^k(t)$ is integral, $\Sigma_X^k(s)\ge\ceil{h^k(X)s}$.  The function
$\mathsf G_q$ is nondecreasing, proving \eqref{eq:griesmer-shat}.
\end{proof}

The first inequality retains the complete one-defect geometry rather than only its
smallest slope.  The second converts an ordinary expansion constant into a
multi-dimensional erasure bound.  The ceiling structure matters: even when every
nonzero syndrome has weight only two, linearity forces a $q$-dimensional defect space
to occupy at least $q+1$ coordinates.

\subsection{A coarse macroscopic existence window}

The preceding results are lower bounds.  For an upper bound, we first choose a
subcode with small top-face support and then find within it a quotient subspace that
avoids all short vectors.  The resulting density window is qualitative and generally
nonsharp; it establishes a nontrivial linear scale rather than competitive constants
for known constructions.  Write
\[
 n=\abs{X(k)},\qquad m=\abs{X(k+1)},\qquad
 r=\rank\delta_k=\dim\cC_X^{k+1}.
\]
For a $k$-face $\sigma$, its upper degree is
$\deg_X(\sigma)=\abs{\{\tau\in X(k+1):\sigma\subset\tau\}}$.

\begin{lemma}[Rank density from bounded incidence]
\label{lem:rank-density}
Suppose $X$ is pure and $\deg_X(\sigma)\le D$ for every $\sigma\in X(k)$.  Then
\begin{equation}\label{eq:rank-density}
 r\ge \frac{n}{D(k+1)+1}
 \qquad\text{and}\qquad
 n\ge\frac{k+2}{D}\,m.
\end{equation}
Consequently,
\[
 r\ge \frac{k+2}{D\bigl(D(k+1)+1\bigr)}\,m.
\]
\end{lemma}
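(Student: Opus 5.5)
We prove the two inequalities in \eqref{eq:rank-density} separately. The product bound then follows by substituting the second into the first.

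\emph{The second inequality} is a double count of the incidences between $k$-faces and top faces. Since $X$ has dimension $k+1$, every top face $\tau\in X(k+1)$ has exactly $k+2$ facets, all in $X(k)$. So the number of incident pairs $(\sigma,\tau)$ with $\sigma\subset\tau$ equals $(k+2)m$. Each $\sigma$ lies in at most $D$ top faces, so the same count is at most $Dn$. This gives $n\ge (k+2)m/D$. Purity is not needed here; it only guarantees that every $k$-face has degree at least one.

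\emph{The first inequality} bounds the rank of $\delta_k$ from below by building an independent set of rows. Equivalently, we find many $k$-cochains whose coboundaries are linearly independent. We choose greedily a set $S\subseteq X(k)$ of $k$-faces together with, for each $\sigma\in S$, a top face $\tau_\sigma\supset\sigma$ (it exists by purity). These are chosen so that $\tau_\sigma$ contains no other face of $S$. The $|S|\times|S|$ submatrix of $\delta_k$ with rows $\tau_\sigma$ and columns $S$ is then the identity, because row $\tau_\sigma$ has a $1$ in column $\sigma'$ exactly when $\sigma'\subset\tau_\sigma$, and $\sigma$ is the only such face in $S$. Hence $r\ge|S|$.

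The greedy step runs as follows. Pick any remaining $\sigma$ and any $\tau\supset\sigma$. Then delete from further consideration every $k$-face that lies in some top face containing $\sigma$. There are at most $D$ such top faces, each with $k+1$ other facets, so at most $D(k+1)+1$ faces are removed, including $\sigma$ itself. Thus the procedure runs at least $n/(D(k+1)+1)$ times.

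The main obstacle is ensuring the identity pattern holds in both directions. A later-chosen $\sigma'$ must not lie in an earlier $\tau_\sigma$; removing all faces that share a top face with $\sigma$ ensures this. Conversely, the later-chosen $\tau_{\sigma'}$ must not contain an earlier $\sigma$; this holds too, because if $\sigma\subset\tau_{\sigma'}$, then $\sigma'$ shares the top face $\tau_{\sigma'}$ with $\sigma$ and would already have been removed. So the removal rule is symmetric. Alternatively, one can observe that the resulting submatrix is triangular, which is enough for independence.

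Combining the two bounds gives $r\ge n/(D(k+1)+1)\ge (k+2)m/\bigl(D(D(k+1)+1)\bigr)$.
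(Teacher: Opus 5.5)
Your proof is correct and follows essentially the same route as the paper. Your greedy removal rule builds an independent set in the paper's conflict graph on $X(k)$ (two faces adjacent when they share a top face, maximum degree at most $D(k+1)$), and your double count $(k+2)m=\sum_\sigma\deg_X(\sigma)\le Dn$ is the paper's. The only difference is cosmetic: the paper deduces independence from the pairwise disjoint nonzero supports of the columns $\delta_k\mathbf{1}_\sigma$, while you exhibit the corresponding identity minor explicitly.
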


\begin{proof}
Form a conflict graph on $X(k)$ by joining two distinct $k$-faces when they lie in a
common top face.  Its maximum degree is at most $D(k+1)$, so a greedy independent-set
argument gives an independent set $I$ of size at least $n/(D(k+1)+1)$.  Purity makes
$\delta_k\1_\sigma$ nonzero for every $\sigma\in I$, and independence of $I$ makes the
supports of these coboundaries pairwise disjoint.  They are therefore linearly
independent columns of $\delta_k$, proving the first inequality.  Double-counting
incidences gives
\[
 (k+2)m=\sum_{\sigma\in X(k)}\deg_X(\sigma)\le Dn,
\]
which proves the second inequality and the consequence.
\end{proof}

For integers $n,s\ge1$, put
\[
 \mathsf B(n,s)=\sum_{j=1}^{\min\{s-1,n\}}\binom{n}{j},
\]
with an empty sum equal to zero.  The next theorem is a finite
Gilbert--Varshamov-type avoidance statement inside an optimally supported subcode.  Its
second inequality is the generalized Singleton bound, included for completeness.

\begin{theorem}[Support--localization avoidance]
\label{thm:localized-existence}
Let $1\le q\le t\le r$ and $s\ge1$.  If
\begin{equation}\label{eq:localized-existence-condition}
 \mathsf B(n,s)\,(2^q-1)<2^t-1,
\end{equation}
then
\begin{equation}\label{eq:localized-existence-conclusion}
 \Shat_{q,s}^k(X)
 \le d_t(\cC_X^{k+1})
 \le m-r+t.
\end{equation}
Equivalently, there is a $q$-dimensional subspace of $C^k(X)/Z^k(X)$ whose nonzero
cosets all have distance at least $s$ from $Z^k(X)$ and whose coboundaries have common
support of size at most $d_t(\cC_X^{k+1})$.
\end{theorem}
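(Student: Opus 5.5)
Choose a $t$-dimensional subcode $W\le\cC_X^{k+1}$ with $\abs{\supp W}=d_t(\cC_X^{k+1})$; this is possible because $t\le r$.  Pull $W$ back to the quotient side.  By Theorem~\ref{thm:general-erasure-dictionary} applied with $F=\supp W$, the coboundary map induces an isometric isomorphism between the $t$-dimensional space $\widetilde W=\delta_k^{-1}(W)/Z^k(X)\le C^k(X)/Z^k(X)$ and $W$, under which the quotient weight $\dist(\alpha,Z^k(X))$ corresponds to $\lambda_X^k(\delta_k\alpha)$.  The goal is then to find a $q$-dimensional subspace $V\le W$ all of whose nonzero elements have $\lambda_X^k\ge s$.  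Any such $V$ has $\supp V\subseteq\supp W$, so by Theorem~\ref{thm:shat-code}
\[
 \Shat_{q,s}^k(X)\le\abs{\supp V}\le d_t(\cC_X^{k+1}).
\]

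\textbf{Counting bad elements of $W$.}  Let $\mathcal B=\{0\ne c\in W:\lambda_X^k(c)\le s-1\}$.  Each $c\in\mathcal B$ equals $\delta_k\alpha$ for some $\alpha$ with $1\le\abs\alpha\le\min\{s-1,n\}$.  The weight is at least $1$ because $c\ne0$ forces $\alpha\ne0$.  Sending each such $c$ to one chosen minimum preimage is injective, so $\abs{\mathcal B}\le\mathsf B(n,s)$.

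\textbf{Greedy or averaging avoidance.}  We need a $q$-dimensional subspace of the $t$-dimensional space $W$ that meets $\mathcal B$ only trivially.  I would use the standard averaging argument.  A fixed nonzero vector lies in a uniformly random $q$-dimensional subspace of $W$ with probability $(2^q-1)/(2^t-1)$.  The expected number of bad vectors in that subspace is therefore at most $\mathsf B(n,s)(2^q-1)/(2^t-1)$, which is $<1$ by \eqref{eq:localized-existence-condition}.  Hence some $q$-dimensional $V$ contains no element of $\mathcal B$.  Alternatively, one can build $V$ greedily one basis vector at a time, each step avoiding the at most $\abs{\mathcal B}\,2^{j}$ forbidden vectors; but the averaging bound matches the stated hypothesis exactly, so I would use it.

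\textbf{Remaining pieces.}  The second inequality $d_t\le m-r+t$ is the generalized Singleton bound, Corollary~\ref{cor:eta-basic} combined with Theorem~\ref{thm:eta-generalized-weight}.  The ``equivalently'' clause is the preimage of $V$ in $\widetilde W$, transported by the isometry above.  The only point needing care is the counting step: the forbidden set must be bounded in domain coordinates, using nonzero minimum-weight preimages, rather than in the top-face coordinates.  Once that is in place, the rest is routine.
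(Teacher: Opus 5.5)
Your proof is correct and follows the paper's route: choose $W$ attaining $d_t(\cC_X^{k+1})$, bound the nonzero elements of localization below $s$ by $\mathsf B(n,s)$ through an injective choice of minimum preimages, and use the containment probability $(2^q-1)/(2^t-1)$ with a union bound. The only differences are cosmetic: you count the bad vectors directly in $W$ rather than in its quotient-side preimage, and you cite the generalized Singleton bound from Corollary~\ref{cor:eta-basic} instead of rederiving it from an information set.
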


\begin{proof}[Proof idea]
Choose a $t$-dimensional subcode $W$ attaining $d_t(\cC_X^{k+1})$ and transport it
through the quotient isomorphism
$C^k(X)/Z^k(X)\cong\cC_X^{k+1}$.  At most $\mathsf B(n,s)$ nonzero quotient classes in
this $t$-space have norm below $s$.  A uniformly random $q$-subspace contains any fixed
nonzero vector with probability $(2^q-1)/(2^t-1)$, so the stated inequality and a union
bound produce a subspace avoiding every short class.  Its image remains supported inside
$\supp W$.  The final inequality is the generalized Singleton bound obtained from an
information set.  The full counting argument is in
Appendix~\ref{app:localized-existence}.
\end{proof}

\begin{corollary}[Support--localization sandwich]
\label{cor:support-localization-sandwich}
If $q\le t\le r$ satisfies \eqref{eq:localized-existence-condition}, then
\begin{equation}\label{eq:support-localization-sandwich}
 \max\left\{
 d_q(\cC_X^{k+1}),
 \mathsf G_q\bigl(\Sigma_X^k(s)\bigr)
 \right\}
 \le \Shat_{q,s}^k(X)
 \le d_t(\cC_X^{k+1})
 \le m-r+t.
\end{equation}
Minimizing the middle upper bound over admissible $t$ gives a canonical
generalized-weight upper envelope for the hierarchy.
\end{corollary}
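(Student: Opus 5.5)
The corollary combines two results already proved, so the plan is to concatenate them and then remark on the minimization. For the lower half of the sandwich I will invoke Theorem~\ref{thm:griesmer-shat}, specifically its first line \eqref{eq:profile-griesmer-shat}. That inequality requires $\Shat_{q,s}^k(X)<\infty$ and $\cC_X^{k+1}\ne0$, so I first establish feasibility from the hypothesis.

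\textbf{Feasibility.} Under \eqref{eq:localized-existence-condition} with $q\le t\le r$, Theorem~\ref{thm:localized-existence} gives $\Shat_{q,s}^k(X)\le d_t(\cC_X^{k+1})$. The right side is finite because $t\le r=\dim\cC_X^{k+1}$. In particular $r\ge1$, so $\cC_X^{k+1}\ne0$ and both hypotheses of Theorem~\ref{thm:griesmer-shat} hold.

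\textbf{The two halves.} The leftmost inequality of \eqref{eq:support-localization-sandwich} is now \eqref{eq:profile-griesmer-shat} verbatim; equivalently, Proposition~\ref{prop:map-profile-griesmer} applied to $A=\delta_k$. The remaining two inequalities are exactly \eqref{eq:localized-existence-conclusion}. Chaining these gives the displayed sandwich.

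\textbf{The envelope.} The set $T$ of admissible $t$ (those with $q\le t\le r$ satisfying \eqref{eq:localized-existence-condition}) is a finite set of integers, nonempty by hypothesis. The upper bound therefore holds for every $t\in T$, and the minimum $\min_{t\in T}d_t(\cC_X^{k+1})$ is attained and still bounds $\Shat_{q,s}^k(X)$ from above. This envelope is canonical because it depends only on the code $\cC_X^{k+1}$, the parameters $(q,s)$, and $n=|X(k)|$.

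It is also worth noting that the condition is upward closed in $t$, since $2^t-1$ increases with $t$. Hence $T$ is an interval $\{t_0,\dots,r\}$. Because $d_t$ is strictly increasing in $t$ by Corollary~\ref{cor:eta-basic} (via Theorem~\ref{thm:eta-generalized-weight}), the minimum is $d_{t_0}$, where $t_0$ is the least admissible $t$. Nothing here is a real obstacle. The only point needing care is verifying the finiteness and nonvanishing hypotheses before invoking Theorem~\ref{thm:griesmer-shat}, which the first step handles.
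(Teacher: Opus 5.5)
Your proposal is correct and follows the paper's argument, which is implicit: the corollary is the concatenation of the profile--Griesmer lower bound in Theorem~\ref{thm:griesmer-shat} with the upper bounds in Theorem~\ref{thm:localized-existence}, and you rightly check first that $\Shat_{q,s}^k(X)<\infty$ and $\cC_X^{k+1}\ne0$ so that the former applies. Your added observation also holds: the admissible $t$ form an interval $\{t_0,\dots,r\}$, so by strict monotonicity of generalized weights the envelope equals $d_{t_0}(\cC_X^{k+1})$, which sharpens the paper's closing remark.
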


For $0\le x\le1$, let
\[
 H_2(x)=-x\log_2x-(1-x)\log_2(1-x),
\]
with the usual endpoint convention.

\begin{theorem}[Coarse macroscopic bounded-degree window]
\label{thm:macroscopic-shattering}
Fix $k\ge0$ and $D\ge1$.  Let $(X_i)$ be a sequence of finite pure
$(k+1)$-complexes such that, with
\[
 n_i=\abs{X_i(k)},\qquad m_i=\abs{X_i(k+1)},
\]
one has $n_i\to\infty$ and every $k$-face has upper degree at most $D$.  Set
\[
 \rho=\frac{1}{D(k+1)+1}.
\]
If $0<\beta<1/2$ and $\alpha>0$ satisfy
\begin{equation}\label{eq:macroscopic-region}
 H_2(\beta)+\alpha<\rho,
\end{equation}
and
\[
 q_i=\floor{\alpha n_i},\qquad s_i=\ceil{\beta n_i},
\]
then $\Shat_{q_i,s_i}^k(X_i)$ is finite for all sufficiently large $i$ and
\begin{equation}\label{eq:macroscopic-density-bounds}
 \frac{k+2}{D}\,\alpha
 \le
 \liminf_{i\to\infty}\frac{\Shat_{q_i,s_i}^k(X_i)}{m_i}
 \le
 \limsup_{i\to\infty}\frac{\Shat_{q_i,s_i}^k(X_i)}{m_i}
 \le
 1-\frac{k+2}{D}\bigl(\rho-H_2(\beta)-\alpha\bigr)
 <1.
\end{equation}
In particular, $\Shat_{q_i,s_i}^k(X_i)=\Theta(m_i)$.  If, in addition,
$h^k(X_i)\ge\varepsilon>0$ uniformly, then the leftmost term in
\eqref{eq:macroscopic-density-bounds} may be replaced by
\begin{equation}\label{eq:macroscopic-expanded-lower}
 \frac{k+2}{D}\max\{\alpha,\varepsilon\beta\}.
\end{equation}
\end{theorem}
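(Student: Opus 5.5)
The plan is to combine the support--localization avoidance theorem (Theorem~\ref{thm:localized-existence}) for the upper bound with simple counting for the lower bounds. Lemma~\ref{lem:rank-density} supplies the needed dimensions: $r_i\ge\rho n_i$ and $n_i\ge\frac{k+2}{D}m_i$.

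\textbf{Upper bound and finiteness.} Fix $\gamma$ with $0<\gamma<\rho-H_2(\beta)-\alpha$ and set $t_i=\ceil{(H_2(\beta)+\alpha+\gamma)n_i}$. For large $i$ we have $q_i\le t_i\le\rho n_i\le r_i$, so $t_i$ is admissible. The standard entropy estimate $\sum_{j\le\beta n}\binom nj\le2^{H_2(\beta)n}$ holds for $\beta<1/2$. Since $s_i-1\le\beta n_i$, it gives $\mathsf B(n_i,s_i)\le 2^{H_2(\beta)n_i}$, and therefore
\[
 \mathsf B(n_i,s_i)(2^{q_i}-1)<2^{(H_2(\beta)+\alpha)n_i}\le 2^{t_i-\gamma n_i}<2^{t_i}-1
\]
for large $i$. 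Theorem~\ref{thm:localized-existence} then shows that $\Shat^k_{q_i,s_i}(X_i)$ is finite and at most $m_i-r_i+t_i$. Using $r_i\ge\rho n_i$ yields $m_i-r_i+t_i\le m_i-(\rho-H_2(\beta)-\alpha-\gamma)n_i+1$. Next use $n_i\ge\frac{k+2}{D}m_i$; the coefficient of $n_i$ is negative, so this substitution goes in the right direction. Dividing by $m_i$, taking the limsup, and letting $\gamma\to0$ gives the stated upper bound. It is $<1$ because the bracket is positive.

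\textbf{Lower bounds.} Trivially $\Shat_{q,s}\ge d_q\ge q$, since a $q$-dimensional space needs at least $q$ support coordinates. Hence $\Shat^k_{q_i,s_i}(X_i)/m_i\ge\floor{\alpha n_i}/m_i$. Lower-bounding $n_i/m_i$ by $(k+2)/D$ gives the liminf bound $\frac{k+2}{D}\alpha$; the floor error vanishes because $m_i\to\infty$ (purity and $n_i\to\infty$ force $m_i\ge n_i/(k+2)$). For the expander refinement, Theorem~\ref{thm:griesmer-shat} gives $\Shat\ge\mathsf G_q(\ceil{h^k s})\ge\ceil{h^k s}\ge\varepsilon\ceil{\beta n_i}$, using only the $i=0$ Griesmer term. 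Dividing by $m_i$ and using the same ratio bound gives $\frac{k+2}{D}\varepsilon\beta$. Taking the maximum with the previous bound proves \eqref{eq:macroscopic-expanded-lower}. The $\Theta(m_i)$ statement follows from the two-sided bounds.

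\textbf{Main obstacle.} The delicate step is the upper-bound bookkeeping: choosing $t_i$ with enough slack $\gamma n_i$ that the integer parts and the $-1$ terms in \eqref{eq:localized-existence-condition} are absorbed, while staying below $r_i$. One must also check that the substitution of $n_i$ by $\frac{k+2}{D}m_i$ is made only where the coefficient sign allows it. The entropy estimate requires $\beta<1/2$, which is assumed.
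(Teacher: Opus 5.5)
Your proposal is correct and follows the paper's proof essentially step for step. Your slack parameter $\gamma$ plays the role of the paper's intermediate rate $\tau=H_2(\beta)+\alpha+\gamma\in(H_2(\beta)+\alpha,\rho)$, the entropy bound on $\mathsf B(n_i,s_i)$ feeds Theorem~\ref{thm:localized-existence} together with its Singleton term $m_i-r_i+t_i$, and the lower bounds come from $d_{q_i}\ge q_i$ and the first Griesmer term $\ceil{h^k s_i}$, all converted through $n_i/m_i\ge(k+2)/D$ and $m_i\to\infty$.
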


\begin{proof}[Proof idea]
Lemma~\ref{lem:rank-density} supplies $r_i\ge\rho n_i$.  Choose an intermediate dimension
$t_i\sim\tau n_i$ with $H_2(\beta)+\alpha<\tau<\rho$.  The entropy estimate for the
Hamming ball makes the avoidance condition in
Theorem~\ref{thm:localized-existence} hold, while its Singleton term and the incidence
comparison $n_i/m_i\ge(k+2)/D$ give the strict upper-density bound.  The lower-density
bound comes from $d_{q_i}\ge q_i$; uniform expansion adds
$\Sigma_X^k(s_i)\ge\varepsilon s_i$.  Appendix~\ref{app:macroscopic-shattering}
contains the limit calculation and endpoint bookkeeping.
\end{proof}

\begin{corollary}[Explicit bounded-degree cosystolic expanders]
\label{cor:explicit-cosystolic}
Suppose $(X_i)$ is a family of pure $(k+1)$-complexes with $|X_i(k)|\to\infty$ and
upper degree at most $D$ that is uniformly $(\varepsilon_0,\mu)$-cosystolically
expanding in degree $k$ under the uniform measure on top faces.  Put
$\rho=1/(D(k+1)+1)$.  Then, for every $\alpha,\beta$ satisfying
\eqref{eq:macroscopic-region},
\begin{align}\label{eq:explicit-density-bounds}
 \frac{k+2}{D}
 \max\left\{\alpha,\frac{\varepsilon_0\beta}{k+2}\right\}
 &\le
 \liminf_{i\to\infty}
 \frac{\Shat_{\floor{\alpha\abs{X_i(k)}},
                   \ceil{\beta\abs{X_i(k)}}}^k(X_i)}{\abs{X_i(k+1)}}
 \notag\\
 &\le
 \limsup_{i\to\infty}
 \frac{\Shat_{\floor{\alpha\abs{X_i(k)}},
                   \ceil{\beta\abs{X_i(k)}}}^k(X_i)}{\abs{X_i(k+1)}}
 \notag\\
 &\le
 1-\frac{k+2}{D}\left(\rho-H_2(\beta)-\alpha\right).
\end{align}
Consequently, every fixed dimension admits explicit families and a nonempty open set of
positive pairs $(\alpha,\beta)$ for which the shattering density is bounded uniformly
away from zero and one.  Evra and Kaufman provide one verified source of such families
\cite{evrakaufman2024}.  In arXiv:1510.00839v3, their Definition~1.5 (equivalently
Definition~2.1) assigns a $j$-face $\sigma$ the weight
\[
 w_X(\sigma)=
 \frac{\deg_X(\sigma)}{\binom{d+1}{j+1}\abs{X(d)}}.
\]
Equivalently, this is the probability that $\sigma$ lies in a uniformly chosen top
$d$-face, divided by $\binom{d+1}{j+1}$.  Thus $j=k$ and $d=k+1$ give
$\deg_X(\sigma)/((k+2)m)$, exactly the balanced measure in
\eqref{eq:induced-weight}.  Their Corollary~6.3 gives cosystolic expansion for the
$d$-dimensional skeletons of suitable $(d+1)$-dimensional Ramanujan complexes; its
Ramanujan-complex hypotheses are verified by Theorem~6.1, and the comparison between
ambient and skeleton norms under bounded degree is quantified in Lemma~2.10.  Taking
$d=k+1$ therefore matches both the dimension and the top-face normalization assumed
here.  Other constructions fall under the corollary only after the stated measure and
upper-degree hypotheses are checked.
\end{corollary}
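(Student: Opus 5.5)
The plan is to derive Corollary~\ref{cor:explicit-cosystolic} from Theorem~\ref{thm:macroscopic-shattering}. The upper-density bound and the $\alpha$-part of the lower bound need only bounded upper degree, purity, and $|X_i(k)|\to\infty$, and these are hypotheses here. Applying the theorem with $q_i=\floor{\alpha n_i}$ and $s_i=\ceil{\beta n_i}$ then gives finiteness for large $i$, the $\limsup$ bound $1-\frac{k+2}{D}(\rho-H_2(\beta)-\alpha)$, and the liminf bound $\frac{k+2}{D}\alpha$. What remains is the $\varepsilon_0\beta/(k+2)$ term, which must come from cosystolic expansion.

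\textbf{Step 1: convert the weighted expansion to an unnormalized one on large cochains.} Cosystolic expansion in degree $k$ under the balanced measure \eqref{eq:induced-weight} gives, for every $\alpha\notin Z^k(X_i)$,
\[
 \norm{\delta_k\alpha}_{k+1}\ge\varepsilon_0\min_{z\in Z^k}\norm{\alpha+z}_k .
\]
In some conventions the distance is instead taken to $B^k$ with a separate cosystole bound $\mu$; I will handle that variant below. With the uniform top measure $\pi\equiv 1/m_i$ we have $\norm{c}_{k+1}=\abs c/m_i$ and $w_k(\sigma)=\deg(\sigma)/((k+2)m_i)$. Purity gives $\deg\ge1$, so $\norm{\alpha}_k\ge\abs{\alpha}/((k+2)m_i)$. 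Hence
\[
 \abs{\delta_k\alpha}\ge\frac{\varepsilon_0}{k+2}\dist(\alpha,Z^k),
\]
that is, $h^k(X_i)\ge\varepsilon_0/(k+2)$. Substituting $\varepsilon=\varepsilon_0/(k+2)$ into \eqref{eq:macroscopic-expanded-lower} yields exactly $\frac{k+2}{D}\max\{\alpha,\varepsilon_0\beta/(k+2)\}$.

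\textbf{Step 2: the cosystolic variant (main obstacle).} If the definition measures distance to $Z^k$ for non-cocycles, Step~1 is complete. If the expansion is only cosystolic, the rescaled inequality still holds whenever the distance is taken to $Z^k$ itself, which is what $h^k$ requires. Alternatively, I would bypass $h^k$ and bound the profile envelope directly. Any $c$ with $\lambda(c)\ge s_i$ has a coset leader $\alpha$ with $\abs\alpha=\lambda(c)$. The weighted bound then gives $\abs c\ge\frac{\varepsilon_0}{k+2}\lambda(c)\ge\frac{\varepsilon_0}{k+2}s_i$, so $\Sigma(s_i)\ge\varepsilon_0 s_i/(k+2)$. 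Combining this with $\Shat\ge\Sigma$ (monotonicity in $q$) and $m_i\le Dn_i/(k+2)$ gives the same liminf. This route uses only the first inequality of Theorem~\ref{thm:griesmer-shat} with $\mathsf G_q\ge$ its first term, so the $\mu$-cosystole condition is not needed.

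\textbf{Step 3: the qualitative consequences.} Since $H_2(\beta)\to0$ as $\beta\to0$, the region \eqref{eq:macroscopic-region} contains a nonempty open set of positive pairs $(\alpha,\beta)$. On this set both bounds are strictly inside $(0,1)$. The Evra--Kaufman families supply instances once their measure is identified with \eqref{eq:induced-weight} at $d=k+1$, as the statement explains. They have bounded degree because Ramanujan complexes are finite quotients of a locally finite building, and they satisfy $|X_i(k)|\to\infty$.
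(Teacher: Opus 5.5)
Your proposal is correct and follows the paper's proof. You rescale the first cosystolic inequality under the uniform top-face measure, using purity ($\deg_X\ge1$), to get $h^k(X_i)\ge\varepsilon_0/(k+2)$. You then apply Theorem~\ref{thm:macroscopic-shattering} with $\varepsilon=\varepsilon_0/(k+2)$, and you get nonemptiness of the region from $H_2(\beta)\to0$.

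Your Step~2 detour is unnecessary: Definition~\ref{def:cosystolic} already measures distance to $Z^k$, and the $\mu$-condition is never used. The direct bound on $\Sigma$ you give there is the same rescaling without naming $h^k$, and it is also valid.
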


\begin{proof}
For the uniform top-face measure, a $k$-face $\sigma$ has induced weight
$\deg_X(\sigma)/((k+2)m)$.  Purity therefore gives, for every cochain $\alpha$,
\[
 \dist_{w_k}(\alpha,Z^k(X))
 \ge\frac{1}{(k+2)m}\dist(\alpha,Z^k(X)).
\]
The weighted expansion inequality becomes
\[
 \frac{\abs{\delta_k\alpha}}{m}
 \ge\varepsilon_0\dist_{w_k}(\alpha,Z^k(X))
 \ge\frac{\varepsilon_0}{(k+2)m}\dist(\alpha,Z^k(X)),
\]
so $h^k(X)\ge\varepsilon_0/(k+2)$.  Theorem~\ref{thm:macroscopic-shattering}
applies.  The entropy region is nonempty because $H_2(\beta)\to0$ as
$\beta\downarrow0$.
\end{proof}

\begin{remark}[Quantitative limitation]\label{rem:macroscopic-limitation}
The constants in Theorem~\ref{thm:macroscopic-shattering} are deliberately coarse.  The
rank estimate uses only $r\ge n/(D(k+1)+1)$, and the upper bound ends with generalized
Singleton.  When $D$ is large, the admissible entropy region can force $\beta$ to be very
small and the upper density can be close to one.  The theorem therefore establishes
linear scaling and a nonempty feasibility window; it does not claim numerically strong
resilience constants for currently known high-dimensional expanders.  Improving the
rank density or the generalized-weight upper envelope is a separate problem.
\end{remark}

\begin{remark}[Finite-field extension]\label{rem:qary}
The exact erasure--code dictionary, the rank hierarchy, the syndrome-space definition,
and the profile--Griesmer argument extend over $\mathbb F_Q$ after restoring orientation
signs.  The counting step in Theorem~\ref{thm:localized-existence} is analogous but not
literally verbatim: one replaces
\[
 \mathsf B(n,s)
 \quad\text{by}\quad
 \mathsf B_Q(n,s)=\sum_{j=1}^{\min\{s-1,n\}}\binom nj(Q-1)^j
\]
and the containment probability by $(Q^q-1)/(Q^t-1)$.  The Griesmer function becomes
$\sum_{i=0}^{q-1}\ceil{d/Q^i}$, and the asymptotic estimate uses the $Q$-ary Hamming-ball
entropy.  We retain $\F$ to match the main high-dimensional-expander applications and to
keep the support combinatorics unsigned.
\end{remark}

\section{Graph cuts as Fourier-balanced multiway cuts}
\label{sec:structural}

The graph case makes every component of the definition visible: quotient distance is
the smaller side of a cut, subspace dimension is the affine rank of a label set, and
common erasure support is the union of all character cuts.  A closing remark records a
relative CSS check-loss interpretation without treating it as a post-deletion code
performance parameter.

Let $G=(V,E)$ be a connected graph on $n$ vertices and let
$0=\lambda_1(L_G)\le\lambda_2(L_G)\le\cdots\le\lambda_n(L_G)$ be the eigenvalues of
its unnormalized Laplacian.  For a labeling $x:V\to\F^q$, put
\[
 E_x=\{uv\in E:x_u\ne x_v\},
 \qquad
 S_t(x)=\{v\in V:t\cdot x_v=1\}
 \quad(0\ne t\in\F^q).
\]
Write $\operatorname{affspan}x(V)$ for the affine span of the used labels.

\begin{theorem}[Fourier-balanced multiway-cut formulation]
\label{thm:fourier-balanced-cut}
For every connected graph $G$ and integers $q,s\ge1$,
\begin{equation}\label{eq:graph-labeling-formulation}
 \Shat_{q,s}^0(G)=
 \min\left\{\abs{E_x}:\begin{array}{l}
 x:V\to\F^q,\quad \operatorname{affspan}x(V)=\F^q,\\
 s\le\abs{S_t(x)}\le n-s\quad\text{for every }0\ne t\in\F^q
 \end{array}\right\},
\end{equation}
where the minimum of the empty set is $+\infty$.  Every feasible labeling satisfies
\begin{equation}\label{eq:graph-fourier-energy}
 \abs{E_x}=2^{1-q}\sum_{0\ne t\in\F^q}\abs{\partial_G S_t(x)}.
\end{equation}
Consequently, whenever the shattering parameter is finite,
\begin{equation}\label{eq:graph-spectral-bounds}
 \Shat_{q,s}^0(G)\ge
 \max\left\{
 2(1-2^{-q})\frac{\lambda_2(L_G)}{n}s(n-s),
 \frac12\sum_{i=2}^{q+1}\lambda_i(L_G)
 \right\}.
\end{equation}
\end{theorem}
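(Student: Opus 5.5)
The plan is to translate a feasible $q$-dimensional subspace $U\le\cC_G^1=\im\delta_0$ into a vertex labeling. We choose preimages $\alpha_1,\dots,\alpha_q\in C^0(G)$ of a basis of $U$ and set $x_v=(\alpha_1(v),\dots,\alpha_q(v))\in\F^q$. For $0\ne t\in\F^q$ the cochain $\sum_i t_i\alpha_i$ is the indicator of $S_t(x)$, and its coboundary is the cut $\partial_G S_t(x)$. Hence $U$ consists exactly of these cuts. Since $G$ is connected, $Z^0(G)=\{0,\1\}$, so Proposition~\ref{prop:graph-profile}'s computation gives $\lambda(\delta_0\1_S)=\min\{|S|,n-|S|\}$. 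The localization condition on every nonzero element is therefore exactly $s\le|S_t(x)|\le n-s$. In particular no $S_t$ is empty or everything, which says $\dim U=q$, i.e.\ no nonzero $t$ makes $t\cdot x$ constant. That is precisely the statement that the labels affinely span $\F^q$. The common support $\supp U$ is the union of the cuts $\partial S_t$. An edge $uv$ lies in some $\partial S_t$ iff $t\cdot(x_u+x_v)=1$ for some $t$, iff $x_u\ne x_v$. So $|\supp U|=|E_x|$.

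Conversely, any labeling with affine span $\F^q$ and the balance condition defines $U=\{\delta_0(t\cdot x)\}$ of dimension $q$ with the same support. Changing preimages or bases only composes $x$ with an affine bijection of $\F^q$, which preserves $|E_x|$ and the feasibility conditions. Minimizing over both sides gives \eqref{eq:graph-labeling-formulation}.

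For the Fourier identity \eqref{eq:graph-fourier-energy}, we count, for each edge, the number of nonzero $t$ with $t\cdot(x_u+x_v)=1$. This count is $2^{q-1}$ if $x_u\ne x_v$ and $0$ otherwise. Summing over edges gives $\sum_t|\partial S_t|=2^{q-1}|E_x|$.

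The spectral bounds come in two parts.

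\textbf{First bound.} Apply the standard cut bound $|\partial S|\ge\lambda_2|S|(n-|S|)/n$, obtained by testing the Rayleigh quotient with the centered indicator. Since $|S_t|\in[s,n-s]$, each term is at least $\lambda_2 s(n-s)/n$. Multiplying by the $2^q-1$ terms and the factor $2^{1-q}$ gives $2(1-2^{-q})\lambda_2 s(n-s)/n$.

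\textbf{Second bound.} Embed the labels as $\pm1$ vectors: set $f_t(v)=(-1)^{t\cdot x_v}$ and $F=(f_t)_{t\ne0}$. The Laplacian quadratic form gives $\sum_t f_t^\top L f_t=4\sum_t|\partial S_t|=2^{q+1}|E_x|$. The $2^q-1$ characters are orthogonal over $\F^q$ but not over $V$, so Ky Fan cannot be applied to them directly. Instead we use the $q$ coordinate characters $g_i(v)=(-1)^{x_{v,i}}$, or better a well-chosen $q$-dimensional subspace. Their span, modulo $\1$, has dimension $q$ by the affine-span condition, and $\sum_i g_i^\top L g_i=4\sum_i|\partial S_{e_i}|$.

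This step is the main obstacle: Ky Fan requires an orthonormal frame, not merely independent vectors. The plan is to average over all bases of $\F^q$, or equivalently over all $t$, using the Gram matrix. Concretely, work with the matrix $M=\sum_t f_tf_t^\top$. Parseval over $\F^q$ gives $M_{uv}=2^q[x_u=x_v]-1$, so on $\1^\perp$ the matrix $M$ equals $2^q$ times the projection-like matrix $P$ of the label partition. Then $\operatorname{tr}(LM)=2^q\operatorname{tr}(LP)$, where $P=\sum_{\text{classes }c}\1_c\1_c^\top$.

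Normalizing $P$ by the class sizes would give a projection and hence a Ky Fan bound, but the sizes are uneven, and that unevenness is what must be handled. So we instead pass to the $q$-dimensional frame: its Gram matrix is bounded by $n\cdot I$ entrywise on $\1^\perp$, which yields $\sum_i g_i^\top Lg_i\ge\ldots$. The constant $\tfrac12$ should emerge from $|E_x|\ge\frac12\cdot$ the Ky Fan sum after combining $|\partial S|\ge\frac14 g^\top Lg$ with a normalization $\|g\|^2\le n$ against the eigenvalue scaling. This normalization is the step I would check most carefully.
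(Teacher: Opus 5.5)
Your labeling formulation, the energy identity \eqref{eq:graph-fourier-energy}, and the first (Fiedler) spectral bound are correct and follow the paper's proof essentially step for step. That proof uses preimages of a basis as labels, injectivity $\Leftrightarrow$ full affine span, common support $=E_x$, the $2^{q-1}$ separating characters, and the centered-indicator Rayleigh quotient inserted character by character.

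\textbf{The gap.} You do not prove the second term $\tfrac12\sum_{i=2}^{q+1}\lambda_i(L_G)$ of \eqref{eq:graph-spectral-bounds}. The route you settle on, a $q$-frame of coordinate characters $g_i$ whose Gram matrix is bounded above by $nI$, goes the wrong way. An upper bound on the Gram matrix yields no lower bound on $\operatorname{tr}(L_G\,GG^\top)$. A Ky Fan-type lower bound needs an orthonormal family, or at least lower bounds on the singular values of the centered frame. Those singular values can be $O(1)$ rather than $\sqrt n$ when one label class contains almost all vertices. So the ``normalization $\|g\|^2\le n$'' cannot supply the constant $\tfrac12$.

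\textbf{The fix.} The missing idea is the one you considered and then discarded. Let $V_1,\dots,V_r$ be the nonempty label classes; full affine span forces $r\ge q+1$. The vectors $z_j=\1_{V_j}/\sqrt{|V_j|}$ are orthonormal whatever the class sizes. Ky Fan therefore gives $\sum_j z_j^\top L_G z_j\ge\sum_{i=1}^{r}\lambda_i(L_G)\ge\sum_{i=2}^{q+1}\lambda_i(L_G)$, using $\lambda_1=0$ and nonnegativity of the spectrum. The left side equals $\sum_j|\partial_G V_j|/|V_j|$. An edge of $E_x$ joining $V_j$ to $V_{j'}$ contributes $1/|V_j|+1/|V_{j'}|\le 2$, and edges inside a class contribute nothing. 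Hence $2|E_x|\ge\sum_{i=2}^{q+1}\lambda_i(L_G)$.

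The uneven class sizes need no special handling. In your own notation, the normalized projection $\tilde P=\sum_c\1_c\1_c^\top/|V_c|$ has rank $r$ and satisfies $\operatorname{tr}(L_G\tilde P)\le\operatorname{tr}(L_G P)=2|E_x|$, simply because every $|V_c|\ge1$. This is exactly the paper's argument.
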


\begin{proof}[Proof idea]
Choose a basis of the cut subspace and use its coordinate values as labels
$x_v\in\F^q$.  Injectivity of the cut map is equivalent to full affine span of the used
labels, and each nonzero character $t$ produces exactly the cut
$\partial_G S_t(x)$.  An edge lies in the common support precisely when its endpoint
labels differ; double-counting the characters that separate an unequal pair gives
\eqref{eq:graph-fourier-energy}.  The first spectral bound is Fiedler's cut inequality
inserted character by character.  The second applies Ky Fan's principle to the
normalized label-class indicators.  Full details are in
Appendix~\ref{app:fourier-proof}.
\end{proof}

\begin{remark}[What is specific to shattering]
A conventional multiway cut prescribes or penalizes parts.  Here the parts are label
fibers, their number is endogenous, and feasibility is imposed simultaneously on every
nontrivial Walsh character of the label distribution.  This linear-character constraint
is what distinguishes \eqref{eq:graph-labeling-formulation} from an ordinary balanced
partition problem.  The spectral ingredients themselves are classical
\cite{fiedler1973}.
\end{remark}

\begin{remark}[CSS check deletion as a relative quotient]\label{rem:css-interpretation}
The three-term cochain complex
\[
 C^{k-1}(X)\xrightarrow{\delta_{k-1}}C^k(X)
 \xrightarrow{\delta_k}C^{k+1}(X)
\]
defines a CSS code with qubits indexed by $X(k)$, $X$-check matrix
$H_X=\delta_{k-1}^{\mathsf T}$, and $Z$-check matrix $H_Z=\delta_k$
\cite{calderbankshor1996,steane1996,audouxcouvreur2019}.  Deleting the rows indexed by
$F$ increases the number of encoded qubits by $\Delta\beta_k(F)$, and the quotient of
newly released $X$-logical operators by the pre-deletion logical space is
$\cE_F^k(X)\cong\cC_X^{k+1}[F]$.  Under this translation,
$\Shat_{q,s}^k(X)$ is the minimum number of deleted $Z$-check rows needed to release a
$q$-dimensional relative logical quotient whose nonzero elements have distance at least
$s$ from the old cocycle sector.

This is a relative check-loss parameter.  It is neither physical-qubit erasure nor the
absolute minimum distance of the post-deletion CSS code, because localization is taken
modulo the entire pre-deletion cocycle space, including old logical operators.  We record
the correspondence only as an interpretation of the algebraic quotient and make no
claim that it is the quantum-code objective optimized in code surgery
\cite{cowtanburton2024}.
\end{remark}

\section{Weighted and packed resilience}
\label{sec:weighted}

The preceding theory charges every face equally and treats a subspace through its common
support.  This section shows how the same logic behaves under normalized face weights
and under a geometric packing hypothesis on several defects.

We now use the normalized weights of \eqref{eq:induced-weight}.  After top-face
deletion, all norms are measured in the original ambient complex: the induced $k$-face
weights remain fixed, the mass of $F$ is computed with the original top-face measure,
and no renormalization is performed.  For a subspace $A\subseteq C^k(X)$ and a
cochain $\alpha$, set
\[
 \dist_{w_k}(\alpha,A)=\min_{a\in A}\norm{\alpha+a}_k.
\]

\begin{definition}[Cosystolic expansion]\label{def:cosystolic}
A pure $(k+1)$-complex $X$ is an $(\varepsilon,\mu)$-cosystolic expander in degree $k$
if
\begin{align}
 \norm{\delta_k\alpha}_{k+1}
 &\ge \varepsilon\,\dist_{w_k}(\alpha,Z^k(X))
 &&\text{for every }\alpha\in C^k(X),
 \label{eq:cosystolic-expansion-a}\\
 \dist_{w_k}(z,B^k(X))
 &\ge\mu
 &&\text{for every }z\in Z^k(X)\setminus B^k(X).
 \label{eq:cosystolic-expansion-b}
\end{align}
\end{definition}

This is the standard separation between cofilling and cosystolic size
\cite{evrakaufman2024,diksteindinur2024}.  For an emergent class
$u\in\cE_F^k(X)$ with codeword $c_u$, define its weighted relative localization by
\[
 \ell_{X,w}^k(u)=
 \min\set{\norm{\alpha}_k:\delta_k\alpha=c_u}
 =\dist_{w_k}(\alpha,Z^k(X)).
\]

\begin{theorem}[Old-global/new-relative-local resilience]
\label{thm:old-new}
Let $X$ be an $(\varepsilon,\mu)$-cosystolic expander in degree $k$, and let
$F\subseteq X(k+1)$.  Then:
\begin{enumerate}[label=\textup{(\roman*)},leftmargin=2.2em]
\item the injection $H^k(X)\hookrightarrow H^k(X-F)$ preserves the ambient
cosystolic norm, so every old nonzero class still has norm at least $\mu$;
\item every emergent quotient class $u\in\cE_F^k(X)$ satisfies
\begin{equation}\label{eq:new-local}
 \ell_{X,w}^k(u)\le \frac{\norm F_{k+1}}{\varepsilon}.
\end{equation}
\end{enumerate}
\end{theorem}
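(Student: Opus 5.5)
Both parts reduce to bookkeeping with the canonical identifications from Section~\ref{sec:exact}; the expansion hypothesis is used once in each part.

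For part (i), I would first make the injection explicit. By \eqref{eq:B-unchanged}, $B^k(X-F)=B^k(X)$ and $Z^k(X)\subseteq Z^k(X-F)$. Hence a class $z+B^k(X)$ with $z\in Z^k(X)$ maps to $z+B^k(X-F)=z+B^k(X)$. The map is therefore literally the identity on cochain representatives, and the coset being minimized over is unchanged.

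The cosystolic norm of a class is $\dist_{w_k}(z,B^k(\cdot))$. Section~\ref{sec:weighted} stipulates that after deletion all $k$-face weights stay the original $w_k$, with no renormalization. So the norm of the image class equals $\dist_{w_k}(z,B^k(X))$, which is at least $\mu$ by \eqref{eq:cosystolic-expansion-b} whenever $z\notin B^k(X)$. Injectivity ensures nonzero old classes stay nonzero, so the bound applies to all of them.

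For part (ii), I would take an emergent class $u$ and use Theorem~\ref{thm:exact-isomorphism} to get its codeword $c_u\in\cC_X^{k+1}[F]$. I then pick any representative $\alpha\in K_F$ with $\delta_k\alpha=c_u$. Every preimage of $c_u$ differs from $\alpha$ by an element of $Z^k(X)$, so $\ell_{X,w}^k(u)=\dist_{w_k}(\alpha,Z^k(X))$; this is the weighted analogue of \eqref{eq:lambda-distance} and justifies the second equality in the definition. Applying \eqref{eq:cosystolic-expansion-a} to $\alpha$ gives
\[
 \varepsilon\,\ell_{X,w}^k(u)\le\norm{\delta_k\alpha}_{k+1}=\norm{c_u}_{k+1}.
\]
Because $\supp c_u\subseteq F$ and $w_{k+1}=\pi$ is nonnegative, $\norm{c_u}_{k+1}\le\norm F_{k+1}$. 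Dividing by $\varepsilon>0$ gives \eqref{eq:new-local}; when $u=0$ the bound is trivial.

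The main obstacle is conceptual rather than computational: one must be careful about which norm and which subspace are meant. In (i) the norm must be taken relative to $B^k$ with the ambient weights; renormalizing $\pi$ on $X(k+1)\setminus F$ would change $w_k$ and break the argument, which is exactly why the paper fixes ambient norms. In (ii), localization is relative to the old $Z^k(X)$ rather than the larger $Z^k(X-F)$. That choice is what makes expansion of the original $X$ applicable even though $\alpha$ is a cocycle of $X-F$.
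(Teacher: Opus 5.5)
Your proposal is correct and takes essentially the same approach as the paper. In (i) you use that top-face deletion leaves the $k$-skeleton, the ambient weights $w_k$, and $B^k(X)$ unchanged, and in (ii) you apply the first cosystolic inequality to any preimage of $c_u$ and bound $\norm{c_u}_{k+1}\le\norm F_{k+1}$ via $\supp c_u\subseteq F$; your explicit description of the injection on representatives and your justification of $\ell_{X,w}^k(u)=\dist_{w_k}(\alpha,Z^k(X))$ are sound elaborations of steps the paper leaves implicit.
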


\begin{proof}
The $k$-skeleton, the induced ambient $k$-face weights, and $B^k(X)$ are unchanged by
top-face deletion.  Hence the norm of an old cohomology class, measured by distance to
$B^k(X)$, is unchanged, proving (i).

For (ii), let $c_u\in\cC_X^{k+1}[F]$ correspond to $u$ and choose any
$\alpha$ with $\delta_k\alpha=c_u$.  Since $c_u$ is supported in $F$,
\[
 \varepsilon\,\ell_{X,w}^k(u)
 \le\norm{c_u}_{k+1}
 \le\norm F_{k+1}
\]
by \eqref{eq:cosystolic-expansion-a}.  Rearrangement proves
\eqref{eq:new-local}.
\end{proof}

\begin{remark}[The quotient qualification is essential]\label{rem:quotient}
The theorem does not say that every absolute class of $H^k(X-F)$ is either old and global
or new and local.  A new absolute class can be shifted by an old class.  What is canonical
is the quotient $H^k(X-F)/H^k(X)$, and it is this relative class that admits a localized
representative.
\end{remark}

Theorem~\ref{thm:old-new} controls one quotient class at a time.  Multiple defects can
share erased top faces, so a direct sum of the one-class bounds is false without a
geometric separation hypothesis.  The following packed version gives the correct local
incidence loss.

\begin{definition}[Packed defect certificate]\label{def:packed}
A family $\alpha_1,\ldots,\alpha_q\in C^k(X)$ is an $(F,\sigma)$-packed defect certificate if
\begin{enumerate}[label=\textup{(\alph*)},leftmargin=2.2em]
\item $\alpha_i$ is reduced modulo cocycles:
$\norm{\alpha_i}_k=\dist_{w_k}(\alpha_i,Z^k(X))$;
\item the supports $\supp(\alpha_i)$ are pairwise disjoint and
$\norm{\alpha_i}_k\ge \sigma$;
\item $0\ne\delta_k\alpha_i$ and $\supp(\delta_k\alpha_i)\subseteq F$.
\end{enumerate}
\end{definition}

\begin{theorem}[Packed shattering bound]\label{thm:packed}
Suppose $X$ satisfies the first cosystolic inequality
\eqref{eq:cosystolic-expansion-a} with constant $\varepsilon$.  If
$\alpha_1,\ldots,\alpha_q$ is an $(F,\sigma)$-packed defect certificate, then
\begin{equation}\label{eq:packed}
 \norm F_{k+1}\ge \frac{\varepsilon q \sigma}{k+2}.
\end{equation}
The factor $k+2$ is sharp for the local overlap argument.
\end{theorem}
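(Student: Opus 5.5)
Let $\beta_i=\delta_k\alpha_i$; by (c) each $\beta_i$ is a nonzero top cochain supported in $F$. The plan is to bound each $\norm{\beta_i}_{k+1}$ below by the first cosystolic inequality, and then pay for overlaps among the supports of the $\beta_i$ using the disjointness of the $\alpha_i$. A top face $\tau$ can lie in $\supp\beta_i$ only if $\tau$ contains some $k$-face of $\supp\alpha_i$. Since the $\supp\alpha_i$ are pairwise disjoint and $\tau$ has exactly $k+2$ faces of dimension $k$, each $\tau\in F$ lies in $\supp\beta_i$ for at most $k+2$ indices $i$. Hence
\[
 (k+2)\norm F_{k+1}\ge\sum_{\tau\in F}\pi(\tau)\,\#\{i:\tau\in\supp\beta_i\}=\sum_{i=1}^q\norm{\beta_i}_{k+1}.
\]
For each $i$, inequality \eqref{eq:cosystolic-expansion-a} and the reduction condition (a) give
\[
 \norm{\beta_i}_{k+1}\ge\varepsilon\,\dist_{w_k}(\alpha_i,Z^k(X))=\varepsilon\norm{\alpha_i}_k\ge\varepsilon\sigma
\]
by (b). Summing over $i$ and combining with the overlap count yields \eqref{eq:packed}.

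The only delicate step is the overlap count: $\tau\in\supp\delta_k\alpha_i$ means an odd number of the $k$-faces of $\tau$ lie in $\supp\alpha_i$, so in particular at least one does. The map assigning each such index $i$ to a $k$-face of $\tau$ in $\supp\alpha_i$ is injective by disjointness, which gives the bound $k+2$.

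For the sharpness claim ("sharp for the local overlap argument"), I would exhibit a single top face $\tau$ whose $k+2$ boundary faces $\sigma_1,\dots,\sigma_{k+2}$ each carry a singleton certificate $\alpha_i=\1_{\sigma_i}$, with all coboundaries lying in $F$. Then $\tau$ is counted $k+2$ times, so the overlap inequality is an equality on $\tau$. The cleanest instance would be $F$ equal to the top faces around these $\sigma_i$, in a complex where every face of dimension $k$ has one coface, for example a single $(k+1)$-simplex. This shows the counting factor cannot be improved in general; I expect only this illustrative step to need care, and I would present it as showing that the multiplicity bound in the double count is attained, not that the final inequality is tight for all complexes.
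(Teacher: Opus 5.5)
Your proof is correct and matches the paper's argument: expansion plus reducedness gives $\sum_i\norm{\delta_k\alpha_i}_{k+1}\ge\varepsilon q\sigma$, and disjointness of the $\supp\alpha_i$ lets each top face of $F$ be counted at most $k+2$ times. For sharpness the paper uses the same single $(k+1)$-simplex with $\pi(\tau)=1$ and takes one further step. Each singleton is reduced of weight $1/(k+2)$ and the expansion constant is exactly $k+2$, so with $q=k+2$ and $\sigma=1/(k+2)$ equality holds in \eqref{eq:packed} itself, not only in the overlap count.
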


\begin{proof}
Expansion and reducedness give
\[
 \sum_{i=1}^q\norm{\delta_k\alpha_i}_{k+1}
 \ge\varepsilon\sum_{i=1}^q\norm{\alpha_i}_k
 \ge\varepsilon q \sigma.
\]
Fix a top face $\tau$.  If $(\delta_k\alpha_i)(\tau)=1$, then the support of
$\alpha_i$ contains at least one of the $k+2$ $k$-faces of $\tau$.  Because the
$\alpha_i$ have pairwise disjoint support, this can occur for at most $k+2$ indices
$i$.  Since all syndromes are supported in $F$,
\[
 \sum_{i=1}^q\norm{\delta_k\alpha_i}_{k+1}
 \le (k+2)\norm F_{k+1}.
\]
Combining the two inequalities proves \eqref{eq:packed}.

For sharpness, take a single $(k+1)$-simplex with its unique top face of weight one and
let the $\alpha_i$ be the singleton indicators of its $k+2$ facets.  Each has weight
$1/(k+2)$, all coboundaries equal the unique top face, and the expansion constant is
$k+2$.  Equality holds in \eqref{eq:packed}.
\end{proof}

\begin{remark}[Comparison with collective cosystolic expansion]
\label{rem:collective}
Collective cosystolic expansion, introduced by Kaufman and Tessler
\cite{kaufmantessler2021}, controls the size of a union of fillings in terms of a union of
coboundaries and is designed to behave well under tensor products.  Theorem~\ref{thm:packed}
asks a different adversarial question: how small can one common erased set be if it hides
many geometrically separated defects?  Neither statement formally implies the other
without additional assumptions.
\end{remark}

\section{Random erasures, enumerators, and matroid structure}
\label{sec:random}

We now pass from worst-case deletion to independent top-face erasure.  The unlocalized
rank law is matroidal, while localization-sensitive events require a finer bivariate
enumerator and the shattering hierarchy.

\subsection{The syndrome-localization enumerator}

\begin{definition}[Bivariate shattering enumerator]\label{def:enumerator}
Define
\begin{equation}\label{eq:enumerator}
 \cW_X^k(a,b)=\sum_{c\in\cC_X^{k+1}}
 a^{\lambda_X^k(c)}b^{\abs c}.
\end{equation}
\end{definition}

This polynomial refines the usual weight enumerator by recording the minimum size of a
preimage.  It is related to coset-leader and list-weight enumerators
\cite{jurriuspellikaan2015}, but its two coordinates live on adjacent dimensions of the
simplicial cochain complex.

\begin{proposition}[Basic properties of $\cW$]\label{prop:enumerator}
The enumerator satisfies:
\begin{enumerate}[label=\textup{(\roman*)},leftmargin=2.2em]
\item
$\Phi_X^k(s)=\min\set{t:[a^s b^t]\cW_X^k(a,b)\ne0}$ for $s\ge1$;
\item every nonconstant monomial $a^s b^t$ satisfies $t\ge h^k(X)s$;
\item if $X=X_1\sqcup X_2$ is a disjoint union, then
$\cW_X^k=\cW_{X_1}^k\cW_{X_2}^k$.
\end{enumerate}
\end{proposition}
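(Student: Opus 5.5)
The three items are verified directly from the definition of $\cW_X^k$ as a generating function over $\cC_X^{k+1}$. The monomial $a^s b^t$ appears with coefficient equal to the number of codewords $c$ with $\lambda_X^k(c)=s$ and $\abs c=t$. Since the coefficients are nonnegative integers, there is no cancellation, so a coefficient is nonzero exactly when such a codeword exists.

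For (i), fix $s\ge1$. A codeword with $\lambda_X^k(c)=s\ge1$ is automatically nonzero, because $\lambda_X^k(0)=0$. Hence the set of $t$ with $[a^sb^t]\cW_X^k\ne0$ is exactly $\set{\abs c:0\ne c\in\cC_X^{k+1},\ \lambda_X^k(c)=s}$. Its minimum is $\Phi_X^k(s)$ by \eqref{eq:Phi}, with both sides equal to $+\infty$ when the set is empty.

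For (ii), a nonconstant monomial comes from some $c\ne0$, since $c=0$ contributes the constant $1$. Write $c=\delta_k\alpha$ with $\alpha\notin Z^k(X)$. By \eqref{eq:lambda-distance}, $s=\lambda_X^k(c)=\dist(\alpha,Z^k(X))\ge1$. The definition \eqref{eq:h-unweighted} then gives $t=\abs{\delta_k\alpha}\ge h^k(X)\,s$. The only care needed is that nonzero $c$ forces $s\ge1$, so the ratio in \eqref{eq:h-unweighted} is defined.

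For (iii), which is the only step needing some bookkeeping, I will use the direct-sum decomposition of cochains. If $X=X_1\sqcup X_2$, then $C^j(X)=C^j(X_1)\oplus C^j(X_2)$ for every $j$, and $\delta_k=\delta_k^{(1)}\oplus\delta_k^{(2)}$ is block diagonal, because no top face meets both pieces. Hence $\cC_X^{k+1}=\cC_{X_1}^{k+1}\oplus\cC_{X_2}^{k+1}$. For $c=(c_1,c_2)$ the Hamming weight is additive, $\abs c=\abs{c_1}+\abs{c_2}$, since the supports lie on disjoint coordinate sets. Localization is also additive: the preimages of $c$ are exactly the pairs $(\alpha_1,\alpha_2)$ with $\delta_k^{(i)}\alpha_i=c_i$, and $\abs{(\alpha_1,\alpha_2)}=\abs{\alpha_1}+\abs{\alpha_2}$, so minimizing separately gives $\lambda_X^k(c)=\lambda_{X_1}^k(c_1)+\lambda_{X_2}^k(c_2)$. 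Therefore each term factors as $a^{\lambda(c)}b^{\abs c}=a^{\lambda_1(c_1)}b^{\abs{c_1}}\cdot a^{\lambda_2(c_2)}b^{\abs{c_2}}$. Summing over the product set $\cC_{X_1}^{k+1}\times\cC_{X_2}^{k+1}$ yields $\cW_X^k=\cW_{X_1}^k\cW_{X_2}^k$.

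If either piece has no $(k+1)$-faces, its code is zero and its enumerator is $1$, so the identity still holds under this convention. The main point to check is that the minimization for $\lambda$ splits, and this follows from the block-diagonal structure of $\delta_k$.
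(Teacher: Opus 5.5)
Your proof is correct and follows the paper's argument. Parts (i) and (ii) are read off from the definitions: for (ii) you invoke \eqref{eq:h-unweighted} directly, whereas the paper routes it through the equivalent profile identity \eqref{eq:h-profile}. Part (iii) uses the same block-diagonal splitting of $\delta_k$, with additivity of both support size and minimum preimage weight. Your remarks on nonnegative coefficients, on $c\ne0$ forcing $s\ge1$, and on pieces with no $(k+1)$-faces fill in details that the paper leaves implicit.
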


\begin{proof}
Parts (i) and (ii) follow directly from the definitions and
\eqref{eq:h-profile}.  For (iii), the top coboundary code is a direct sum, support size is
additive, and the minimum preimage problem separates across the two components.
\end{proof}

\subsection{Independent top-face erasures}

The next identities separate two levels of randomness.  Rank increments depend only on
the matroid of $\delta_k$; localization-sensitive events depend on which shortened
subspaces occur and therefore require the bivariate enumerator.

Let $F_p\subseteq X(k+1)$ contain each top face independently with probability
$p\in[0,1]$; faces in $F_p$ are deleted.

\begin{theorem}[Exact Bernoulli-erasure identities]\label{thm:random-identities}
For every $p\in[0,1]$,
\begin{align}
 \E\bigl[\abs{\cE_{F_p}^k(X)}\bigr]
 &=\sum_{c\in\cC_X^{k+1}}p^{\abs c},
 \label{eq:expected-size}\\
 \E\left[\sum_{u\in\cE_{F_p}^k(X)}a^{\ell_X^k(u)}\right]
 &=\cW_X^k(a,p).
 \label{eq:expected-enumerator}
\end{align}
Here $\abs{\cE_{F_p}^k(X)}=2^{\Delta\beta_k(F_p)}$ denotes vector-space cardinality.
\end{theorem}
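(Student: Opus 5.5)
The plan is to use Theorem~\ref{thm:exact-isomorphism} to turn both expectations into sums over codewords of $\cC_X^{k+1}$, and then evaluate each term by independence of the Bernoulli erasures. By that theorem, for every realization of $F_p$ the coboundary map gives a canonical bijection $u\mapsto c_u$ from $\cE_{F_p}^k(X)$ onto the shortened code $\cC_X^{k+1}[F_p]$. This bijection is linear, so it also gives the cardinality statement $\abs{\cE_{F_p}^k(X)}=2^{\dim\cC_X^{k+1}[F_p]}=2^{\Delta\beta_k(F_p)}$, using \eqref{eq:beta-shortening}. By the definition $\ell_X^k(u)=\lambda_X^k(c_u)$, we can rewrite the random sum pointwise as
\[
 \sum_{u\in\cE_{F_p}^k(X)}a^{\ell_X^k(u)}
 =\sum_{c\in\cC_X^{k+1}}a^{\lambda_X^k(c)}\,\1\bigl[\supp(c)\subseteq F_p\bigr].
\]
This uses only that $c\in\cC_X^{k+1}[F_p]$ exactly when $\supp c\subseteq F_p$, which is the definition of shortening.

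Next, take expectations term by term. The code is finite, so linearity of expectation needs no convergence argument. For a fixed codeword $c$, the event $\supp(c)\subseteq F_p$ requires each of the $\abs c$ faces in $\supp c$ to be deleted. These deletions are independent, so the event has probability $p^{\abs c}$. For $c=0$ we use the convention $p^0=1$, including when $p=0$, and this is consistent with $\lambda_X^k(0)=0$. Substituting gives $\sum_c a^{\lambda_X^k(c)}p^{\abs c}=\cW_X^k(a,p)$, which is \eqref{eq:expected-enumerator} by Definition~\ref{def:enumerator}. Setting $a=1$ turns the left side into $\E\abs{\cE_{F_p}^k(X)}$ and the right side into $\sum_c p^{\abs c}$, which proves \eqref{eq:expected-size}.

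No step is a real obstacle. The only care needed is to state clearly that the bijection of Theorem~\ref{thm:exact-isomorphism} transports relative localization exactly; this holds by how $\ell_X^k$ was defined. One should also note that the boundary cases $p\in\{0,1\}$ are covered by the convention $0^0=1$: at $p=0$ only the zero codeword contributes, and at $p=1$ every codeword does.
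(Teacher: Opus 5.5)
Your proposal is correct and follows essentially the same route as the paper: transport emergent classes to codewords supported in $F_p$ via Theorem~\ref{thm:exact-isomorphism}, use $\Prob(\supp c\subseteq F_p)=p^{\abs c}$, and sum the indicators with weights $a^{\lambda_X^k(c)}$. The paper proves \eqref{eq:expected-size} first and then inserts the weights, whereas you prove \eqref{eq:expected-enumerator} and specialize to $a=1$; your explicit handling of the $p=0$ convention is a harmless addition.
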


\begin{proof}
By Theorem~\ref{thm:exact-isomorphism}, an element of emergent cohomology corresponds to
a codeword $c$ whose support is contained in $F_p$.  This event has probability
$p^{\abs c}$.  Summing indicators over all codewords proves \eqref{eq:expected-size};
weighting each indicator by $a^{\lambda_X^k(c)}$ proves
\eqref{eq:expected-enumerator}.
\end{proof}

For a finite-dimensional $\F$-space of dimension $d$, let
$\qbinom{d}{q}$ denote the Gaussian binomial coefficient, the number of its
$q$-dimensional subspaces.  For a deletion set $F$, let
$N_{q,s}(F)$ be the number of subspaces $U\le\cE_F^k(X)$ with
$\dim U=q$ and $\ell_X^k(u)\ge s$ for every $0\ne u\in U$.

\begin{theorem}[Exact subspace moments and a shattering tail bound]
\label{thm:subspace-moments}
Let $r=\rank\delta_k$.  For $1\le q\le r$, $s\ge1$, and $p\in[0,1]$,
\begin{equation}\label{eq:localized-subspace-moment}
 \E\bigl[N_{q,s}(F_p)\bigr]
 =
 \sum_{\substack{U\le\cC_X^{k+1},\ \dim U=q\\
                   \lambda_X^k(c)\ge s\ \forall\,0\ne c\in U}}
 p^{\abs{\supp U}}.
\end{equation}
If $\Shat_{q,s}^k(X)<\infty$, then
\begin{equation}\label{eq:shattering-tail}
 \Prob\bigl(N_{q,s}(F_p)\ge1\bigr)
 \le
 \qbinom{r}{q}\,p^{\Shat_{q,s}^k(X)}.
\end{equation}
If $\Shat_{q,s}^k(X)=\infty$, the event on the left is impossible.  For $s=1$, the exact
identity becomes
\begin{equation}\label{eq:gaussian-moment}
 \E\left[\qbinom{\Delta\beta_k(F_p)}{q}\right]
 =
 \sum_{\substack{U\le\cC_X^{k+1}\\\dim U=q}}
 p^{\abs{\supp U}}.
\end{equation}
\end{theorem}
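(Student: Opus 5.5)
The plan is to transport everything through the exact erasure dictionary of Theorem~\ref{thm:exact-isomorphism} together with the relative localization $\ell_X^k(u)=\lambda_X^k(c_u)$, and then count by linearity of expectation.

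\textbf{Step 1: the subspace moment.} For a fixed deletion set $F$, the isomorphism $\cE_F^k(X)\cong\cC_X^{k+1}[F]$ maps $q$-dimensional subspaces of $\cE_F^k(X)$ bijectively onto $q$-dimensional subcodes $U\le\cC_X^{k+1}$ with $\supp U\subseteq F$. It also carries $\ell_X^k$ to $\lambda_X^k$. Hence
\[
 N_{q,s}(F)=\sum_{U}\1\bigl[\supp U\subseteq F\bigr],
\]
where the sum ranges over the $q$-dimensional subcodes $U\le\cC_X^{k+1}$ all of whose nonzero elements satisfy $\lambda_X^k(c)\ge s$. Under Bernoulli erasure, $\Prob(\supp U\subseteq F_p)=p^{\abs{\supp U}}$ by independence. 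Taking expectations term by term gives \eqref{eq:localized-subspace-moment}.

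\textbf{Step 2: the tail bound.} By Markov's inequality,
\[
 \Prob\bigl(N_{q,s}(F_p)\ge1\bigr)\le\E\bigl[N_{q,s}(F_p)\bigr].
\]
Each term in the sum satisfies $\abs{\supp U}\ge\Shat_{q,s}^k(X)$ by Theorem~\ref{thm:shat-code}, so each term is at most $p^{\Shat_{q,s}^k(X)}$ when $p\le1$. The number of terms is at most the total number of $q$-dimensional subspaces of the $r$-dimensional code $\cC_X^{k+1}$, which is $\qbinom{r}{q}$. This yields \eqref{eq:shattering-tail}.

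If $\Shat_{q,s}^k(X)=\infty$, the index set is empty. Then $N_{q,s}(F)=0$ for every $F$, and the event is impossible. The edge case $p=0$ with convention $0^0=1$ needs no separate argument, since $\abs{\supp U}\ge q\ge1$.

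\textbf{Step 3: the case $s=1$.} Here the condition $\lambda_X^k(c)\ge1$ holds for every nonzero codeword, because only $0$ has a zero-weight preimage. So $N_{q,1}(F)$ counts all $q$-subspaces of $\cE_F^k(X)$, a space of dimension $\Delta\beta_k(F)$ by \eqref{eq:beta-shortening}. Therefore $N_{q,1}(F)=\qbinom{\Delta\beta_k(F)}{q}$, and \eqref{eq:gaussian-moment} follows from Step~1.

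There is no real obstacle here. The only point needing care is in Step~1: one must check that the bijection of subspaces preserves the localization condition on every nonzero element. This holds because the dictionary is an isometry for the quotient metric, by Theorem~\ref{thm:general-erasure-dictionary}.
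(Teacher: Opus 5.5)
Your proposal is correct and follows the paper's proof essentially step for step: you sum indicators of $\supp U\subseteq F_p$ over feasible subcodes via the erasure dictionary, bound the probability by the first moment using the Gaussian-binomial count of candidate $q$-subspaces, and observe that every $q$-subspace is feasible at $s=1$. Your remarks on the empty index set when $\Shat_{q,s}^k(X)=\infty$ and on the $p=0$ convention are harmless clarifications that the paper leaves implicit.
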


\begin{proof}
Under Theorem~\ref{thm:exact-isomorphism}, a fixed subcode
$U\le\cC_X^{k+1}$ occurs as a subspace of emergent cohomology exactly when
$\supp U\subseteq F_p$.  This event has probability $p^{\abs{\supp U}}$.
Summing its indicator over all localization-feasible $q$-dimensional subcodes proves
\eqref{eq:localized-subspace-moment}.

The event $N_{q,s}(F_p)\ge1$ has probability at most its expectation.  There are at most
$\qbinom{r}{q}$ candidate $q$-subspaces, and every feasible one has support at least
$\Shat_{q,s}^k(X)$, proving \eqref{eq:shattering-tail}.  When $s=1$, every
$q$-subspace of the shortened code is feasible, and a vector space of dimension
$\Delta\beta_k(F_p)$ has exactly $\qbinom{\Delta\beta_k(F_p)}{q}$ such
subspaces.  This gives \eqref{eq:gaussian-moment}.
\end{proof}

Combining \eqref{eq:shattering-tail} with Theorem~\ref{thm:griesmer-shat} turns either
the complete cofilling envelope or a scalar expansion constant into a random-erasure
tail estimate for the simultaneous creation of several delocalized classes.

\begin{theorem}[Macroscopic-defect probability bound]\label{thm:probability}
Let $N_k=\abs{X(k)}$.  For integers $1\le s\le t$ and $0<p\le1$, let
$I_{s,t}=\{j\in\{s,\ldots,t\}:\Phi_X^k(j)<\infty\}$.  Then
\begin{align}
 &\Prob\left(\exists\,0\ne u\in\cE_{F_p}^k(X):
 s\le\ell_X^k(u)\le t
 \right)\notag\\
 &\hspace{15mm}\le
 \sum_{j\in I_{s,t}}\binom{N_k}{j}p^{\Phi_X^k(j)}
 \le
 \sum_{j\in I_{s,t}}\binom{N_k}{j}p^{h^k(X)j}.
 \label{eq:probability}
\end{align}
\end{theorem}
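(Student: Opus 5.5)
The plan is a union bound over the possible localization values, charged to explicit minimum preimages rather than to codewords.

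First, reduce the event to a statement about codewords. By Theorem~\ref{thm:exact-isomorphism}, the event on the left occurs iff there is a nonzero $c\in\cC_X^{k+1}$ with $\supp c\subseteq F_p$ and $s\le\lambda_X^k(c)\le t$. Since $c\ne0$, its value $j=\lambda_X^k(c)$ satisfies $\Phi_X^k(j)<\infty$, so $j\in I_{s,t}$. Hence the event is contained in
\[
 \bigcup_{j\in I_{s,t}}\set{\exists\,c\ne0:\ \lambda_X^k(c)=j,\ \supp c\subseteq F_p}.
\]

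Second, bound each piece by counting preimages rather than codewords, because there is no control on how many codewords have localization $j$. Every such $c$ equals $\delta_k\alpha$ for some $\alpha\in C^k(X)$ with $\abs{\alpha}=j$; fix one minimum preimage $\alpha_c$ for each $c$. The map $c\mapsto\alpha_c$ is injective, since $\delta_k\alpha_c=c$. So the $j$-th event is contained in the union, over the at most $\binom{N_k}{j}$ subsets $\alpha$ of $X(k)$ of size $j$ with $\delta_k\alpha\ne0$ and $\lambda_X^k(\delta_k\alpha)=j$, of the events $\supp(\delta_k\alpha)\subseteq F_p$. Each such event has probability $p^{\abs{\delta_k\alpha}}$. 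By definition of $\Phi$, $\abs{\delta_k\alpha}\ge\Phi_X^k(j)$, and since $0<p\le1$ this gives $p^{\abs{\delta_k\alpha}}\le p^{\Phi_X^k(j)}$. Summing over $\alpha$ and then over $j$ yields the first inequality.

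Third, for the second inequality use \eqref{eq:h-profile}: $\Phi_X^k(j)\ge h^k(X)j$ whenever $\Phi_X^k(j)<\infty$, and $p\le1$ makes $x\mapsto p^x$ nonincreasing.

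There is no real obstacle. The only points needing care are injecting codewords into $j$-sets through a chosen minimum preimage, so that the binomial count is legitimate, and noting that restricting to $I_{s,t}$ loses nothing, since values of $j$ with $\Phi_X^k(j)=\infty$ contribute no codewords. The case $p=1$ is harmless because the bound is then trivially at least the probability.
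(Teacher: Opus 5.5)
Your proposal is correct and follows essentially the same route as the paper. Both arguments take a union bound over the localization levels $j\in I_{s,t}$, count the codewords of localization $j$ by injecting them into $j$-subsets of $X(k)$ through a chosen minimum preimage, bound each emergence probability by $p^{\Phi_X^k(j)}$, and finish with $\Phi_X^k(j)\ge h^k(X)j$ from \eqref{eq:h-profile}. Your version takes the union directly over $j$-sets $\alpha$, which makes the injection unnecessary, but this difference is only cosmetic.
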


\begin{proof}
For each codeword $c$, choose one minimum-weight preimage $\alpha(c)$ by a fixed
tie-breaking rule.  The map $c\mapsto\alpha(c)$ is injective because
$\delta_k\alpha(c)=c$.  Hence at most $\binom{N_k}{j}$ codewords have localization
$j$.  A codeword $c$ emerges exactly when its support is erased, an event of probability
$p^{\abs c}\le p^{\Phi_X^k(j)}$.  A union bound over $j$ proves the first inequality.
The second follows from $\Phi_X^k(j)\ge h^k(X)j$.
\end{proof}

Let $H_2(x)=-x\log_2x-(1-x)\log_2(1-x)$ be binary entropy.

\begin{corollary}[An entropy criterion]\label{cor:entropy}
Consider a sequence $X_n$ with $N_n=\abs{X_n(k)}\to\infty$ and
$h^k(X_n)\ge\varepsilon>0$.  Fix $0<\alpha\le\beta<1$ and $0<p<1$.  If
\begin{equation}\label{eq:entropy-condition}
 \sup_{x\in[\alpha,\beta]}
 \bigl(H_2(x)+\varepsilon x\log_2p\bigr)<0,
\end{equation}
then the probability that $F_p$ creates a nonzero quotient class with localization in
$[\alpha N_n,\beta N_n]$ is exponentially small in $N_n$.
\end{corollary}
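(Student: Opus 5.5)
The plan is to start from Theorem~\ref{thm:probability} with $s=\ceil{\alpha N_n}$ and $t=\floor{\beta N_n}$. The quotient-class event is contained in the event bounded there, so its probability is at most
\[
 \sum_{j=\ceil{\alpha N_n}}^{\floor{\beta N_n}}\binom{N_n}{j}p^{h^k(X_n)j}
 \le\sum_{j=\ceil{\alpha N_n}}^{\floor{\beta N_n}}\binom{N_n}{j}p^{\varepsilon j}.
\]
The last step uses $0<p<1$ and $h^k(X_n)\ge\varepsilon$, since $p^{x}$ is decreasing in $x$. If the index set $I_{s,t}$ is smaller, the sum only shrinks, so no case split is needed. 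If the range is empty, the probability is $0$.

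Next I bound each term with the standard entropy estimate $\binom{N}{j}\le 2^{N H_2(j/N)}$, valid for all $0\le j\le N$. Writing $x=j/N_n\in[\alpha,\beta]$, each summand is at most
\[
 2^{N_n\left(H_2(x)+\varepsilon x\log_2 p\right)}.
\]
The function $g(x)=H_2(x)+\varepsilon x\log_2p$ is continuous on the compact interval $[\alpha,\beta]\subset(0,1)$. Hypothesis \eqref{eq:entropy-condition} says its supremum is some $-\gamma<0$, and since the interval is compact the supremum is attained, so $\gamma>0$ is a genuine constant. Every summand is therefore at most $2^{-\gamma N_n}$.

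There are at most $N_n+1$ summands, so the probability is at most $(N_n+1)2^{-\gamma N_n}$. For large $N_n$ this is at most $2^{-(\gamma/2)N_n}$, which is exponentially small.

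The only point needing care is the use of the supremum: it is what makes the bound uniform over all $j$ in the range, and the polynomial factor $N_n+1$ is absorbed by halving the exponent. The rest is routine, so I expect no real obstacle; the entropy bound on binomials is standard and need not be proved.
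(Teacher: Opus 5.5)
Your proposal is correct and is essentially the paper's argument: apply the bound of Theorem~\ref{thm:probability} with $h^k(X_n)\ge\varepsilon$, estimate $\binom{N_n}{j}\le 2^{N_nH_2(j/N_n)}$, use the strict negativity of the supremum to bound every summand uniformly by an exponentially small quantity, and absorb the linear number of summands. The appeal to compactness is harmless but unnecessary, since the hypothesis already makes the supremum a strictly negative number.
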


\begin{proof}
Use $\binom{N_n}{j}\le2^{N_n H_2(j/N_n)}$ in
\eqref{eq:probability}.  Condition \eqref{eq:entropy-condition} makes every summand in
the indicated range exponentially small, uniformly in $j$; the number of summands is
only linear in $N_n$.
\end{proof}

\subsection{The full rank distribution as a Tutte evaluation}

By Corollary~\ref{cor:submodularity}, $\Delta\beta_k(F)$ is dual-matroid nullity.  The
next formula is therefore a standard rank-generating/Tutte specialization, in the same
matroid--code lineage as Greene's theorem and later support-matroid refinements
\cite{greene1976,barg1997,britz2007}.  We include it because it gives the exact
probability-generating function for topology created by independent erasures.

Let $m=\abs{X(k+1)}$, let $M=M_X^k$ be the row matroid of $\delta_k$, and write
$r=r_M(E)$.  We use the rank-generating definition
\[
 T_M(x,y)=\sum_{A\subseteq E}
 (x-1)^{r-r_M(A)}(y-1)^{\abs A-r_M(A)}.
\]

\begin{theorem}[Tutte specialization for the Betti increment]\label{thm:tutte}
For $0<p<1$,
\begin{equation}\label{eq:tutte}
 \E\left[z^{\Delta\beta_k(F_p)}\right]
 =p^{m-r}(1-p)^r
 T_M\left(1+\frac{pz}{1-p},\frac1p\right).
\end{equation}
The endpoint cases follow by continuity.
\end{theorem}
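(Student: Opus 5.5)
By Corollary~\ref{cor:submodularity}, $\Delta\beta_k(F)=r-r_M(E\setminus F)$. Since $F_p$ is a Bernoulli subset, its complement $A=E\setminus F_p$ is also a Bernoulli subset, with each face retained independently with probability $1-p$. The plan is therefore to expand the expectation as a sum over retained sets $A$ and match it term by term with the rank-generating form of $T_M$.

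Concretely, the expectation expands as
\[
 \E\bigl[z^{\Delta\beta_k(F_p)}\bigr]=\sum_{A\subseteq E}(1-p)^{\abs A}p^{m-\abs A}z^{r-r_M(A)} .
\]
Next I evaluate the right-hand side at $x-1=pz/(1-p)$ and $y-1=(1-p)/p$. Each summand of $T_M$ then becomes
\[
 \Bigl(\frac{pz}{1-p}\Bigr)^{r-r_M(A)}\Bigl(\frac{1-p}{p}\Bigr)^{\abs A-r_M(A)}
 =z^{r-r_M(A)}\,p^{\,r-\abs A}(1-p)^{\abs A-r} .
\]
In this simplification the two $r_M(A)$ contributions cancel: one enters through $p^{r-r_M(A)}p^{-(\abs A-r_M(A))}$ and the other through the corresponding powers of $1-p$. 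Multiplying by the prefactor $p^{m-r}(1-p)^r$ leaves $z^{r-r_M(A)}p^{m-\abs A}(1-p)^{\abs A}$, which is exactly the probability weight of the retained set $A$ times $z^{\Delta\beta_k}$. Summing over $A$ gives \eqref{eq:tutte} for $0<p<1$.

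The only delicate points are bookkeeping. First, the exponent must be the nullity of the \emph{complement}, as given by Corollary~\ref{cor:rank-defect}, rather than the rank of $F$ itself. Second, the identity requires the cancellation of the $r_M(A)$ terms described above. For the endpoints, the left side is a polynomial in $p$. The right side, after this cancellation, is also a polynomial in $p$, namely $\sum_A z^{r-r_M(A)}p^{m-\abs A}(1-p)^{\abs A}$. The two sides therefore extend continuously to $p=0$ and $p=1$.
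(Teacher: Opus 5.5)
Your proposal is correct and is essentially the paper's proof: you write $\Delta\beta_k(F_p)=r-r_M(E\setminus F_p)$, sum over the retained set $A=E\setminus F_p$ with weight $p^{m-\abs A}(1-p)^{\abs A}$, and match term by term under the substitution $x-1=pz/(1-p)$, $y-1=(1-p)/p$; your explicit cancellation of the $r_M(A)$ exponents spells out what the paper calls collecting the prefactor. One wording fix: $r-r_M(E\setminus F)$ is the rank deficit of the complement (equivalently the dual nullity of $F$, Corollary~\ref{cor:submodularity}), not ``the nullity of the complement,'' though your computation uses the correct quantity throughout.
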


\begin{proof}
Put $A=E\setminus F_p$.  Corollary~\ref{cor:rank-defect} gives
$\Delta\beta_k(F_p)=r-r_M(A)$, while
\[
 \Prob(E\setminus F_p=A)=p^{m-\abs A}(1-p)^{\abs A}.
\]
Therefore
\[
 \E[z^{\Delta\beta_k(F_p)}]
 =\sum_{A\subseteq E}p^{m-\abs A}(1-p)^{\abs A}z^{r-r_M(A)}.
\]
Substituting
$x-1=pz/(1-p)$ and $y-1=(1-p)/p$ in the rank-generating formula and collecting the
prefactor gives \eqref{eq:tutte}.
\end{proof}

Threshold phenomena for random simplicial complexes are classical
\cite{linialmeshulam2006,meshulamwallach2009}, and Tutte-polynomial and
random-cluster descriptions of Bernoulli cell complexes are known in broader form
\cite{hiraokashirai2016}.  Theorem~\ref{thm:tutte} records the exact
specialization governing cohomology creation in the present top-face-erasure model.

\section{Examples, sharpness, and information loss}\label{sec:examples}

The examples below test the general bounds and identify which code summaries lose
map-coupled localization information.

\subsection{Standard check realizations and change of check basis}

\begin{theorem}[Pair-repetition code and extremal check-basis separation]
\label{thm:check-basis-separation}
Let
\[
 C_n=\set{(x,x):x\in\F^n}\subseteq\F^{2n},
 \qquad H_0=\begin{bmatrix}I_n&I_n\end{bmatrix}.
\]
Then $H_0$ is a full-rank parity-check matrix of the direct sum of $n$
binary repetition codes, and
\begin{equation}\label{eq:standard-pair-repetition}
 \Shat_{q,s}(H_0)=
 \begin{cases}
 \mathsf N_2(q,s),&\mathsf N_2(q,s)\le n,\\
 +\infty,&\mathsf N_2(q,s)>n.
 \end{cases}
\end{equation}
For every $q\ge1$ and $s\ge2$, put $n=\mathsf N_2(q,s)$.  There is
$T\in\operatorname{GL}(n,2)$ such that $H_1=TH_0$ is a row-equivalent
parity-check matrix of the same code and
\begin{equation}\label{eq:check-basis-gap}
 \Shat_{q,s}(H_1)=q,
 \qquad \Shat_{q,s}(H_0)=n.
\end{equation}
Thus the hierarchy is not invariant under change of check basis even when the
kernel code, domain and check lengths, rank, and image code are fixed.  For
$q\ge2$ the separation is genuinely collective.
\end{theorem}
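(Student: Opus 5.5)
The plan is to compute the localization function of each realization exactly. Once that is done, both displayed identities reduce to statements about binary codes of dimension $q$ and distance at least $s$.

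\emph{The standard realization.} For $x=(x_1,x_2)\in\F^{2n}$ we have $H_0x=x_1+x_2$. Hence $H_0$ has rank $n$, $\ker H_0=C_n$, and $\cS_{H_0}=\F^n$, so $C_n$ is a direct sum of $n$ repetition codes of length two. For $y\in\F^n$, the preimage $(y,0)$ gives $\lambda_{H_0}(y)\le\abs y$. Every preimage satisfies $\abs{x_1}+\abs{x_2}\ge\abs{x_1+x_2}=\abs y$. Therefore $\lambda_{H_0}(y)=\abs y$.

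Consequently a feasible $U$ in Definition~\ref{def:map-shat} is exactly a $q$-dimensional subspace of $\F^n$ with minimum distance at least $s$. Puncturing $U$ outside $\supp U$ gives a binary $[\abs{\supp U},q,\ge s]$ code, so $\abs{\supp U}\ge\mathsf N_2(q,s)$. Conversely, an optimal code of length $\mathsf N_2(q,s)\le n$, padded with zeros, is feasible with support $\mathsf N_2(q,s)$. If $\mathsf N_2(q,s)>n$, no feasible $U$ exists. This proves \eqref{eq:standard-pair-repetition}.

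\emph{The twisted realization.} Now take $n=\mathsf N_2(q,s)$ and let $G$ be a binary $[n,q,\ge s]$ code with basis $g_1,\dots,g_q$. Extend this basis to a basis $g_1,\dots,g_n$ of $\F^n$. Let $T^{-1}$ be the invertible matrix with columns $g_1,\dots,g_n$, so that $T^{-1}e_i=g_i$.

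Since $T$ is invertible, $H_1=TH_0$ has the same kernel $C_n$, the same rank, and the same image $\F^n$. Moreover $H_1x=z$ if and only if $H_0x=T^{-1}z$, so
\[
\lambda_{H_1}(z)=\lambda_{H_0}(T^{-1}z)=\abs{T^{-1}z}.
\]
Let $U=\operatorname{span}\{e_1,\dots,e_q\}$. Then $\abs{\supp U}=q$ and $T^{-1}U=G$, so every nonzero $z\in U$ has $\lambda_{H_1}(z)\ge s$. This gives $\Shat_{q,s}(H_1)\le q$.

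The matching lower bound is $\Shat_{q,s}(H_1)\ge d_q(\F^n)=q$, by Proposition~\ref{prop:map-profile-griesmer} or simply because a $q$-dimensional space needs at least $q$ support coordinates. Combining this with the first part, which gives $\Shat_{q,s}(H_0)=n$, yields \eqref{eq:check-basis-gap}. The gap is strict: a code of dimension $q$ and distance $s\ge2$ has length at least $q+s-1$ by Singleton, so $n\ge q+s-1>q$.

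\emph{The collectivity clause.} I expect this to be the main obstacle, because the phrase ``genuinely collective'' needs a precise formalization. I would first formalize it as follows: the $q$-dimensional separation is not explained by one-dimensional data. The same computation gives $\Shat_{1,s}(H_0)=s$, while for $H_1$ each $e_i$ with $i\le q$ has $\lambda_{H_1}(e_i)=\abs{g_i}\ge s$.

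Under this reading the claim would be that $q$ hard syndromes occupying $q$ checks under $H_1$ require $\mathsf N_2(q,s)$ checks under $H_0$. That count exceeds $q\,\Shat_{1,s}(H_1)$. It also cannot be obtained by summing separate one-dimensional witnesses, since those may cancel into an easy combination, as in Remark~\ref{rem:all-combinations}. I would state this precisely by comparing $\Shat_{q,s}(H_0)$ with the value forced by single elements, using the Griesmer growth of $\mathsf N_2(q,s)$ in $q$.
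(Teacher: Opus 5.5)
Your proof is correct and follows the paper's argument essentially step for step. You compute $\lambda_{H_0}(y)=\abs{y}$, puncture to the effective length to obtain \eqref{eq:standard-pair-repetition}, and use an invertible $T$ carrying a shortest $[\mathsf N_2(q,s),q,\ge s]$ code onto $\operatorname{span}\{e_1,\ldots,e_q\}$, so that $\lambda_{H_1}(z)=\abs{T^{-1}z}$; your Singleton remark also supplies the justification of $n>q$ that the paper leaves implicit.

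The paper treats ``genuinely collective'' only as interpretive commentary and proves nothing further about it, so your tentative formalization is optional. Note in any case that $\Shat_{1,s}(H_1)=1$, so your comparison with $q\,\Shat_{1,s}(H_1)=q$ adds nothing beyond $n>q$.
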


\begin{proof}
For $y\in\F^n$, a preimage under $H_0$ is $(a,b)$ with $a+b=y$.
Coordinatewise $\abs a+\abs b\ge\abs y$, with equality at $(y,0)$, so
\begin{equation}\label{eq:H0-lambda}
 \lambda_{H_0}(y)=\abs y.
\end{equation}
A feasible $q$-space is therefore a binary code of distance at least $s$, and
its common support is its effective length.  This proves the lower bound in
\eqref{eq:standard-pair-repetition}; embedding a shortest
$[\mathsf N_2(q,s),q,d\ge s]$ code proves attainability.

Now set $n=\mathsf N_2(q,s)$ and choose a shortest binary $[n,q,d\ge s]$
code $V\le\F^n$.  Since $s\ge2$, $n>q$.  Choose
$T\in\operatorname{GL}(n,2)$ with
$T(V)=E_q:=\operatorname{span}\{e_1,\ldots,e_q\}$.  For $H_1=TH_0$,
\[
 \lambda_{H_1}(y)=\lambda_{H_0}(T^{-1}y)=\abs{T^{-1}y}.
\]
Hence every nonzero element of $E_q$ has localization at least $s$, so
$\Shat_{q,s}(H_1)\le q$.  A $q$-dimensional subspace needs at least $q$
support coordinates, proving equality.  Invertibility of $T$ gives
$\ker H_1=\ker H_0=C_n$.
\end{proof}

\begin{corollary}[Explicit collective separation at localization two]
\label{cor:explicit-q-separation}
For every $q\ge2$, put $n=q+1$ and
\[
 T_q=\begin{bmatrix}I_q&0\\ \mathbf1^{\mathsf T}&1\end{bmatrix}.
\]
Then $H_0=[I_n\ I_n]$ and $H_1=T_qH_0$ define the same pair-repetition code,
but
\[
 \Shat_{q,2}(H_0)=q+1,
 \qquad \Shat_{q,2}(H_1)=q.
\]
\end{corollary}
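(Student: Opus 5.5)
The corollary is the special case $s=2$ of Theorem~\ref{thm:check-basis-separation}, made explicit. First I show $\mathsf N_2(q,2)=q+1$. The even-weight code of length $q+1$ has dimension $q$ and distance $2$, so $\mathsf N_2(q,2)\le q+1$. Conversely, a code of length $q$ and dimension $q$ is all of $\F^q$, which has distance $1$; hence $\mathsf N_2(q,2)\ge q+1$. Then \eqref{eq:standard-pair-repetition} with $n=q+1$ gives $\Shat_{q,2}(H_0)=q+1$ directly.

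For $H_1$, I will not appeal to the abstract existence of $T$ in the theorem. Instead I verify the explicit matrix $T_q$ by rerunning the localization computation from its proof. The matrix $T_q$ is lower unitriangular, so it is invertible and $\ker H_1=\ker H_0=C_n$. Note that $T_q^2=I$ over $\F$: the bottom row of $T_q^2$ is $(\mathbf 1^{\mathsf T}+\mathbf 1^{\mathsf T},\,1)=(0,1)$. Hence $T_q^{-1}=T_q$.

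By \eqref{eq:H0-lambda},
\[
 \lambda_{H_1}(y)=\lambda_{H_0}(T_q^{-1}y)=\abs{T_qy}.
\]
Take $E_q=\operatorname{span}\{e_1,\ldots,e_q\}$, and let $y=(z,0)\ne0$ with $z\in\F^q$. Then $T_qy=(z,\mathbf 1^{\mathsf T}z)$, which is exactly the even-weight extension of $z$. Its weight is therefore even and nonzero, so it is at least $2$. Thus every nonzero element of $E_q$ has $\lambda_{H_1}\ge2$, and $\Shat_{q,2}(H_1)\le\abs{\supp E_q}=q$.

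For the matching lower bound, any $q$-dimensional subspace has common support at least $q$, so $\Shat_{q,2}(H_1)\ge q$. Equivalently, one can cite $\Shat_{q,1}(H_1)=d_q(\F^n)=q$ together with the monotonicity in Proposition~\ref{prop:map-equivalence}.

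None of these steps is a real obstacle; they are routine checks. The only points needing care are the identity $T_q^{-1}=T_q$ over $\F$, so that the formula $\lambda_{H_1}(y)=\abs{T_q^{-1}y}$ is applied with the correct matrix, and the observation that $T_q$ maps $E_q$ onto the even-weight code.
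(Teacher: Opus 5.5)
Your proof is correct and follows the paper's route: you establish $\mathsf N_2(q,2)=q+1$ (your observation that a length-$q$, dimension-$q$ code is all of $\F^q$ is the Singleton bound the paper cites), then use that $T_q^{-1}E_q$ is the even-weight code of length $q+1$, which is exactly the condition the localization computation in Theorem~\ref{thm:check-basis-separation} needs. Rerunning that computation inline with the explicit identity $T_q^{-1}=T_q$ spells out what the paper's ``apply the theorem'' leaves implicit: the theorem only asserts that some $T$ exists, but its proof works for any $T$ that carries a shortest $[n,q,\ge s]$ code onto $E_q$.
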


\begin{proof}
The Singleton bound gives $\mathsf N_2(q,2)\ge q+1$, and the even-weight
code of length $q+1$ attains equality.  Moreover,
$T_q^{-1}E_q=\set{(u,\1^{\mathsf T}u):u\in\F^q}$ is exactly that code.
Apply Theorem~\ref{thm:check-basis-separation}.
\end{proof}

\begin{remark}[Relation to stopping-set dependence]
Stopping distance also depends on the parity-check matrix
\cite{schwartzvardy2006}, but it concerns erased variable coordinates under
iterative decoding.  Here erasure acts on listed checks and feasibility
constrains a whole syndrome subspace.  Both settings show that row-equivalent
check presentations can have different operational robustness.
\end{remark}

\subsection{A single simplex and simplex boundaries}

These examples isolate the sharp coding mechanism.  The final coboundary matrix of a
simplex boundary is a complete-graph incidence matrix, so matching theory and the binary
simplex code make both localization and common support explicit.

\begin{example}[One top simplex]\label{ex:one-simplex}
Let $X$ consist of one $(k+1)$-simplex and all its faces.  Then the top coboundary code
has length and rank one.  Hence
\[
 \eta_1^k(X)=\Shat_{1,1}^k(X)=1,
 \qquad
 \Phi_X^k(1)=1.
\]
In normalized weights, the construction in the sharpness part of
Theorem~\ref{thm:packed} shows that all $k+2$ singleton $k$-face defects can be hidden by
the same top-face erasure.
\end{example}

\begin{theorem}[Boundary of a simplex]\label{thm:simplex-boundary}
Let $X=\partial\Delta^{k+2}$ and put $N=k+3$.  Then
\begin{align}
 \cC_X^{k+1}
 &=\set{x\in\F^N:\sum_{i=1}^N x_i=0},
 \label{eq:even-code}\\
 \Phi_X^k(s)&=2s
 &&\text{for }1\le s\le\floor{N/2},
 \label{eq:simplex-profile}\\
 \eta_q^k(X)&=q+1
 &&\text{for }1\le q\le N-1,
 \label{eq:simplex-eta}\\
 \cW_X^k(a,b)
 &=\sum_{s=0}^{\floor{N/2}}\binom{N}{2s}a^sb^{2s}.
 \label{eq:simplex-W}
\end{align}
Furthermore,
\begin{align}
 \Shat_{1,s}^k(X)&=2s
 &&\text{when }2s\le N,
 \label{eq:simplex-q1}\\
 \Shat_{2,s}^k(X)&=3s
 &&\text{when }3s\le N,
 \label{eq:simplex-q2}\\
 \Shat_{q,1}^k(X)&=q+1
 &&\text{when }1\le q\le N-1.
 \label{eq:simplex-sharp}
\end{align}
The first two quantities are $+\infty$ when the displayed feasibility inequalities
fail.  More generally, if $q\ge2$, $t\ge1$,
\begin{equation}\label{eq:simplex-parameter-family}
 s=t2^{q-2}
 \quad\text{and}\quad
 t(2^q-1)\le N,
\end{equation}
then
\begin{equation}\label{eq:simplex-general-sharp}
 \Shat_{q,s}^k(X)=t(2^q-1).
\end{equation}
Thus the profile-Griesmer bound is attained on infinite parameter families of simplex
boundaries.
\end{theorem}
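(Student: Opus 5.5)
We identify $\delta_k$ with a graph incidence map and then reduce everything to statements about binary even-weight codes. Put $N=k+3$. The top faces of $X=\partial\Delta^{k+2}$ are the complements $[N]\setminus\{i\}$, so we index them by vertices $i\in[N]$. The $k$-faces are the complements $[N]\setminus\{i,j\}$, indexed by unordered pairs. A $k$-face $[N]\setminus\{i,j\}$ lies in exactly the two top faces $i$ and $j$. Hence $\delta_k$ is the vertex--edge incidence matrix of the complete graph $K_N$, with edges as domain coordinates and vertices as check coordinates. Since $K_N$ is connected, the image of its incidence map is the even-weight code, which proves \eqref{eq:even-code}.

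The key step, and the one needing the only real argument, is the localization formula
\[
 \lambda_X^k(c)=\abs c/2 .
\]
A preimage of $c$ is an edge set $J$ whose odd-degree vertex set is exactly $\supp c$, i.e.\ a $T$-join with $T=\supp c$. For the lower bound, every vertex of $T$ has odd degree, hence is met by at least one edge of $J$, and each edge covers at most two vertices; so $\abs J\ge\abs T/2$. For the upper bound, $K_N$ is complete, so a perfect matching on $T$ is available and attains $\abs T/2$.

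Several formulas then follow at once.
\begin{itemize}
\item Every even weight $2s\le N$ occurs, so $\Phi_X^k(s)=2s$ for $1\le s\le\floor{N/2}$, which is \eqref{eq:simplex-profile}.
\item Summing $a^{\abs c/2}b^{\abs c}$ over even words gives \eqref{eq:simplex-W}.
\item The identity $\eta_q^k=q+1$ in \eqref{eq:simplex-eta} and \eqref{eq:simplex-sharp} is the even-weight case of Remark~\ref{rem:dual}, combined with Theorem~\ref{thm:shat-code}.
\item $\Sigma_X^k(s)=\min_{t\ge s}2t=2s$ when $2s\le N$, and $+\infty$ otherwise. Proposition~\ref{prop:map-equivalence} then gives \eqref{eq:simplex-q1}.
\end{itemize}

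For $q\ge2$, Theorem~\ref{thm:shat-code} together with $\lambda=\abs c/2$ translates feasibility into a coding condition: $U$ must be a $q$-dimensional subcode of the even-weight code in which every nonzero word has weight at least $2s$. Proposition~\ref{prop:map-profile-griesmer} with $\Sigma_X^k(s)=2s$ gives the lower bound $\Shat_{q,s}^k\ge\mathsf G_q(2s)$. Two cases of this bound are needed.
\begin{itemize}
\item At $q=2$, it gives $\mathsf G_2(2s)=2s+s=3s$.
\item When $s=t2^{q-2}$, we have $2s=t2^{q-1}$, so
\[
 \mathsf G_q(2s)=\sum_{i=0}^{q-1}t2^{q-1-i}=t(2^q-1).
\]
\end{itemize}

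For matching upper bounds, we use the $t$-fold replicated binary simplex code. It has length $t(2^q-1)$, dimension $q$, and every nonzero weight equal to $t2^{q-1}=2s$. All its weights are even, so after padding with zeros it sits inside the even-weight code of length $N$ whenever $t(2^q-1)\le N$. Its support has size $t(2^q-1)$, which proves \eqref{eq:simplex-general-sharp}.

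The case $q=2$ is the same construction with $t=s$, i.e.\ three blocks of size $s$. It uses the words $(\1,\1,0)$ and $(0,\1,\1)$, and all three nonzero words have weight $2s$.

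For the infeasibility claims, the Griesmer bound already shows that no feasible $U$ fits inside $N$ coordinates when $3s>N$ (for $q=2$) or $2s>N$ (for $q=1$). The corresponding values are therefore $+\infty$.

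I expect the only nonroutine step to be the $T$-join computation of $\lambda$. The rest is Griesmer arithmetic plus the observation that replicated simplex codes are even and attain the bound.
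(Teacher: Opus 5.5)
Your proposal is correct and follows the paper's proof essentially step for step. Both proofs identify $\delta_k$ with the incidence matrix of $K_N$, use the matching ($T$-join) argument to get $\lambda_X^k(c)=\abs c/2$, apply the even-weight generalized weights and the profile--Griesmer lower bound $\mathsf G_q(2s)$, and use replicated simplex codes for the matching upper bound, with infeasibility obtained from the same Griesmer and profile arguments.
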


\begin{proof}[Proof idea]
Label a top facet by its omitted vertex and a $k$-face by its two omitted vertices.  The
last coboundary matrix becomes the vertex--edge incidence matrix of $K_N$, so its image
is the even-weight code.  An even syndrome of size $2s$ has minimum preimage size $s$
by a matching argument.  The generalized weights are therefore $q+1$, while repeated
binary simplex-code columns attain the profile--Griesmer bound in
\eqref{eq:simplex-parameter-family}.  The complete verification, including the
infeasible cases, is in Appendix~\ref{app:simplex-proof}.
\end{proof}

\subsection{Complete graphs and Walsh--Fourier rigidity}

For a cut subspace of a complete graph, every vertex acquires a binary label.  The
objective is then determined by Walsh-character side sizes, separating the label
distribution problem from the geometry of the host graph and exposing the relevant
parity obstructions.

\begin{theorem}[Complete-graph shattering]\label{thm:complete-graph}
Let $1\le s\le\floor{n/2}$ and $q\ge1$.  Then
\begin{equation}\label{eq:complete-graph-griesmer}
 \Shat_{q,s}^0(K_n)\ge
 \mathsf G_q\bigl(s(n-s)\bigr).
\end{equation}
More precisely, every feasible $q$-dimensional cut subspace $U$ admits vertex labels
$x_v\in\F^q$ such that, writing
\[
 w_t=\abs{\{v:t\cdot x_v=1\}}
 \qquad(0\ne t\in\F^q),
\]
one has
\begin{equation}\label{eq:complete-graph-energy}
 \abs{\supp U}
 =2^{1-q}\sum_{0\ne t\in\F^q}w_t(n-w_t)
 \ge 2(1-2^{-q})s(n-s).
\end{equation}
If $2^{q-1}$ divides $s$, then equality holds in
\eqref{eq:complete-graph-griesmer}, and
\begin{equation}\label{eq:complete-graph-exact}
 \Shat_{q,s}^0(K_n)=2(1-2^{-q})s(n-s).
\end{equation}
\end{theorem}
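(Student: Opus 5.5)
The proof specializes Theorem~\ref{thm:fourier-balanced-cut} to $G=K_n$. First I would check the Fourier-balanced multiway-cut formulation directly in this case.

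\emph{Labels.} Choose a basis $c_1,\dots,c_q$ of a feasible cut subspace $U$. Each $c_i=\partial S_i$, with $S_i$ fixed up to complement; set $x_v=(\1_{S_1}(v),\dots,\1_{S_q}(v))$. The cut of character $t$ is $\partial S_t(x)$. Since $K_n$ is connected, $\lambda(\partial S)=\min\{\abs S,n-\abs S\}$ by Proposition~\ref{prop:graph-profile}. Feasibility therefore means $s\le w_t\le n-s$ for every $0\ne t$.

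\emph{Energy identity.} An edge $uv$ lies in $\supp U$ iff $x_u\ne x_v$. When $x_u\ne x_v$, exactly $2^{q-1}$ characters $t$ satisfy $t\cdot(x_u+x_v)=1$. Double counting gives $\abs{\supp U}=2^{1-q}\sum_t\abs{\partial S_t}$, which is \eqref{eq:graph-fourier-energy}. In $K_n$ we have $\abs{\partial S_t}=w_t(n-w_t)$. Because $w\mapsto w(n-w)$ is concave and symmetric about $n/2$, it is minimized on $[s,n-s]$ at the endpoints, so each term is at least $s(n-s)$. With $2^q-1$ characters this gives the right side of \eqref{eq:complete-graph-energy}.

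\emph{Griesmer bound.} \eqref{eq:complete-graph-griesmer} follows from Theorem~\ref{thm:griesmer-shat}/Proposition~\ref{prop:map-profile-griesmer} once $\Sigma(s)=s(n-s)$ for $K_n$. This is immediate, because $\Phi(t)=t(n-t)$ is increasing on $t\le n/2$.

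\emph{Equality when $2^{q-1}\mid s$.} Here $\mathsf G_q(s(n-s))$ is at most $2(1-2^{-q})s(n-s)$ in general, since ceilings only increase it. I would check that the ceilings are exact whenever $2^{q-1}\mid s(n-s)$, which holds when $2^{q-1}\mid s$; then $\mathsf G_q(s(n-s))=\sum_{i<q}s(n-s)2^{-i}=2(1-2^{-q})s(n-s)$. So it remains to build a labeling with every $w_t=s$.

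\emph{Construction.} Write $s=t'2^{q-1}$. Assign each nonzero label of $\F^q$ to exactly $t'$ vertices, using $t'(2^q-1)$ vertices, and give the remaining $n-t'(2^q-1)$ vertices label $0$. Each nonzero character is $1$ on exactly $2^{q-1}$ nonzero labels, so $w_t=t'2^{q-1}=s$. The used labels span, hence affinely span, $\F^q$. Every term then equals $s(n-s)$, so the energy attains the bound.

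\emph{Main obstacle.} The construction needs $n-t'(2^q-1)\ge0$, which is not obvious from $s\le n/2$ alone. We have $t'(2^q-1)=s(2-2^{1-q})<2s\le n$, so it holds. The zero label need not be used, because affine span only requires the nonzero labels. I would double-check the labeling-to-subspace direction of Theorem~\ref{thm:fourier-balanced-cut}: this labeling yields a $q$-dimensional cut space whose nonzero elements are exactly the cuts $\partial S_t$, each with localization $\min(s,n-s)=s$.
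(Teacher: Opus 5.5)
Your proof is correct and follows the paper's argument essentially step for step: you use $\Sigma_{K_n}^0(s)=s(n-s)$ with Theorem~\ref{thm:griesmer-shat} for the Griesmer bound, the basis labeling and character double count for the energy identity, and $s/2^{q-1}$ copies of each nonzero label for the equality case. Two small slips do not affect the argument: ceilings make $\mathsf G_q(d)$ at least (not at most) $2(1-2^{-q})d$, and for $q=1$ the zero label is in fact needed for full affine span, but your own inequality $t'(2^q-1)<2s\le n$ guarantees that it is used (the paper notes that it occurs at least $s/2^{q-1}$ times).
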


\begin{proof}[Proof idea]
The graph profile is $\Phi_{K_n}^0(t)=t(n-t)$, so the universal lower bound immediately
gives \eqref{eq:complete-graph-griesmer}.  A basis of a feasible cut subspace labels the
vertices by $\F^q$; double-counting nonzero characters over unequal label pairs yields
\eqref{eq:complete-graph-energy}.  When $2^{q-1}\mid s$, repeating each nonzero label
equally makes every character side have size $s$, and the energy matches the Griesmer
bound.  Appendix~\ref{app:complete-graph-proof} gives the full construction and
divisibility check.
\end{proof}

\begin{remark}[Walsh multiplicities]\label{rem:walsh}
If $m_x=\abs{\{v:x_v=x\}}$ and
$\widehat m(t)=\sum_x m_x(-1)^{t\cdot x}$, then
$w_t=(n-\widehat m(t))/2$.  Walsh--Parseval gives the equivalent identity
\[
 \abs{\supp U}=\frac12\left(n^2-\sum_{x\in\F^q}m_x^2\right).
\]
Thus complete-graph shattering is an integer energy-minimization problem for a labelled
vertex partition subject to uniform bounds on all nontrivial Walsh coefficients.
\end{remark}

The maximal-localization endpoint has a complete solution in every dimension.  It is a
rigidity statement rather than only an extremal-energy calculation.

\begin{theorem}[Walsh rigidity at maximal localization]
\label{thm:complete-graph-maximal}
Let $n$ be even and $q\ge1$.  Then
\begin{equation}\label{eq:complete-maximal}
 \Shat_{q,n/2}^0(K_n)<\infty
 \quad\Longleftrightarrow\quad
 2^q\mid n.
\end{equation}
When these conditions hold, every feasible labeling uses each vector of $\F^q$ exactly
$n/2^q$ times, and
\begin{equation}\label{eq:complete-maximal-value}
 \Shat_{q,n/2}^0(K_n)
 =\frac{n^2}{2}\left(1-2^{-q}\right).
\end{equation}
\end{theorem}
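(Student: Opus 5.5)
The plan is to work entirely in the labeling formulation of Theorem~\ref{thm:fourier-balanced-cut}, specialized to $K_n$ with $s=n/2$. There a feasible $q$-dimensional cut subspace is a labeling $x:V\to\F^q$ with full affine span. The constraint is $n/2\le\abs{S_t(x)}\le n-n/2$ for every nonzero character $t$, that is, $w_t=n/2$ exactly for all $0\ne t$. In the Walsh notation of Remark~\ref{rem:walsh}, $w_t=(n-\widehat m(t))/2$. So feasibility at maximal localization is equivalent to $\widehat m(t)=0$ for every $0\ne t\in\F^q$, together with the affine-span condition.

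\textbf{Rigidity step.} Since $\widehat m(0)=\sum_x m_x=n$ and all other Walsh coefficients vanish, Walsh inversion gives
\[
 m_x=2^{-q}\sum_{t}\widehat m(t)(-1)^{t\cdot x}=n/2^q
\]
for every $x$. Because $m_x$ is a nonnegative integer and $n\ge1$, this forces $2^q\mid n$, and every label is used exactly $n/2^q$ times. This is the forward implication of \eqref{eq:complete-maximal} and the rigidity claim. Conversely, if $2^q\mid n$, I assign each vector of $\F^q$ to exactly $n/2^q$ vertices. The used labels are all of $\F^q$, so the affine span is full. The uniform distribution has $\widehat m(t)=0$ for $t\ne0$, hence $w_t=n/2$, so the labeling is feasible and $\Shat_{q,n/2}^0(K_n)<\infty$. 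I need $n/2\le\floor{n/2}$, which holds since $n$ is even, so the range hypotheses of the graph formulation apply.

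\textbf{Value.} Every feasible labeling has the same multiplicities, so by Remark~\ref{rem:walsh} the common support is
\[
 \tfrac12\bigl(n^2-2^q(n/2^q)^2\bigr)=\tfrac{n^2}{2}(1-2^{-q}).
\]
Equivalently, by \eqref{eq:complete-graph-energy} it equals $2^{1-q}(2^q-1)(n/2)^2$. Because every feasible $U$ has exactly this value, the minimum equals it, which is \eqref{eq:complete-maximal-value}.

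\textbf{Main obstacle.} The only delicate point is justifying the translation between feasible subspaces and labelings, and that $\lambda_{K_n}^0$ of a cut $\partial S$ is $\min\{\abs S,n-\abs S\}$, so that ``$\ge n/2$'' means exactly $\abs{S_t}=n/2$. Both facts are already supplied by Theorem~\ref{thm:fourier-balanced-cut} and Proposition~\ref{prop:graph-profile}, which I will cite rather than reprove. I also need to confirm the Walsh identity $\abs{\supp U}=\tfrac12(n^2-\sum m_x^2)$. This follows because an edge $uv$ of $K_n$ lies in the support iff $x_u\ne x_v$, so $\abs{E_x}=\binom n2-\sum_x\binom{m_x}2$.
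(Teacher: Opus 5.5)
Your proposal is correct and follows essentially the same route as the paper. Like the paper, you reduce to the labeling formulation, note that maximal localization forces $\widehat m(t)=0$ for every nonzero $t$, and apply Walsh inversion to get uniform multiplicities $n/2^q$. You then take the uniform labeling for the converse and evaluate the Parseval energy identity, and your direct check that $\abs{E_x}=\binom n2-\sum_x\binom{m_x}2$ also verifies the identity in Remark~\ref{rem:walsh}, which the paper only states.
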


\begin{proof}
For a feasible labeling, maximal localization forces
$w_t=n/2$ for every $0\ne t\in\F^q$.  In the notation of
Remark~\ref{rem:walsh}, this is equivalent to
$\widehat m(t)=0$ for all nonzero Walsh frequencies.  Fourier inversion on $\F^q$ then
gives
\[
 m_x=2^{-q}\widehat m(0)=\frac{n}{2^q}
 \qquad(x\in\F^q).
\]
Thus feasibility forces $2^q\mid n$ and a uniform label multiset.  Conversely, the
uniform labeling is feasible whenever $2^q\mid n$.  Substituting
$m_x=n/2^q$ into the Parseval energy identity in Remark~\ref{rem:walsh} gives
\[
 \abs{E_x}
 =\frac12\left(n^2-2^q\left(\frac{n}{2^q}\right)^2\right)
 =\frac{n^2}{2}\left(1-2^{-q}\right),
\]
which is therefore the unique feasible energy value and proves the theorem.
\end{proof}

The divisibility slices in Theorem~\ref{thm:complete-graph} are not the whole story.
For two-dimensional defect spaces the Walsh integer program can be solved for every
parameter, and parity produces a genuine feasibility obstruction.  The case $n=2s$ in
the next theorem is also the specialization $q=2$ of
Theorem~\ref{thm:complete-graph-maximal}.

\begin{theorem}[Exact two-dimensional shattering of $K_n$]
\label{thm:complete-graph-q2}
Let $1\le s\le\floor{n/2}$.  Then
\begin{equation}\label{eq:complete-q2}
 \Shat_{2,s}^0(K_n)=
 \begin{cases}
 +\infty,
   & n=2s\text{ and }s\text{ is odd},\\[1mm]
 \dfrac32s(n-s),
   & s\text{ is even, or }n\le3s\text{ and }n-s\text{ is even},\\[2mm]
 s(n-s)+\dfrac12(s+1)(n-s-1),
   & \text{otherwise}.
 \end{cases}
\end{equation}
\end{theorem}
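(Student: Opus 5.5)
The plan is to reduce everything to the Walsh labeling of Theorem~\ref{thm:fourier-balanced-cut} and Remark~\ref{rem:walsh} with $q=2$. A feasible $2$-dimensional cut subspace gives labels $x_v\in\F^2$ with multiplicities $m_{00},m_{01},m_{10},m_{11}$ summing to $n$. The three nonzero characters $t$ correspond exactly to the three ways of splitting the four label classes into two pairs. For example, $w_{(1,0)}=m_{10}+m_{11}$ and $w_{(1,1)}=m_{10}+m_{01}$. Feasibility says $s\le w_t\le n-s$ for all three $t$, together with full affine span, which here means at least three labels are used. By \eqref{eq:complete-graph-energy} the objective is $\abs{\supp U}=\tfrac12\sum_{t\ne0}w_t(n-w_t)$. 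Since $w\mapsto w(n-w)$ is increasing on $[s,n/2]$ and symmetric about $n/2$, each term is at least $s(n-s)$. This recovers the lower bound $\tfrac32 s(n-s)$, with equality iff every $w_t\in\{s,n-s\}$.

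\textbf{Step 1 (parity invariant).} Observe that $w_{(1,0)}+w_{(0,1)}+w_{(1,1)}=2(n-m_{00})$ is even. The key structural fact is that the three character sides have even total, and conversely any triple $(w_1,w_2,w_3)$ with even sum and the right range lifts to nonnegative integer multiplicities. To lift, solve the linear system: $m_{11}=(w_1+w_2-w_3)/2$, $m_{10}=(w_1+w_3-w_2)/2$, $m_{01}=(w_2+w_3-w_1)/2$, and $m_{00}=n-(w_1+w_2+w_3)/2$. Nonnegativity of these is a triangle-type condition plus $\sum w_t\le 2n$. This converts the problem into optimizing over triples $w_t\in[s,n-s]$ with even sum, subject to these inequalities and to at least three nonzero multiplicities.

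\textbf{Step 2 (case analysis).} The three cases of \eqref{eq:complete-q2} then come from the triple analysis.
\begin{itemize}
\item \emph{Infeasibility.} If $n=2s$, every $w_t=s$, so the sum $3s$ must be even. Hence $s$ odd is infeasible; this also matches Theorem~\ref{thm:complete-graph-maximal}.
\item \emph{Equality case, $s$ even.} Take $(s,s,s)$, which gives $m_{00}=n-3s/2$ and the other three multiplicities equal to $s/2$. This is the divisible case of Theorem~\ref{thm:complete-graph}.
\item \emph{Equality case, $s$ odd with $n\le 3s$ and $n-s$ even.} The triple $(s,s,s)$ has odd sum, so use $(s,s,n-s)$, which has even sum $n+s$. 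It lifts to $m_{11}=(3s-n)/2\ge0$, $m_{10}=m_{01}=(n-s)/2$ and $m_{00}=(n-s)/2$. I still need to check that at least three classes are nonzero.
\item \emph{Remaining cases.} Every triple with entries in $\{s,n-s\}$ either has odd sum or fails to lift. Sum parity rules out $(s,s,s)$ when $s$ is odd; $(s,s,n-s)$ requires $n-s$ even and $n\le3s$; and the triples with two or three entries $n-s$ violate $m_{00}\ge0$ when $n>3s$, or reduce to earlier cases by symmetry $w\leftrightarrow n-w$ in the energy. So some $w_t$ lies in $[s+1,n-s-1]$, and each term is at least $s(n-s)$. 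This gives the lower bound $s(n-s)+\tfrac12(s+1)(n-s-1)$ after halving, since the minimum is two terms equal to $s(n-s)$ and one equal to $(s+1)(n-s-1)$. It is attained by $(s,s,s+1)$, whose sum $3s+1$ is even because $s$ is odd here. It lifts with $m_{11}=(s-1)/2$, $m_{10}=m_{01}=(s+1)/2$ and $m_{00}=n-(3s+1)/2\ge0$, which needs $n\ge(3s+1)/2$ and is true since $n\ge 2s$.
\end{itemize}

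The main obstacle is this bookkeeping in the remaining cases. One must show that no triple with all entries in $\{s,n-s\}$ is both parity-admissible and liftable there. The boundary situations $n=2s+1$ and $n=3s$ need separate care, as does the possibility that a lifted configuration uses only two labels, which violates affine span. I would handle this by first enumerating the triples up to permutation and the involution $w\mapsto n-w$ on individual coordinates. The energy is invariant under that involution, but lifting is not, so I would check the lifting inequalities explicitly for each representative.
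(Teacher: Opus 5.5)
Your proof follows the paper's. It uses the same Walsh multiplicities and inversion formulas, the same endpoint analysis of $\tfrac12\sum_t f(w_t)$ with $f(x)=x(n-x)$, and the same witness $(s,s,s+1)$ in the third case.

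One justification in your ``remaining cases'' bullet is wrong, though the fix is easy. For instance, $(w_{(1,0)},w_{(0,1)},w_{(1,1)})=(s,n-s,n-s)$ lifts to $m_{11}=m_{10}=m_{00}=s/2$ and $m_{01}=(2n-3s)/2$. So the endpoint triples with two entries $n-s$ never violate $m_{00}\ge0$. They are excluded because their sum $2n-s$ is odd when $s$ is odd. Your $m_{00}$ argument is correct only for the triple with three entries $n-s$.

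As you suspected, the single-coordinate flip $w\leftrightarrow n-w$ is not a symmetry of lifting. The correct symmetry translates every label by a fixed $0\ne c\in\F^2$. This flips exactly the two $w_t$ with $t\cdot c=1$, permutes the multiplicities, and leaves the cut subspace unchanged. Hence endpoint triples with two (resp.\ three) entries $n-s$ are realizable exactly when those with zero (resp.\ one) are, which recovers the paper's feasibility table.

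Your other worries do not arise. Since $s\ge1$ forces $w_t\notin\{0,n\}$, no feasible lift can use only two labels, so full affine span is automatic. Also, $n=2s+1$ with $s$ odd falls in the second case, because $n-s=s+1$ is even and $n\le3s$. So in the third case $n\ge2s+2$, and the interior range $[s+1,n-s-1]$ is nonempty. The case $n=3s$ likewise lies in the second case and needs no separate treatment.
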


\begin{proof}[Proof idea]
Let $a,b,c,d$ be the multiplicities of the four labels in $\F^2$, and let $u,v,w$ be
the one-set sizes of the three nonzero characters.  Fourier inversion gives the four
multiplicities from $(u,v,w)$, while the common cut support is
$\tfrac12\sum_{z\in\{u,v,w\}}z(n-z)$.  The concavity of $z(n-z)$ forces an optimum at
endpoint triples whenever parity permits; otherwise exactly one coordinate moves to the
nearest interior value.  The parity classification and the unique exceptional
infeasible case are worked out in Appendix~\ref{app:q2-proof}.
\end{proof}

\subsection{Cycles}

\begin{proposition}[Cycles]\label{prop:cycles}
For the cycle $C_n$, viewed in degree $k=0$,
\[
 \Phi_{C_n}^0(s)=2
 \quad (1\le s\le\floor{n/2}),
 \qquad
 \eta_q^0(C_n)=q+1
 \quad (1\le q\le n-1).
\]
Thus the rank-erasure hierarchy agrees with that of an even-weight code, while the
localization profile is flat rather than linear.
\end{proposition}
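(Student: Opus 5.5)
The plan is to specialize Proposition~\ref{prop:graph-profile} and Theorem~\ref{thm:eta-generalized-weight} to the cycle, treating the localization profile and the rank hierarchy as two separate computations.

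\textbf{The profile $\Phi^0_{C_n}$.} Label the vertices cyclically by $\mathbb Z/n$. By \eqref{eq:graph-isoperimetry}, for $1\le s\le\floor{n/2}$ we have $\Phi^0_{C_n}(s)=\min_{\abs S=s}\abs{\partial S}$.

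\emph{Upper bound.} Take $S$ to be an arc of $s$ consecutive vertices. Its cut consists of exactly the two boundary edges. This needs $1\le s\le n-1$, which holds since $s\le\floor{n/2}<n$ once $n\ge3$.

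\emph{Lower bound.} The set $S$ is nonempty and proper, so its indicator is nonconstant and $\partial S\ne0$. Every cut in the cycle has even size, because the cycle's edge vector is the all-one chain and lies in the orthogonal complement of the cut space. Equivalently, traversing the cycle crosses between $S$ and its complement an even number of times. Hence $\abs{\partial S}\ge2$, and so $\Phi^0_{C_n}(s)=2$.

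\textbf{The rank hierarchy $\eta^0_q$.} By Theorem~\ref{thm:eta-generalized-weight}, $\eta^0_q(C_n)=d_q(\cC^{1}_{C_n})$. I will identify this code using Remark~\ref{rem:dual} with $k=0$. Since $C_n$ is connected, $\rank\delta_0=n-1$. The top homology $H_1(C_n;\F)$ is one-dimensional and spanned by the all-one chain on the $n$ edges. The remark then says the cut code is the even-weight code of length $n$ (there are $m=n$ edges), and gives $\eta^0_q=q+1$ for $1\le q<m$, that is, for $1\le q\le n-1$.

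If I prefer not to rely on the remark, the direct argument is short. The lower bound comes from the fact that a $q$-dimensional subcode of the even-weight code cannot have support of size exactly $q$. Such a support would force the subcode to be all of $\F^q$ restricted to those coordinates, and that space contains odd-weight vectors; so the support is at least $q+1$. The upper bound comes from the even-weight code on any $q+1$ coordinates, which has dimension $q$.

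\textbf{Final comparison.} The comparison sentence in the statement is then immediate. The hierarchy $q+1$ is exactly that of an even-weight code, while $\Phi^0_{C_n}$ is constant equal to $2$ rather than growing linearly in $s$.

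I do not expect any real obstacle. The only point needing care is the evenness of cuts in the lower bound for $\Phi$, together with the degenerate small cases $n\le2$ (these are not simple cycles, and I will assume $n\ge3$).
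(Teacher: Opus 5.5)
Your proposal is correct and follows essentially the same route as the paper. The paper also argues that an arc of $s$ consecutive vertices has a two-edge cut, that every nonempty proper vertex set has cut size at least two (your parity argument via the all-one cycle supplies the justification the paper leaves implicit), and that the cut space is the even-weight code of length $n$, whose generalized weights are $q+1$.
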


\begin{proof}
An interval of $s$ consecutive vertices has a two-edge boundary, and every nonempty
proper vertex set has boundary at least two.  This proves the profile formula by
Proposition~\ref{prop:graph-profile}.  The cut space of a cycle is the even-weight code
on its $n$ edges, giving the generalized weights $q+1$.
\end{proof}

The contrast between Theorem~\ref{thm:simplex-boundary} and
Proposition~\ref{prop:cycles} is the basic reason for retaining both parameters.  Their
rank-erasure weights are identical after matching code length, but their localization
profiles have completely different shapes.

The ordinary covering radius is not a matroid invariant
\cite{britzrutherford2005}.  The next result locates the analogous boundary for the
present hierarchy exactly: generalized-weight behavior survives at $s=1$, but every
positive localization demand beyond that boundary retains map-dependent information.

\begin{theorem}[Tutte boundary and localization-level non-invariance]
\label{thm:tutte-boundary-noninvariance}
Let $X$ be any finite $(k+1)$-complex.
\begin{enumerate}[label=\textup{(\roman*)},leftmargin=*]
\item For every $q$,
\[
 \Shat_{q,1}^k(X)=d_q(\cC_X^{k+1}).
\]
Hence the entire $s=1$ boundary is determined by the support matroid $M_X^k$ and is
recoverable from its Tutte polynomial.
\item For every integer $s\ge2$, there are connected graphs $G_s$ and $H_s$ of the
same order and size, together with an edge-coordinate bijection under which their binary
cut spaces are identical, but
\begin{equation}\label{eq:all-s-noninvariance}
 \Shat_{1,s}^0(G_s)=2,
 \qquad
 \Shat_{1,s}^0(H_s)=3.
\end{equation}
For $s=2$ the pair has five vertices and six edges; for every $s\ge3$ the pair can be
chosen with $2s$ vertices and $2s+1$ edges.  Consequently, for every fixed $s\ge2$, the level $\Shat_{1,s}^0$ is not determined by
the support matroid, its Tutte polynomial, the complete labeled image code, its
generalized-weight hierarchy, or its weight enumerator.
\end{enumerate}
\end{theorem}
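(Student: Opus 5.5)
Part~(i) is a restatement. By Theorem~\ref{thm:shat-code}, $\Shat_{q,1}^k(X)=d_q(\cC_X^{k+1})$. The generalized Hamming weights of a binary code are determined by its support matroid $M_X^k$: $d_q$ is the least $|F|$ with $r(E)-r(E\setminus F)\ge q$, by Corollary~\ref{cor:rank-defect} and Theorem~\ref{thm:eta-generalized-weight}. That least $|F|$ can be read off the rank-generating function, equivalently the Tutte polynomial. The latter step is the classical Greene/Britz specialization, so I would only cite it. For $q>\rank\delta_k$ both sides are $+\infty$.

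For part~(ii) I would use Whitney's principle. For connected graphs the cut space is the cocycle space of the cycle matroid. So any two connected graphs whose cycle matroids agree under an edge bijection have \emph{identical} labeled cut spaces. By Proposition~\ref{prop:map-equivalence} and Proposition~\ref{prop:graph-profile},
\[
 \Shat_{1,s}^0(G)=\Sigma_G^0(s)
 =\min\set{\abs{\partial_GS}:s\le\abs S\le\abs{V}-s}.
\]
This quantity depends on vertex counts on the sides of cuts, which the matroid does not record. I would choose graphs whose cycle matroid is a direct sum of a few circuits and coloops. In the graph, the circuits become cycles, the coloops become bridges, and the pieces may be glued in any tree-like pattern. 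Since the graph has $2s$ vertices, a bridge is admissible only if it splits the graph exactly $s\mid s$. A 2-edge cut must lie inside one circuit, and its sides are an arc of that cycle together with whatever is hung off that arc. The plan is to realize the same abstract matroid twice:
\begin{itemize}
\item in $G_s$, attach the pendant structure so that some 2-edge cut inside a cycle is balanced, while no bridge splits $s\mid s$;
\item in $H_s$, re-hang the same bridges and cycles (a 2-isomorphic rearrangement: vertex identifications and twists) so that every balanced cut needs three edges, while still no bridge splits $s\mid s$.
\end{itemize}
Exhibiting a balanced 3-edge cut in $H_s$ is then easy, for example one edge from each of two cycles plus a bridge, or three edges of a theta. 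The count $2s$ vertices and $2s+1$ edges means cyclomatic number two, so there are exactly two independent cycles. That leaves only theta, figure-eight, and cycles-plus-bridges shapes to consider, and I would enumerate these shapes.

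For $s=2$ I would simply exhibit the five-vertex, six-edge pair and check all cuts by hand. Then $\Sigma=2$ for $G_2$ and $\Sigma=3$ for $H_2$, with no cut of size at most $2$ having both sides of size at least $2$ in $H_2$. The final ``consequently'' clause then follows. Identical labeled cut spaces force identical support matroids, Tutte polynomials, generalized weights and weight enumerators, yet the two $\Shat_{1,s}^0$ values differ.

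The main obstacle is the general-$s$ construction. Pure theta and figure-eight graphs fail, because there the maximal arc length inside a series class is matroid-invariant. The difference must therefore come from where bridge-trees hang relative to the cycles. I would first try two triangles plus a path of $2s-5$ bridges. In $G_s$, one triangle would sit at the end of the path with a long pendant hanging off a triangle vertex, so that cutting two triangle edges isolates a balanced side. In $H_s$, the same bridges would be rearranged to hang from the shared vertex, so that only 3-edge cuts can be balanced. Ruling out a balanced single bridge in both graphs is the delicate parity/count check I expect to consume most of the effort.
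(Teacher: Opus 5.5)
Part (i) is fine and is the paper's argument. For part (ii), your framework is also the paper's mechanism. Re-attaching bridges preserves the labeled cycle space and hence the cut space. $\Shat_{1,s}^0$ is the minimum cut with both sides of size at least $s$. As you correctly note, pure thetas and figure-eights are matroid-rigid, so the difference must come from where the bridges attach.

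\textbf{The gap.} Part (ii) is an existence statement with exact values $2$ and $3$ and prescribed orders, and you never produce the graphs. The $s=2$ pair is only promised, and for $s\ge3$ you give a tentative first try whose verification is deferred. That try does not work as described:
\begin{itemize}
\item Read literally, a single path of $2s-5$ bridges contains a balanced bridge. If it joins the two triangles, its $(s-2)$nd edge cuts off one triangle plus $s-3$ path vertices, exactly $s$ vertices, for every $s\ge3$. If it is a pendant path, some edge cuts off exactly $s$ vertices once $s\ge5$. Either way the value is $1$.
\item If you split the bridges into several pendant paths, your reduction ``a $2$-edge cut must lie inside one circuit'' breaks. $\partial S$ can consist of two bridges, and a union of two path tails can have exactly $s$ vertices.
\item Your suggested $3$-cut ``one edge from each of two cycles plus a bridge'' is never a cut, since a cut meets every cycle in an even number of edges.
\item If in $H_s$ all bridges hang as leaves from the shared bowtie vertex, then for $s\ge4$ every balanced cut has at least $s$ edges. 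Each triangle supplies at most two vertices at cost two, and each leaf costs one. You would get non-invariance but not the asserted value $3$.
\end{itemize}

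\textbf{The missing idea} is to make every bridge a pendant \emph{leaf} on a fixed small core $V_0$. Then $\abs{\partial S}=\abs{\partial_{\mathrm{core}}(S\cap V_0)}+t$, where $t$ is the number of crossing pendant edges. After complementing, only the few core cuts of size at most two need to be examined, and each of them forces $\abs S<s$.

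The paper's construction:
\begin{itemize}
\item Take the core to be $K_4$ minus an edge with a pendant vertex $\ell$ at $a$.
\item For $G_2$, $a$ has degree two in the core; for $H_2$, $a$ has degree three.
\item For $s\ge3$, add $s-2$ leaves at $a$ and $s-3$ leaves at $b$ (respectively $d$).
\end{itemize}

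Your bowtie also works in this leaf form for $s\ge3$:
\begin{itemize}
\item Attach $s-2$ leaves at a non-shared vertex $x_1$ of one triangle for $G_s$, or at the shared vertex $w$ for $H_s$.
\item In both graphs, attach $s-3$ leaves at a fixed non-shared vertex $y_1$ of the other triangle.
\item The set $\{x_1,x_2\}$ with its leaves gives a balanced $2$-cut in $G_s$.
\item The set $\{y_1,y_2\}$ with its leaves, plus one leaf at $w$, gives a balanced $3$-cut in $H_s$.
\end{itemize}
The case $s=2$ still needs a separate five-vertex pair. The bowtie is the only connected realization of its matroid on five vertices, so it cannot supply two different graphs there.
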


\begin{proof}[Proof idea]
Part~(i) is the boundary identity in Theorem~\ref{thm:shat-code}.  Equivalently,
\[
 d_q(\cC_X^{k+1})=
 \min\set{\abs F:r(E)-r(E\setminus F)\ge q},
\]
so the rank-size distribution encoded by the Tutte polynomial determines every $d_q$
\cite{greene1976,britz2007}.

For part~(ii), the case $s=2$ is a five-vertex pair whose two cycle spaces coincide in
six labeled edge coordinates, while their minimum localization-two cuts have sizes two
and three.  For $s\ge3$, attach $s-2$ leaves to a common vertex in both graphs and
$s-3$ further leaves to different core vertices.  The new coordinates are bridges, so
the labeled cycle spaces, and hence the cut spaces, remain identical.  One graph has an
$s$-vertex cut of size two; a short case analysis of the core cuts of size at most two
shows that the other has minimum $s$-balanced cut size three.  The complete construction
and cut classification appear in Appendix~\ref{app:noninvariance-proof}.
\end{proof}

\begin{remark}[Why the second coordinate space is indispensable]
Part~(ii) keeps the domain length, image length, and the entire labeled image code
fixed.  What changes is the quotient metric induced by the incidence map from vertex
cochains to the common cut code.  Thus the map-coupled support/localization problem is
strictly finer than any invariant of the image code alone.
\end{remark}

\subsection{Face numbers and Betti numbers do not control shattering}

\begin{theorem}[Homological blindness already for graphs]
\label{thm:blindness}
Fix an integer $d\ge3$.  Along an infinite sequence of even $n$, there are connected
$d$-regular graphs $G_n$ and $H_n$ with identical face numbers and identical Betti
numbers as one-dimensional complexes, but
\begin{equation}\label{eq:blindness}
 \Phi_{G_n}^0(n/2)=\Omega(n),
 \qquad
 \Phi_{H_n}^0(n/2)\le2.
\end{equation}
Since $n/2$ is the largest possible localization in a connected graph,
$\Shat_{1,n/2}^0=\Sigma^0(n/2)=\Phi^0(n/2)$.  Hence no function of the face vector and
Betti numbers alone can determine, or provide a positive linear lower bound for,
middle-scale shattering.
\end{theorem}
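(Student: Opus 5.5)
The plan is to build both families from one explicit construction on $n$ vertices, with $n$ even, and to check three things: $d$-regularity, matching face and Betti numbers, and the two bounds in \eqref{eq:blindness}. Face numbers are automatic: two connected $d$-regular graphs on $n$ vertices have $f_0=n$ and $f_1=dn/2$. Connectivity gives $\beta_0=1$ and $\beta_1=dn/2-n+1$ for both. So everything reduces to producing a good expander $G_n$ and a connected $d$-regular $H_n$ that has a bisection of size at most $2$.

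For $G_n$ I will take connected $d$-regular graphs with edge expansion bounded below by a constant $c_d>0$, for example random $d$-regular graphs (Bollobás) or explicit Ramanujan/Margulis-type families. For even $n$ they exist for every $d\ge3$, since $dn$ is then even. Cheeger's inequality $\lambda_2\ge\varepsilon$ gives the bound directly: any $S$ with $|S|=n/2$ has $|\partial S|\ge\lambda_2(L_G)\,|S||V\setminus S|/n=\lambda_2 n/4$. By Proposition~\ref{prop:graph-profile} this is exactly $\Phi_{G_n}^0(n/2)=\Omega(n)$. Alternatively the first term of \eqref{eq:graph-spectral-bounds} with $q=1$ gives the same bound.

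For $H_n$ I will take two disjoint copies of a graph $B$ on $n/2$ vertices that is $d$-regular except for one vertex of degree $d-1$, and join the two deficient vertices by an edge. The result is $d$-regular and connected provided $B$ is connected, and the joining edge is a bridge with exactly $n/2$ vertices on each side. By Proposition~\ref{prop:graph-profile}, $\Phi_{H_n}^0(n/2)\le1\le2$. This is where the parity obstacle arises: the degree sum of $B$ is $d\cdot n/2-1$, which must be even, so the construction needs $dn/2$ odd. That forces $d$ odd and $n\equiv2\pmod4$, which is fine along an infinite sequence but fails for even $d$.

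The main step is therefore a version that works for every $d$, and I would use the two-edge slack the statement allows. Take two copies of a connected graph $B'$ on $n/2$ vertices in which two vertices have degree $d-1$ and all others have degree $d$. Join them across the copies by two edges, giving a $d$-regular connected graph with a bisection of size $2$. The degree sum of $B'$ is $dn/2-2$, which is even whenever $dn/2$ is even, e.g.\ $n\equiv0\pmod4$. Such a $B'$ can be built by taking a connected $d$-regular graph on $n/2$ vertices and deleting one edge $uv$ that is not a bridge, for instance an edge on a cycle; a $d$-regular graph with $d\ge2$ always contains a cycle. Connectivity survives the deletion. The final remark, that $\Shat^0_{1,n/2}=\Sigma^0(n/2)=\Phi^0(n/2)$, follows from Proposition~\ref{prop:map-equivalence} because the localization in a connected graph never exceeds $\lfloor n/2\rfloor$. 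Restricting both families to $n\equiv0\pmod4$, with $n/2$ large enough for a connected $d$-regular graph to exist, gives the theorem.
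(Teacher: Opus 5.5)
Your proposal is correct and follows the paper's proof: an edge-expander family for $G_n$ (with the Rayleigh bound $\abs{\partial S}\ge\lambda_2(L_G)\abs S(n-\abs S)/n$ at $\abs S=n/2$), and for $H_n$ two connected $d$-regular halves on $n/2$ vertices, each with one nonbridge edge deleted and the four endpoints cross-connected by two edges, with face and Betti numbers agreeing by edge count and connectivity. The bridge variant is an unnecessary detour, and restricting to $n\equiv0\pmod 4$ simply makes explicit the parity condition that the paper leaves implicit in ``suitable sizes.''
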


\begin{proof}
Choose any infinite family of connected $d$-regular edge expanders $G_n$; for example,
random $d$-regular graphs have a uniform spectral and edge-expansion gap with high
probability along suitable sizes \cite{friedman2008}.  Then every bisection has
$\Omega(n)$ crossing edges, so Proposition~\ref{prop:graph-profile} gives the first
bound in \eqref{eq:blindness}.

For $H_n$, start with two connected $d$-regular graphs on $n/2$ vertices.  In each half,
choose a nonbridge edge, delete the two chosen edges, and cross-connect their four
endpoints with two edges.  The resulting graph is connected and $d$-regular, and the
partition into the two original halves is a bisection with two crossing edges.  This
gives the second bound.

Both graphs have $n$ vertices and $dn/2$ edges.  Since they are connected,
\[
 \beta_0=1,
 \qquad
 \beta_1=\frac{dn}{2}-n+1
\]
after viewing them as one-dimensional complexes.  Hence their face vectors and Betti
numbers agree.
\end{proof}

\section{Consequences, limitations, and open problems}\label{sec:discussion}

The map formulation separates three questions.  First, can a $q$-dimensional
syndrome space have coset-leader distance at least $s$?  The necessary
condition is
\[
 R_q(\ker A)\ge\mathsf N_2(q,s).
\]
Second, how many listed checks must support such a space?  Generalized weights
and the profile--Griesmer bound control that cost.  Third, is the answer
determined by the underlying code?  Theorem~\ref{thm:check-basis-separation}
shows that it is not: row-equivalent parity-check matrices of the same
nondegenerate code can realize the extremal values $q$ and
$\mathsf N_2(q,s)$.  The graph construction in
Theorem~\ref{thm:tutte-boundary-noninvariance} gives a structured incidence
version while keeping the complete labeled image code fixed.

This dependence is deliberate.  When check generators are physical cells,
local constraints, or specified parity equations, replacing them by arbitrary
linear combinations changes the failure model.  Proposition~\ref{prop:map-equivalence}
identifies the natural equivalence that preserves it: relabeling variables and
checks.  Choosing an optimal check basis for a fixed code is therefore a new
design problem suggested by the hierarchy.

The bounded-degree high-dimensional result remains qualitative.  It proves a
nonempty linear-order region but not the sharp range or density, and its
constants can be weak when upper incidence is large.  The weighted and CSS
interpretations likewise concern loss of listed check generators and relative
syndrome sectors, not physical-qubit erasure or absolute post-deletion quantum
distance.

\begin{openproblem}[Sharp macroscopic region]
For bounded-degree cosystolic-expander families, determine the closure of the
pairs $(\alpha,\beta)$ for which
$\Shat_{\floor{\alpha n},\ceil{\beta n}}^k(X)=\Theta(|X(k+1)|)$ and the
optimal asymptotic shattering density.
\end{openproblem}

\begin{openproblem}[Optimal check bases]
For a fixed binary code $C$ and prescribed number of check rows, determine
\[
 \min_H\Shat_{q,s}(H)\quad\text{and}\quad\max_H\Shat_{q,s}(H),
\]
where $H$ ranges over full-rank parity-check matrices of $C$.  Determine the
complexity of these problems and their relation to stopping redundancy and
generalized covering radii.
\end{openproblem}

\begin{openproblem}[Random check erasures]
For a sparse check realization $A_n$ and Bernoulli check-erasure parameter
$p$, determine the threshold for release of a $q_n$-dimensional syndrome
space whose every nonzero element has coset-leader weight at least $s_n$.
Determine when the syndrome-localization enumerator gives a sharp threshold.
\end{openproblem}

The main contribution is a realization-sensitive bridge between covering
radii and support hierarchies.  Generalized Hamming weights describe how
cheaply rank can be released; cofilling shattering asks how cheaply one can
release a subspace with no easy direction.  The pair-repetition theorem shows
that this robustness can change drastically with the check basis, while the
simplicial and graph results supply topological and cut-theoretic
realizations.

\appendix

\section{Proofs for the finite and asymptotic theory}
\label{app:finite-proofs}

This appendix contains the longer counting and limiting arguments used in the finite
and asymptotic bounds.

\subsection{Support--localization avoidance}
\label{app:localized-existence}

\begin{proof}[Full proof of Theorem~\ref{thm:localized-existence}]
Choose a $t$-dimensional subcode $W\le\cC_X^{k+1}$ with
$\abs{\supp W}=d_t(\cC_X^{k+1})$.  The map induced by $\delta_k$ is an isomorphism
\[
 \overline\delta_k:C^k(X)/Z^k(X)\longrightarrow\cC_X^{k+1}.
\]
Hence $V_W=\overline\delta_k^{-1}(W)$ has dimension $t$.  Endow it with the quotient
Hamming norm
\[
 \lambda(\alpha+Z^k(X))=\dist(\alpha,Z^k(X)).
\]
There are at most $\mathsf B(n,s)$ nonzero vectors of $V_W$ with norm below $s$:
choose one minimum-weight representative of each such coset.  The chosen representatives
are distinct cochains of weights $1,\ldots,s-1$.

Choose a $q$-dimensional subspace $U\le V_W$ uniformly at random.  For every fixed
nonzero $v\in V_W$, Gaussian-binomial counting gives
\[
 \Prob(v\in U)=
 \frac{\qbinom{t-1}{q-1}}{\qbinom tq}
 =\frac{2^q-1}{2^t-1}.
\]
The union bound and \eqref{eq:localized-existence-condition} show that some $U$ contains
no nonzero vector of norm below $s$.  Its image $\overline\delta_k(U)$ is feasible in
\eqref{eq:shat-code} and is supported inside $\supp W$, proving the first inequality in
\eqref{eq:localized-existence-conclusion}.

For the second inequality, choose $r$ coordinate positions that form an information set
for the $[m,r]$ code $\cC_X^{k+1}$.  Vanishing on any $r-t$ of those coordinates cuts
out a $t$-dimensional subcode supported on at most $m-r+t$ positions.  Therefore
$d_t(\cC_X^{k+1})\le m-r+t$.
\end{proof}

\subsection{Macroscopic bounded-degree shattering}
\label{app:macroscopic-shattering}

\begin{proof}[Full proof of Theorem~\ref{thm:macroscopic-shattering}]
Let $r_i=\rank\delta_k(X_i)$.  Fix $\tau$ with
\[
 H_2(\beta)+\alpha<\tau<\rho
\]
and set $t_i=\ceil{\tau n_i}$.  Lemma~\ref{lem:rank-density} gives
$r_i\ge\rho n_i$, so $q_i\le t_i\le r_i$ for all sufficiently large $i$.  Since
$s_i-1<\beta n_i$ and $\beta<1/2$, the standard entropy estimate gives
\[
 \mathsf B(n_i,s_i)\le2^{n_i H_2(\beta)}.
\]
Also $2^{t_i}-1\ge2^{t_i-1}$, and hence
\[
 \mathsf B(n_i,s_i)\frac{2^{q_i}-1}{2^{t_i}-1}
 \le2^{n_i(H_2(\beta)+\alpha-\tau)+1}<1
\]
for all sufficiently large $i$.  Theorem~\ref{thm:localized-existence} gives
\[
 \Shat_{q_i,s_i}^k(X_i)
 \le m_i-r_i+t_i
 \le m_i-(\rho-\tau)n_i+1.
\]
Purity implies $m_i\to\infty$, and Lemma~\ref{lem:rank-density} gives
$n_i/m_i\ge(k+2)/D$.  Divide by $m_i$, take the upper limit, and let
$\tau\downarrow H_2(\beta)+\alpha$ to obtain the upper bound.

Every $q_i$-dimensional binary code has support at least $q_i$.  Thus
Theorem~\ref{thm:griesmer-shat} gives
\[
 \Shat_{q_i,s_i}^k(X_i)\ge d_{q_i}(\cC_{X_i}^{k+1})\ge q_i.
\]
Dividing by $m_i$, using $q_i/n_i\to\alpha$, and applying
$n_i/m_i\ge(k+2)/D$ proves the first lower bound.  Under the expansion hypothesis,
Theorem~\ref{thm:griesmer-shat} also gives
\[
 \Shat_{q_i,s_i}^k(X_i)
 \ge\mathsf G_{q_i}(\ceil{\varepsilon s_i})
 \ge\ceil{\varepsilon s_i}.
\]
Combining the two lower estimates and using $s_i/n_i\to\beta$ proves
\eqref{eq:macroscopic-expanded-lower}.
\end{proof}

\section{Proof of the graph-labeling theorem}
\label{app:graph-proofs}

This appendix proves the labeling dictionary, Fourier energy identity, and spectral
bounds used in Section~\ref{sec:structural}.

\subsection{Fourier-balanced multiway cuts}
\label{app:fourier-proof}

\begin{proof}[Full proof of Theorem~\ref{thm:fourier-balanced-cut}]
Choose cochains $\alpha_1,\ldots,\alpha_q$ whose coboundaries form a basis of a cut
subspace $U$, and label each vertex by
\[
 x_v=(\alpha_1(v),\ldots,\alpha_q(v)).
\]
The linear map
\[
 t\longmapsto\delta_0\left(\sum_i t_i\alpha_i\right)
\]
is injective precisely when no nonzero affine functional is constant on all used labels,
equivalently when $\operatorname{affspan}x(V)=\F^q$.  For $0\ne t$, the associated
cut is $\partial_G S_t(x)$ and its localization is
$\min\{\abs{S_t(x)},n-\abs{S_t(x)}\}$.  The common support consists exactly of
edges whose endpoint labels differ.  This proves the minimization formula.

If $x_u\ne x_v$, exactly $2^{q-1}$ vectors $t\in\F^q$ satisfy
$t\cdot(x_u+x_v)=1$; the zero vector is not among them.  If $x_u=x_v$, no character
separates the endpoints.  Double-counting a cut together with one of its edges proves
\eqref{eq:graph-fourier-energy}.

For every $S\subseteq V$, the Rayleigh principle applied to
$\1_S-(\abs S/n)\1$ gives
\[
 \abs{\partial_G S}\ge
 \lambda_2(L_G)\frac{\abs S(n-\abs S)}{n}.
\]
Insert this bound into \eqref{eq:graph-fourier-energy} and use
$s\le\abs{S_t}\le n-s$ to obtain the first term of
\eqref{eq:graph-spectral-bounds}.

For the second term, let $V_1,\ldots,V_r$ be the nonempty label classes.  Full affine
span requires $r\ge q+1$.  The vectors
$z_j=\1_{V_j}/\sqrt{\abs{V_j}}$ are orthonormal, so the Ky Fan variational principle
gives
\[
 \sum_{j=1}^r z_j^{\mathsf T}L_G z_j
 \ge\sum_{i=1}^r\lambda_i(L_G).
\]
The left side equals
$\sum_j\abs{\partial_G V_j}/\abs{V_j}$, and each edge of $E_x$ contributes at most two
to this sum.  Therefore
$2\abs{E_x}\ge\sum_{i=2}^r\lambda_i(L_G)\ge\sum_{i=2}^{q+1}\lambda_i(L_G)$.
\end{proof}

\section{Deferred proofs for exact models and separation}
\label{app:exact-proofs}

This appendix gives the calculations for simplex boundaries, complete graphs, and the
graph families used in the non-invariance theorem.

\subsection{Simplex boundaries}
\label{app:simplex-proof}

\begin{proof}[Full proof of Theorem~\ref{thm:simplex-boundary}]
Label the $N$ top facets by the vertex omitted from $\Delta^{k+2}$ and label each
$k$-face by the pair of omitted vertices.  The matrix of $\delta_k$ is then the
vertex-edge incidence matrix of the complete graph $K_N$.  Its image is the even-weight
code, proving \eqref{eq:even-code}.

A syndrome $T\subseteq[N]$ of even cardinality $2s$ is the odd-degree set of an edge set
in $K_N$.  The minimum number of edges with odd-degree set $T$ is $s$: a matching on
$T$ attains $s$, while every edge accounts for at most two odd vertices.  Hence
$\lambda_X^k(\1_T)=s$, proving \eqref{eq:simplex-profile} and
\eqref{eq:simplex-W}.  The generalized weights of the length-$N$ even-weight code are
$q+1$, proving \eqref{eq:simplex-eta}.  In particular, $h^k(X)=2$.

Equation \eqref{eq:simplex-q1} follows from the first identity below, together with
\eqref{eq:simplex-profile}; equation \eqref{eq:simplex-sharp} follows from the second:
\[
 \Shat_{1,s}^k(X)=\Sigma_X^k(s),
 \qquad
 \Shat_{q,1}^k(X)=d_q(\cC_X^{k+1}).
\]

For the general construction, take the binary $q$-dimensional simplex code whose
generator matrix has each nonzero vector of $\F^q$ as one column, and repeat every
column $t$ times.  It has length $t(2^q-1)$, and every nonzero word has weight
$t2^{q-1}=2s$.  Since this weight is even, the code embeds into the length-$N$
even-weight code whenever \eqref{eq:simplex-parameter-family} holds.  This proves the
upper bound in \eqref{eq:simplex-general-sharp}.  The profile-Griesmer lower bound is
\[
 \mathsf G_q(2s)
 =\sum_{i=0}^{q-1}t2^{q-1-i}
 =t(2^q-1),
\]
so equality holds.  Taking $q=2$ gives \eqref{eq:simplex-q2}.  If $3s>N$, the same
Griesmer lower bound rules out a feasible two-dimensional subcode; if $2s>N$, the
profile contains no word with localization at least $s$.
\end{proof}

\subsection{Complete-graph shattering}
\label{app:complete-graph-proof}

\begin{proof}[Full proof of Theorem~\ref{thm:complete-graph}]
Proposition~\ref{prop:graph-profile} gives
$\Phi_{K_n}^0(t)=t(n-t)$ for $1\le t\le\floor{n/2}$.  This profile is increasing, so
$\Sigma_{K_n}^0(s)=s(n-s)$.  Theorem~\ref{thm:griesmer-shat} proves
\eqref{eq:complete-graph-griesmer}.

Choose a basis $\delta_0\alpha_1,\ldots,\delta_0\alpha_q$ of a feasible subspace
$U$.  Label each vertex by
\[
 x_v=(\alpha_1(v),\ldots,\alpha_q(v))\in\F^q.
\]
For $0\ne t\in\F^q$, the corresponding codeword is the cut of
$\{v:t\cdot x_v=1\}$, so it has $w_t(n-w_t)$ edges.  An edge $uv$ belongs to
$\supp U$ exactly when $x_u\ne x_v$.  In that case, precisely $2^{q-1}$ vectors $t$
satisfy $t\cdot(x_u+x_v)=1$.  Double-counting pairs consisting of a nonzero codeword
and an edge in its support gives \eqref{eq:complete-graph-energy}.  Feasibility implies
\[
 s\le w_t\le n-s,
 \qquad
 w_t(n-w_t)\ge s(n-s).
\]

Suppose $s=a2^{q-1}$.  Assign $a$ vertices each of the $2^q-1$ nonzero labels in
$\F^q$ and assign the remaining $n-a(2^q-1)$ vertices label zero.  Since $2s\le n$,
the zero label occurs at least $a$ times.  Every nonzero functional $t$ takes value one
on exactly $2^{q-1}$ nonzero labels, so every nonzero codeword has a side of size $s$.
The labels span $\F^q$, giving a $q$-dimensional subspace, and
\eqref{eq:complete-graph-energy} is an equality.  Finally, $s(n-s)$ is divisible by
$2^{q-1}$, so
\[
 \mathsf G_q\bigl(s(n-s)\bigr)
 =s(n-s)\sum_{i=0}^{q-1}2^{-i}
 =2(1-2^{-q})s(n-s).
\]
This matches the construction and proves \eqref{eq:complete-graph-exact}.
\end{proof}

\subsection{The exact two-dimensional formula}
\label{app:q2-proof}

\begin{proof}[Full proof of Theorem~\ref{thm:complete-graph-q2}]
Use labels $00,10,01,11$ with multiplicities $a,b,c,d$.  The three nonzero characters
have one-set sizes
\[
 u=b+d,\qquad v=c+d,\qquad w=b+c,
\]
and feasibility is exactly $s\le u,v,w\le n-s$.  Conversely,
\begin{equation}\label{eq:q2-inversion}
 \begin{aligned}
 2a&=2n-u-v-w,&\quad 2b&=u+w-v,\\
 2c&=v+w-u,& 2d&=u+v-w.
 \end{aligned}
\end{equation}
The energy identity gives
\begin{equation}\label{eq:q2-energy}
 \abs{\supp U}
 =\frac12\bigl(u(n-u)+v(n-v)+w(n-w)\bigr).
\end{equation}

The function $f(x)=x(n-x)$ is minimized on the integer interval $[s,n-s]$ at its
endpoints.  Hence the smallest possible value of \eqref{eq:q2-energy} is
$3s(n-s)/2$ if an endpoint triple $(u,v,w)\in\{s,n-s\}^3$ is realizable.  Up to
permutation, the possibilities are determined by the number $j$ of coordinates equal
to $n-s$.  Equation~\eqref{eq:q2-inversion} gives
\[
\begin{array}{c|c|c}
 j & \text{one resulting multiplicity pattern} & \text{feasibility condition}\\
\hline
0&(n-3s/2,s/2,s/2,s/2)&s\text{ even},\\
1&((n-s)/2,(n-s)/2,(3s-n)/2,(n-s)/2)
  &n\le3s,\ n-s\text{ even},\\
2&(s/2,s/2,s/2,(2n-3s)/2)&s\text{ even},\\
3&((3s-n)/2,(n-s)/2,(n-s)/2,(n-s)/2)
  &n\le3s,\ n-s\text{ even}.
\end{array}
\]
Here $n\ge2s$ makes every unlisted nonnegativity condition automatic.  Thus an endpoint
triple is realizable if and only if
\[
 s\text{ is even},
 \qquad\text{or}\qquad
 n\le3s\text{ and }n-s\text{ is even}.
\]

If $n=2s$ and $s$ is odd, the interval contains only the endpoint $s$, which is not
realizable, so no feasible two-space exists.  In every remaining case where no endpoint
triple is realizable, one has $s$ odd and $n>2s$.  At least one of $u,v,w$ is then
interior, so monotonicity of $f$ up to $n/2$ yields
\[
 \abs{\supp U}\ge\frac12\bigl(2f(s)+f(s+1)\bigr).
\]
Equality is attained by $(u,v,w)=(s,s,s+1)$: equation
\eqref{eq:q2-inversion} gives
\[
 (a,b,c,d)=
 \left(\frac{2n-3s-1}{2},\frac{s+1}{2},
       \frac{s+1}{2},\frac{s-1}{2}\right),
\]
which consists of nonnegative integers because $s$ is odd and $n>2s$.  Substitution in
\eqref{eq:q2-energy} gives the third line of \eqref{eq:complete-q2}.
\end{proof}

\subsection{Tutte boundary and localization-level non-invariance}
\label{app:noninvariance-proof}

\begin{proof}[Full proof of Theorem~\ref{thm:tutte-boundary-noninvariance}]
The identity $\Shat_{q,1}^k(X)=d_q(\cC_X^{k+1})$ is
\eqref{eq:shat-boundaries}.  Let $E=X(k+1)$ and let $r$ be the rank function of
$M_X^k$.  The shortening formula gives
\[
 d_q(\cC_X^{k+1})=
 \min\set{\abs F:F\subseteq E,\ r(E)-r(E\setminus F)\ge q}.
\]
The rank-generating expansion of the Tutte polynomial records the number of subsets
$A\subseteq E$ realizing each pair $(r(E)-r(A),\abs A-r(A))$ and hence each possible
value of $\abs A$.  It therefore determines
the displayed minima for every $q$.  This proves part~(i).

For part~(ii), Proposition~\ref{prop:graph-profile} and
\eqref{eq:shat-boundaries} identify $\Shat_{1,2}^0$ on the base pair with the
minimum cut having at least two vertices on each side.  For $s\ge3$, the enlarged
graphs have $2s$ vertices, so $\Shat_{1,s}^0$ is their minimum bisection size.  Begin
with the common vertex set
$V_0=\{a,b,c,d,\ell\}$ and identify six edge coordinates by
\begin{equation}\label{eq:base-graph-pair}
\begin{array}{c|cccccc}
 &e_1&e_2&e_3&e_4&e_5&e_6\\ \hline
G_2&ab&a\ell&ad&bc&bd&cd\\
H_2&ab&a\ell&bd&ac&ad&cd.
\end{array}
\end{equation}
Both graphs are connected, and in the displayed coordinates both cycle spaces equal
\[
 \operatorname{span}_{\F}
 \{e_1+e_3+e_5,\ e_4+e_5+e_6\}.
\]
The cut spaces, being the orthogonal complements of the cycle spaces, are therefore
identical.  In $G_2$, the set $\{a,\ell\}$ has boundary two, while every one-edge cut
isolates the single vertex $\ell$; hence $\Shat_{1,2}^0(G_2)=2$.  In $H_2$, the degree
sequence in the order $(a,b,c,d,\ell)$ is $(4,2,2,3,1)$.  For every two-vertex set
$\{u,v\}$,
\[
 \abs{\partial_{H_2}\{u,v\}}
 =\deg(u)+\deg(v)-2\mathbf 1_{uv\in E(H_2)}\ge3,
\]
and equality holds for $\{a,\ell\}$.  Thus
$\Shat_{1,2}^0(H_2)=3$.

Now fix $s\ge3$.  Form $G_s$ from $G_2$ by adjoining $s-2$ pendant leaves at $a$ and
$s-3$ pendant leaves at $b$.  Form $H_s$ from $H_2$ by adjoining $s-2$ pendant leaves
at $a$ and $s-3$ pendant leaves at $d$.  Match the new pendant-edge coordinates in any
bijection.  Both graphs have $2s$ vertices and $2s+1$ edges.  Every new coordinate is a
bridge and is therefore zero on the cycle space.  Since the base cycle spaces agree,
the enlarged labeled cycle spaces, and consequently the labeled cut spaces, agree.

In $G_s$, let $S_G$ consist of $a$, $\ell$, and all $s-2$ new leaves at $a$.  Then
$\abs{S_G}=s$ and only the base edges $ab$ and $ad$ leave $S_G$, so
$\abs{\partial_{G_s}S_G}=2$.  Every bridge of $G_s$ isolates a single pendant vertex
(including $a\ell$, which isolates $\ell$), so no one-edge cut has both sides of size at
least $s$.  Hence $\Shat_{1,s}^0(G_s)=2$.

The analogous set in $H_s$ has boundary $\{ab,ac,ad\}$, proving
$\Shat_{1,s}^0(H_s)\le3$.  It remains to exclude an $s$-vertex cut of size at most two.
Let $T=S\cap V_0$ for such a putative set $S$, and let $t$ be the number of pendant
edges crossing the cut.  Then
\[
 \abs{\partial_{H_s}S}=\abs{\partial_{H_2}T}+t.
\]
The core $H_2$ is connected.  Its cuts of size at most two, up to complementation, are
exactly
\[
 \varnothing\quad(0),
 \qquad \{\ell\}\quad(1),
 \qquad \{b\},\{c\}\quad(2),
\]
where parentheses give the core cut size.  Replacing $S$ by its complement if necessary,
assume that $T$ is one of the displayed representatives.  If the core cut is zero, then
$T=\varnothing$, and obtaining $\abs S=s$ requires at least $s$ pendant edges to cross,
contrary to $\abs{\partial_{H_s}S}\le2$.  If the core cut is one, then
$T=\{\ell\}$ and $t\le1$; since neither pendant cluster is attached to $\ell$, the set
$S$ contains at most $\ell$ and one pendant vertex, so $\abs S\le2<s$.  If the core cut
is two, then $T=\{b\}$ or $\{c\}$ and $t=0$; neither vertex carries pendant leaves in
$H_s$, so $S=T$ and $\abs S=1<s$.  Each case contradicts $\abs S=s$.  Therefore
$\Shat_{1,s}^0(H_s)=3$, completing part~(ii).
\end{proof}

\section*{Acknowledgements}
Generative-AI tools assisted with literature-search organization,
and preliminary checks.  The author reviewed the mathematical claims and citations and
accepts full responsibility for the final manuscript.

\section*{Statements and declarations}
\paragraph{Author contribution.}
Joshua Steier is the sole author.

\paragraph{Data availability.}
No external datasets were generated or analyzed.  The results are theoretical.

\end{document}